\documentclass[american,twosided,12pt]{article}
\usepackage{etex}
\usepackage{setspace}
\onehalfspacing
\usepackage{tikz}
\usepackage{epsfig}
\usepackage{amsmath}
\usepackage{amsfonts}
\usepackage{hyperref}
\usepackage{amssymb}
\usepackage{pgffor}
\usepackage[margin=1.5in]{geometry}
\usepackage{babel}
\usepackage{subfig}
\usepackage{booktabs}
\usetikzlibrary {positioning}
\usepackage{rotating}
\usetikzlibrary{matrix}
\usepackage{diagbox}
\usepackage{tabularx}
\usetikzlibrary{decorations.pathmorphing}
\usepackage[authoryear]{natbib}
\usepackage{mathrsfs}
\usepackage{comment}
\usepackage{enumitem}
\usepackage{pgfplots}
\usepackage{multirow}

\usepackage{graphicx,kantlipsum,setspace}
\usepackage{caption}
\captionsetup[table]{font={stretch=0.8}}
\captionsetup[figure]{font={stretch=0.8}}

\usepackage[labelfont=bf]{caption}

\newenvironment{absolutelynopagebreak}
  {\par\nobreak\vfil\penalty0\vfilneg
   \vtop\bgroup}
  {\par\xdef\tpd{\the\prevdepth}\egroup
   \prevdepth=\tpd}

\newcommand*{\field}[1]{\mathbb{#1}}

\definecolor{mondrianRed}{rgb}{0.8666,0.1334,0}
\definecolor{mondrianBlue}{rgb}{0.133333, 0.313725, 0.584314}
\definecolor{mondrianYellow}{rgb}{0.980392, 0.788235, 0.00392157}
\definecolor{mondrianGrey}{rgb}{0.756863, 0.784314, 0.788235}
\definecolor{mondrianOrange}{rgb}{0.923529, 0.396078, 0.00196078}
\definecolor{mondrianGreen}{rgb}{0.556863, 0.54902, 0.294118}
\definecolor{mondrianCyan}{rgb}{0.345098, 0.431373, 0.439216}
\definecolor{mondrianWhite}{rgb}{0.976471, 0.976471, 0.976471}
\definecolor{mondrianPurple}{rgb}{0.107843, 0.158824, 0.292157}

\definecolor {processblue}{cmyk}{0.96,0,0,0}
\usetikzlibrary{arrows}
\input{glyphtounicode}
\pdfgentounicode=1
\pdfoptionpdfminorversion=6
\DeclareMathAlphabet{\mathpzc}{OT1}{pzc}{m}{it}

\makeatletter
\renewcommand{\section}{\@startsection{section}{1}{0mm}{-1.5\baselineskip}{0.8\baselineskip}{\normalfont\large\centering}}
\renewcommand{\subsection}{\@startsection{subsection}{2}{0mm}{-0.1\baselineskip}{0.5\baselineskip}{\normalfont\bf\flushleft}}
\renewcommand{\@seccntformat}[1]{\csname the#1\endcsname \hspace{+0mm}\large{.}\hspace{+1.9mm}}
\renewcommand{\@seccntformat}[2]{\csname the#1\endcsname \hspace{+0mm}\large{.}\hspace{+1.9mm}}
\makeatother

\newtheorem{corollary}{Corollary}

\newtheorem{definition}{Definition}

\newtheorem{lemma}{Lemma}

\newtheorem{proposition}{Proposition}

\newtheorem{result}{Result}

\newenvironment{support}[1][\textit{Support}]{\textbf{#1:} }{\,\,\,\,\rule{0.5em}{0.5em}}


\setlength{\textwidth}{17cm}
\setlength{\textheight}{23.3cm}
\addtolength{\oddsidemargin}{-15mm}
\addtolength{\evensidemargin}{-15mm}
\addtolength{\topmargin}{-15mm}
\renewcommand{\theequation}{\arabic{equation}}

\setlength{\parskip}{1mm}
\newlength{\extraspace}
\setlength{\extraspace}{.5mm}
\newlength{\extraspaces}
\setlength{\extraspaces}{2.5mm}
\newcounter{dummy}

\newcommand{\baa}{
\addtocounter{equation}{1} \setcounter{dummy}{\value{equation}}
\setcounter{equation}{0}
\renewcommand{\theequation}{\arabic{dummy}\alph{equation}}
\begin{eqnarray}
\addtolength{\abovedisplayskip}{\extraspaces}
\addtolength{\belowdisplayskip}{\extraspaces}
\addtolength{\abovedisplayshortskip}{\extraspace}
\addtolength{\belowdisplayshortskip}{\extraspace}}

\newcommand{\eaa}{
\end{eqnarray}
\setcounter{equation}{\value{dummy}}
\renewcommand{\theequation}{\arabic{equation}}}

\newcommand{\be}{\begin{equation}
\addtolength{\abovedisplayskip}{\extraspaces}
\addtolength{\belowdisplayskip}{\extraspaces}
\addtolength{\abovedisplayshortskip}{\extraspace}
\addtolength{\belowdisplayshortskip}{\extraspace}}
\newcommand{\ee}{\end{equation}}

\newcommand{\ba}{\begin{eqnarray}
\addtolength{\abovedisplayskip}{\extraspaces}
\addtolength{\belowdisplayskip}{\extraspaces}
\addtolength{\abovedisplayshortskip}{\extraspace}
\addtolength{\belowdisplayshortskip}{\extraspace}}
\newcommand{\ea}{\end{eqnarray}}

\newcommand{\bd}{\begin{displaymath}
\addtolength{\abovedisplayskip}{\extraspaces}
\addtolength{\belowdisplayskip}{\extraspaces}
\addtolength{\abovedisplayshortskip}{\extraspace}
\addtolength{\belowdisplayshortskip}{\extraspace}}
\newcommand{\ed}{\end{displaymath}}

\newcommand{\deel}[2]{{\textstyle{#1 \over #2}}}
\newcommand{\hf}{{\textstyle{1\over 2}}}

\def\inbar{\,\vrule height1.5ex width.4pt depth0pt}
\font\rms=cmr12 at 12pt
\def\ce{\relax\ifmmode\mathchoice
{\hbox{$\inbar\kern-.3em{\rm C}$}} {\hbox{$\inbar\kern-.3em{\rm
C}$}} {\lower.9pt\hbox{\rms $\inbar\kern-.3em{\rm C}$}}
{\lower1.2pt\hbox{\rms $\inbar\kern-.3em{\rm C}$}}
\else{$\inbar\kern-.3em{\rm C}$}\fi}
\font\cmss=cmss12 \font\cmsss=cmss12 at 12pt
\def\ze{\relax\ifmmode\mathchoice
{\hbox{\cmss Z\kern-.4em Z}}{\hbox{\cmss Z\kern-.4em Z}}
{\lower.9pt\hbox{\cmsss Z\kern-.4em Z}} {\lower1.2pt\hbox{\cmsss
Z\kern-.4em Z}}\else{\cmss Z\kern-.4em Z}\fi}



\def\Re{{\mathbb{R}}}

\newcommand{\br}{B\hspace*{-0.6pt}R}

\newcommand{\refsection}[1]{
\vspace{1mm} \pagebreak[3] \addtocounter{section}{1}
\begin{center}
{\large #1}
\end{center}
\nopagebreak
\medskip
\nopagebreak}

\def\thebibliography#1{\refsection{\bf References}
\vspace*{-4mm}\list
 {\relax}{\itemsep=1pt \parsep=0pt
 \usecounter{enumiv}\leftmargin=3em\itemindent=-\leftmargin}%
 \def\newblock{\hskip .11em plus .33em minus .07em}
 \sloppy\clubpenalty4000\widowpenalty4000
 \sfcode`\.=1000\relax}

\newcommand{\startappendix}{
\renewcommand{\thesection}{\Alph{section}}
\setcounter{section}{0}
\renewcommand{\theequation}{\thesection.\arabic{equation}}}

\newcommand{\newappendix}[1]{
\vspace{3mm}
\pagebreak[3]
\addtocounter{section}{1}
\setcounter{equation}{0}
\setcounter{subsection}{0}
\begin{center}
{\large Appendix \thesection. #1}
\vspace{0mm}
\end{center}
\nopagebreak
\vspace{-1mm}
\nopagebreak}

\newcommand{\q}[1]{``#1''}

\begin{document}
\newcolumntype{L}[1]{>{\raggedright\arraybackslash}p{#1}}
\newcolumntype{C}[1]{>{\centering\arraybackslash}p{#1}}
\newcolumntype{R}[1]{>{\raggedleft\arraybackslash}p{#1}}

\setcounter{page}{0}
\thispagestyle{empty}

\begin{center}
{\Huge\bf $\mbox{\bf\em M}$ Equilibrium}\\[3mm]
{\Large\sc A Theory of Beliefs and Choices in Games}\\[10mm]
{\large Jacob K. Goeree and Philippos Louis}\footnote{Goeree: AGORA Center for Market Design, UNSW. Louis: Department of Economics, University of Cyprus and AGORA Fellow. We gratefully acknowledge funding from the Australian Research Council (DP190103888). We thank the Co-Editor Jeffrey Ely and three anonymous reviewers for their extremely constructive and insightful suggestions and colleagues as well as seminar participants for their helpful feedback. Special thanks to Vai-Lam Mui for stressing the importance of alternatives to fixed-point models and to Lucas Pahl who taught us much of what we know about the geometry of optimization. The ``M'' terminology was inspired by \url{https://en.wikipedia.org/wiki/M-theory}.}\\[5mm]
\today\\[10mm]
{\bf Abstract}
\end{center}
\addtolength{\baselineskip}{-1.2mm}
\vspace*{-3mm}

\noindent We introduce a set-valued solution concept, $M$ equilibrium, to capture empirical regularities from over half a century of game-theory experiments. We show $M$ equilibrium serves as a meta theory for various models that hitherto were considered unrelated. $M$ equilibrium is empirically robust and, despite being set-valued, falsifiable. We report results from a series of experiments comparing $M$ equilibrium to leading behavioral-game-theory models and demonstrate its virtues in predicting observed choices and stated beliefs. Data from experimental games with a unique pure-strategy Nash equilibrium and multiple $M$ equilibria exhibit coordination problems that could not be anticipated through the lens of existing models.

\vfill
\noindent {\bf Keywords}: {\em M equilibrium, $\mu$ equilibrium, profect Nash equilibrium, noisy decisionmaking, imperfect beliefs}

\addtocounter{footnote}{-1}

\newpage

\addtolength{\baselineskip}{0.8mm}

\section{Introduction}

The \cite{Nash1950} equilibrium has been the centerpiece of game theory since its introduction more than seventy years ago. Its simple formulation and broad applicability have helped game theory gain the central role originally envisioned by \cite{vonNeumannMoregenstern1944}. The Nash equilibrium is one of the most commonly used constructs in economics, with applications in other social sciences, evolutionary biology, computer science, and philosophy.

However, the Nash equilibrium fares poorly when confronted with data because of its strong assumptions and stark predictions. Players are subsumed to have \textit{perfect foresight}, i.e. they anticipate others' choices, and \textit{optimize perfectly}, i.e. they best respond to their beliefs. Randomization occurs only when players are indifferent. As a result, Nash equilibrium implies a high-degree of \textit{homogeneity}, e.g. in symmetric games with a unique equilibrium, players' choices are identical (and if the equilibrium is in pure strategies, any deviation causes a zero-likelihood problem). Even in games where players' choices are predicted to differ, their beliefs are identical. A survey of over half a century of experimental game theory results reveals that neither the assumptions underlying Nash equilibrium nor its predictions are validated empirically.

\subsection{Empirical Motivation for $M$ equilibrium}
\label{empMotivation}

There is overwhelming evidence from the laboratory that subjects do \textit{not} play Nash equilibrium. Their choices are often far away from Nash predictions.\footnote{See e.g. \citet{lieberman1960, lieberman1962, brayer1964, messick1967, oneill1987, brown1990, rapoport1992, stahl1994, stahl1995, nagel1995, mckelvey1992, McKelveyPalfrey1995, goeree2001,crawford2013,GoereeHoltPalfrey2016,dhami2016,eyster2019}. This list is far from exhaustive.} Moreover, observed choice behavior is \textit{heterogeneous}, both within and across individuals. By `within individuals' we mean that a subject may play the same game differently on two separate occasions, a phenomenon labeled \q{stochastic choice} in decision theory.\footnote{See e.g. \citet{tversky1969,hey1994,ballinger1997, hey2001,agranov2017,friedman2019}.} By `across individuals' we mean that subjects produce different choice distributions when playing the same role in the same game.\footnote{See \citet{brown1990, rapoport1992, stahl1994, stahl1995, nagel1995, mckelvey2000,costa2001cognition, arad2012, goeree2017,dellavigna2018} for examples and discussion.}

\cite{Nash1950} defined equilibrium solely in terms of choices, i.e. as a fixed-point of the best reply mappings, implicitly assuming that players' beliefs are correct. One possibility why Nash did not model beliefs explicitly is that he considered them ephemeral. However, by adapting belief-elicitation methods from decision theory, recent work in experimental game theory has produced a wealth of data on the relationship between beliefs and choices.\footnote{See \citet{schotter2014} and \citet{schlag2015} for excellent surveys. For tests of the state-of-the-art methods see \citet{holt2016} and \citet{danzVesterlundWilson2020}.}

If the empirical support for Nash-equilibrium choices is poor, it is worse for Nash-equilibrium beliefs.  A main finding from experiments that elicit beliefs and choices is that a significant proportion of choices are \textit{not} best responses to stated beliefs. In $3\times 3$ games, for instance, the percentage of best responses is typically between 45\% and 75\%. Furthermore, deviations from best responses are sensitive to expected payoffs, i.e. more costly mistakes occur less frequently.\footnote{See \citet{manski2004, croson2000, nyarko2002, costa2008, palfrey2009, rey2009, terracol2009, hyndman2009, hyndman2013, wang2011, hyndman2012, danz2012, sutter2013, hoffmann2014, trautmann2015, goeree2017, alempaki2019,esteban2020}.}  A second finding is that beliefs are generally \textit{not} correct. Subjects do not have perfect foresight -- not in one-shot games nor in repeated settings where learning is possible. Nonetheless, reported beliefs are not arbitrary -- they predict opponents' choices better than chance and reflect strategic thinking.\footnote{\citet{huck2002, weizsacker2003, ehrblatt2005, hyndman2010, hyndman2013, schlag2015, goeree2017, alempaki2019, polonio2019,friedman2019,esteban2020,foroughifar2020,wolff2021}.} A final finding is that beliefs differ across individuals in the same strategic role, and the same individual often reports different beliefs when facing the same strategic situation. If anything, belief heterogeneity appears \textit{more} pronounced than choice heterogeneity both within and across individuals.\footnote{This increased heterogeneity possibly reflects the different nature of belief elicitation, i.e. a  distribution over choices is elicited rather than a single discrete choice (pure strategy). But it may also reflect a different cognitive process when thinking about others' choices versus thinking about own choices \citep[e.g.][]{bhatt2005}.}

We look for minimal behavioral assumptions that produce a falsifiable theory consistent with these empirical observations.
\vspace*{-2.5mm}
\begin{itemize}\addtolength{\itemsep}{-2mm}
\item[(i)] \textbf{Monotonicity}: choice frequencies are positively but imperfectly related to expected payoffs based on beliefs, i.e. players ``better respond'' but do not necessarily best respond.
\item[(ii)] \textbf{Consequential Unbiasedness}: expected payoffs based on beliefs generate choice distributions that do not differ systematically from observed choices.
\item[(iii)] \textbf{Set Valued}: belief and choice predictions are set valued with predicted choices being a subset of predicted beliefs.
\end{itemize}
\vspace*{-2.5mm}
Monotonicity captures the idea that unbiased mistakes, or ``noise,'' dilute choice probability from better to worse options but not to the extent that they overturn their ordering. Consequential unbiasedness relaxes the rational-expectations assumption of correct beliefs -- the intuition is that there is no need to change beliefs if doing so leaves choices unaffected.\footnote{For instance, in a Prisoner's Dilemma game, all beliefs result in the same ranking of payoffs and, hence, choices. There is no need to correctly anticipate the other's behavior.} These behavioral postulates have empirical support \textit{per se}, and the requirement that $M$ equilibrium captures \textit{any} behavior that abides by them naturally results in a set-valued solution concept.

Being set valued, $M$ equilibrium imposes weak restrictions on the data. Hence, when its predictions are not violated, its contribution lies in setting a new baseline from which further restrictions can be added, resulting in sharper prediction that are also not violated. Importantly, $M$ equilibrium \textit{is} falsifiable. As we show in Section \ref{sec:robust}, the size of any $M$-choice set falls factorially fast with the number of players and strategies. For instance, in a three-player game where each player has three strategies, the size of any $M$-choice set is less than half a percent of the set of all strategy profiles, and even if there are multiple $M$ equilibria their total size is less than one-sixth. As such, we do not expect $M$ equilibrium to always be correct. When its predictions fail, the weak assumptions underlying $M$ equilibrium offer important insights about behavior: is monotonicity violated or are beliefs biased in terms of payoff consequences?

\subsection{Theoretical Motivation for $M$ equilibrium}
\label{theoMotivation}

The idea that players may make suboptimal choices, or \q{tremble,} has been part of game theory since its early days. The \textit{refinement} literature started with \cite{Selten1975} who introduced the notion of {\em perfectness} to define robust Nash equilibria. In \citeauthor{Selten1975}'s original definition, a perfect equilibrium is the limit of Nash equilibria of perturbed games in which strategy sets are restricted to an interior simplex to ensure no strategy receives weight less than $\varepsilon>0$. A more convenient definition is that a Nash profile $\sigma$ is perfect if there exists a sequence $\sigma(\varepsilon)$ of interior profiles to which $\sigma$ is a best response and that limits to $\sigma$ when $\varepsilon$ tends to zero.

From an empirical viewpoint, the notion of vanishing trembles is too restrictive as sizeable deviations are the rule rather than the exception. To be robust, a solution concept should include any deviation that satisfies monotonicity. Generically, this yields a full-dimensional choice \textit{set}, which, generically, is supported by a \textit{set} of consequentially unbiased beliefs. In other words, the reason for embracing a set-valued solution concept is to accommodate the more fundamental behavioral assumptions of monotonicity and consequential unbiasedness, which replace standard notions of rationality.

The insistence on solution \textit{sets} rather than fixed-\textit{points} departs from the approach taken in the (refinement) literature. We show, however, that $M$ equilibrium serves as a \textit{meta theory} for various existing models that hitherto may have been considered unrelated. Specifically, $M$-choice sets unify equilibria from models based on strategy perturbations (\`a la \citeauthor{Selten1975}, \citeyear{Selten1975}), payoff perturbations (\`a la \citeauthor{Harsanyi1973}, \citeyear{Harsanyi1973}), as well as $\varepsilon$-proper equilibria (\`a la \citeauthor{Myerson1978}, \citeyear{Myerson1978}).

In particular, the $M$-choice sets contain \textit{all} Quantal Response Equilibria (\citeauthor{McKelveyPalfrey1995}, \citeyear{McKelveyPalfrey1995}), i.e. all the fixed-points that can be obtained by varying players' quantal responses over an infinite-dimensional function space. As such, $M$ equilibrium provides the perfect lens to address the critique by \cite{HaileHortacsuKosenok2008} about the (lack of) empirical content of QRE. Falsifiability of $M$ equilibrium implies that QRE \textit{has} empirical content. The weak monotonicity assumption underlying $M$ equilibrium renders the critique by \cite{HaileHortacsuKosenok2008} vacuous. To generate \textit{any} type of behavior one would have to drop monotonicity and allow options with lower expected payoffs to be chosen more frequently.

Besides unifying choice predictions from seemingly unrelated models, $M$ equilibrium highlights the role of beliefs. The aforementioned models all impose a \textit{rational-expectations} assumption that is used to \q{close the model} to obtain a set of fixed-point conditions. This approach is not unique to game theory and underlies virtually all modeling in general equilibrium, finance, and macro-economics. The rational-expectations approach has met considerable criticism (e.g. \citeauthor{Simon1984}, \citeyear{Simon1984}; \citeauthor{Woodford1990}, \citeyear{Woodford1990}; \citeauthor{Sargent1991}, \citeyear{Sargent1991}), especially as it is applied to environments far more complex than the normal-form games studied here.\footnote{See, for instance, \cite{Kirman2014} for a survey of critiques of rational expectations.}

Criticism of rational expectations stems partly from the lack of empirical support, as foreseen by \cite{Muth1961} who introduced the concept but questioned its empirical validity (see Section 4.1). In addition, perfect foresight is not the only assumption that yields a consistent model of behavior. For instance, in \citeauthor{FudenbergLevine1993}'s (\citeyear{FudenbergLevine1993}) \q{self-confirming equilibrium} for extensive-form games, each player best responds to beliefs about the opponent's play and beliefs are correct on, but not off, the self-confirming equilibrium path. Hence, if a self-confirming equilibrium occurs repeatedly, no player needs to adjust their beliefs. \cite{FudenbergLevine1993} show that self-confirming equilibria need not be Nash equilibria. Related work in macroeconomics considers adaptive learning rules under which the economy may converge to \q{sunspot equilibria} \citep{Woodford1990}. And in related work in finance, agents are modeled as using heterogenous forecasts obtained by maximizing accuracy subject to individual resource constraints, which results in \q{self-justified equilibria} \citep{KublerScheidegger2018}.

Consequential unbiasedness allows for the possibility that beliefs settle even when they are biased, as in the aforementioned approaches. $M$ equilibrium applies consequential unbiasedness in the context of normal-form games and is agnostic as to how beliefs form, i.e. it does not assume a specific learning dynamic. $M$ equilibrium thus circumvents the \q{wilderness of behavioral assumptions} that \cite{Sargent1999} imagined necessary when abandoning rational expectations. Consequentially unbiased $M$-equilibrium beliefs are determined by the same simple ordinal payoff comparisons that govern the monotone $M$-equilibrium choices.

Since $M$ equilibrium does not impose rational expectations to close the model, it is not defined in terms of a set of fixed-point equations. Instead, the $M$-equilibrium sets are characterized by finitely many polynomial (in)equalities. Such sets are called \textit{semi-algebraic} and a main feature is that they can be computed using a (finite) algorithm. Indeed, as we show in Section \ref{sec:M}, it is typically \textit{easier} to determine $M$ equilibrium sets than to compute any of the fixed-points of the various models, such as Quantal Response Equilibrium, it contains.

\subsection{Related Prior Approaches}
\label{sec:prior}

\citeauthor{McKelveyPalfrey1995}'s (\citeyear{McKelveyPalfrey1995}) {\em Quantal Response Equilibrium} also provides a coherent model that incorporates sizeable choice deviations.
QRE is a parametric theory that requires the specification of players' \q{quantal} responses that map expected payoffs to choice probabilities. A QRE profile is then defined as a fixed-point. In other words, like Nash equilibrium, QRE assumes rational expectations even though its fixed-point equations are transcendental, e.g. for logit-QRE only numerical solutions are possible. \cite{crawford2018} notes that ``QRE is a fixed point in a high-dimensional space of distributions, making its thinking justification cognitively far more demanding than for Nash equilibrium.''

A different approach relaxes the rational-expectations assumption while maintaining best responses. Prominent examples include the \textit{Level-k model} \citep{stahl1994,stahl1995,nagel1995}, \textit{Cognitive Hierarchy} \citep{camerer2004}, and \emph{Random Belief Equilibrium} \citep{friedman2005}. Unlike QRE, which is essentially a homogeneous Bayes-Nash equilibrium, these models allow for heterogeneity in choices and beliefs. However, the best-response assumption they impose is not corroborated by the data (see Section \ref{empMotivation}). A final class of models, such as \emph{Noisy Introspection} \citep{goeree2004} and \emph{Subjective QRE} \citep{rogers2009}, relax both the perfect-foresight and perfect-decisionmaking assumptions.

These prior approaches are all parametric. Their success in explaining experimental data relies on estimation of specific parameters, e.g. error distributions for QRE or level distributions in Cognitive Hierarchy. At a more conceptual level, these models rest on specific assumptions about how players form beliefs and make choices based on their beliefs.  Assumptions, which, in our view, are unlikely to be met exactly or even approximately. We follow a different, if not opposite, approach by imposing only minimal conditions to obtain a falsifiable theory of strategic play compatible with the rich set of empirical regularities detailed in Section \ref{empMotivation}.

\subsection{Road Map}

This paper has a theory and an experimental part. For readers mainly interested in the relation between $M$ equilibrium and various prior approaches -- Nash, perfect equilibria, proper equilibria, QRE -- Sections 2--2.4 are most relevant. For those interested in how $M$ equilibrium relates to a parametric model, $\mu$ equilibrium, which can be used for calibration exercises, see Section 2.2. And for those interested in computing $M$ equilibria, see Section 2.1 and Appendix A. Robustness and falsifiability of $M$ equilibria are covered in Section 2.5.

Section 3 reports data from various $2\times 2$ and $3\times 3$ games used to test $M$ equilibrium and estimate $\mu$ equilibrium and compare them to leading behavioral-game-theory models.
The concluding Section \ref{sec:conc} discusses methodological implications of $M$ equilibrium. Appendices B--G contain proofs, details of the belief-elicitation procedure, statistical analyses, and instructions.


\section{$M$ Equilibrium}
\label{sec:M}

A finite normal-form game $G$ is a tuple $(N,\{S_{i},\Pi_{i}\}_{i\,\in\,N})$, with $N=\{1,\ldots,n\}$ the set of players, $S_{i}=\{s_{i1},\ldots,s_{iK_{i}}\}$ the set of pure strategies for player $i$, and $\Pi_{i}:S\rightarrow\Re$, where $S=\prod_{i=1}^nS_i$, player $i$'s payoff function. Let $\Sigma_{i}$ denote the set of probability distributions over $S_i$ and let $\Sigma=\prod_{i=1}^n\Sigma_i$. For $i\in N$,  $\Omega_i=\prod_{j\neq i}\Sigma_i$ is player $i$'s set of (independent) beliefs. Let $\Omega=\prod_{i=1}^n\Omega_i$.
We extend player $i$'s payoff function over $\Sigma_i\times\Omega_i$ as follows: for $(\sigma_i,\omega_i)\in\Sigma_i\times\Omega_i$, player $i$'s expected payoff is $\sum_{k=1}^{K_i}\sigma_{ik}\pi_{ik}(\omega_i)$ where $\sigma_{ik}$ is the probability that player $i$ chooses strategy $s_{ik}$ and $\pi_{ik}(\omega_i)=\sum_{s_{-i}\in S_{-i}}p_i(s_{-i})\Pi_{i}(s_k,s_{-i})$, with $p_i(s_{-i})=\prod_{j\neq i}\omega_{ij}(s_j)$, is the expected payoff associated with $s_{ik}$. Let $\pi_i(\sigma_{-i})$ denote the vector of expected payoffs when player $i$'s beliefs are correct, i.e. $\omega_i=\sigma_{-i}$. Finally, let $\Sigma_{int}$ denote the set of totally-mixed strategies and, for $S\subset\Sigma$ and $S'\subset\Omega$, let $\overline{S}$ and $\overline{S}'$ denote their (topological) closures relative to $\Sigma$ and $\Omega$ respectively.
\begin{definition} \label{Mdef}
The pair $(\overline{M}^c,\overline{M}^b)\subseteq\Sigma\times\Omega$ form an {\bf\em M Equilibrium} of $G$ if they are the closures of the maximal sets $M^c\subseteq\Sigma_{int}$ and $M^b\subseteq\Omega$ that satisfy
\begin{subequations}\label{M}
\begin{align}
  \pi_{ij}(\omega_{i})\,<\,\pi_{ik}(\omega_{i}) &\,\,\Longrightarrow\,\sigma_{ij}\,<\,\sigma_{ik}\label{M-mon} \\[1mm]
  \pi_{ij}(\omega_{i})\,<\,\pi_{ik}(\omega_{i}) &\,\Longleftrightarrow\,\pi_{ij}(\sigma_{-i})\,<\,\pi_{ik}(\sigma_{-i})\label{M-conu}
\end{align}
\end{subequations}
for all $i\in N$, $1\leq j,k\leq K_i$, $\sigma\in M^c$, $\omega\in M^b$. An $M$ equilibrium is \textbf{colorable} if \eqref{M-mon} is sharpened to $\pi_{ij}(\omega_{i})<\pi_{ik}(\omega_{i})\Leftrightarrow\sigma_{ij}<\sigma_{ik}$. The set of all $M\!$ equilibria of $G$ is denoted $\overline{\mathcal{M}}(G)$.
\end{definition}
Definition \ref{Mdef} implements the three desiderata of the Introduction: strategies with higher expected payoffs are chosen more likely (monotonicity), beliefs generate the same payoff ordering as observed choices (consequential unbiasedness), and the solution is defined in terms of maximal sets.\footnote{Maximality is defined with respect to set inclusion ($\subseteq$). To show existence of a maximal set consider the collection of sets $\mathcal{C} \equiv \{ (A,B) \in 2^{\Sigma_{int}} \times 2^{\Omega}\,|\,\eqref{Mdef}\,\,\text{holds}\,\,\forall \sigma \in A,\,\omega \in B\}$. We define a partial order $\preceq$ on $\mathcal{C}$ as follows: $(A,B) \preceq (A',B')$ if $A \subseteq A'$ and $B \subseteq B'$. Let $\mathcal{C}_o$ denote any totally ordered subcollection of $\mathcal{C}$. Define $\mathcal{A}=\bigcup\,\{ A \subseteq 2^{\Sigma_{int}}\,|\,\exists B \subset \Omega\,\,\text{such that}\,\, (A,B) \in \mathcal{C}_o \}$ and $\mathcal{B}=\{ \omega \in \Omega\,|\,\eqref{Mdef}\,\,\text{holds}\,\,\forall \sigma \in \mathcal{A}\}$. Then $\mathcal{A}\times\mathcal{B}$ is an upper bound of $\mathcal{C}_o$ and existence of a maximal element of $\mathcal{C}$ is implied by Zorn's lemma, see e.g. \cite{Halmos1998}. We should like to thank Lucas Pahl for helping us formulate this argument.} Note that \eqref{M-mon} does not restrict choice probabilities when the associated expected payoff are equal. Colorability requires choice probabilities to be equal in this case.

The \q{equilibrium} terminology reflects the consistency condition \eqref{M-conu}. As with other equilibrium concepts this implicitly assumes players' beliefs and choices have settled, e.g. toward the end of an experiment. However, unlike other equilibrium concepts, consequential unbiasedness does not require beliefs to be correct or the same in each period. Rather beliefs and choices should belong to the same $M$-belief and $M$-choice \textit{sets} in each period.

\medskip

\noindent\textbf{Example 1: Mondriaan on a Square}

\begin{figure}[t]
\begin{center}
\begin{tabular}{ccc}
\multicolumn{1}{r}{}& \multicolumn{1}{|c}{$A$} & \multicolumn{1}{c}{$B$}  \\ \cline{1-3}
\multicolumn{1}{r}{\rule{0pt}{5mm}$A$} & \multicolumn{1}{|c}{$0$, $0$} & \multicolumn{1}{c}{$6$, $1$} \\
\multicolumn{1}{r}{\rule{0pt}{5mm}$B$} & \multicolumn{1}{|c}{$1$, $10$} & \multicolumn{1}{c}{$2$, $2$} \\
\end{tabular}

\vspace*{2mm}
An asymmetric game of chicken.
\vspace*{2mm}

\begin{tikzpicture}[scale=4.7]
\draw[line width=.5pt,gray] (0,0) -- (1,0) -- (1,1) -- (0,1) -- (0,0);
\draw[line width=1pt] (1/2,0) -- (1/2,1);
\draw[line width=1pt] (0,1/2) -- (1,1/2);
\node[scale=0.6] at (-0.03,-0.06) {$0$};
\node[scale=0.6] at (-0.03,1) {$1$};
\node[scale=0.6] at (1,-0.06) {$1$};
\node[scale=0.8] at (0.65,-0.06) {$p$};
\node[scale=0.8] at (-0.08,0.65) {$q$};

\node[scale=0.6] at (0.25,0.32) {$p<\hf$};
\node[scale=0.6] at (0.25,0.18) {$q<\hf$};
\node[scale=0.6] at (0.75,0.32) {$p>\hf$};
\node[scale=0.6] at (0.75,0.18) {$q<\hf$};
\node[scale=0.6] at (0.25,0.82) {$p<\hf$};
\node[scale=0.6] at (0.25,0.68) {$q>\hf$};
\node[scale=0.6] at (0.75,0.82) {$p>\hf$};
\node[scale=0.6] at (0.75,0.68) {$q>\hf$};

\node at (1.9,.46) {
\begin{tikzpicture}[scale=4.7]
\draw[line width=.5pt,gray] (0,0) -- (1,0) -- (1,1) -- (0,1) -- (0,0);
\draw[line width=1pt] (4/5,0) -- (4/5,1);
\draw[line width=1pt] (0,8/9) -- (1,8/9);
\node[scale=0.6] at (-0.03,-0.06) {$0$};
\node[scale=0.6] at (-0.03,1) {$1$};
\node[scale=0.6] at (1,-0.06) {$1$};
\node[scale=0.8] at (0.65,-0.06) {$\omega$};
\node[scale=0.8] at (-0.08,0.65) {$\nu$};
\node[scale=0.6] at (0.4,0.52) {$\pi_{C}(A)>\pi_{C}(B)$};
\node[scale=0.6] at (0.4,0.38) {$\pi_{R}(A)>\pi_{R}(B)$};
\node[scale=0.6] at (0.2,0.94) {$\pi_{C}(A)<\pi_{C}(B),$};
\node[scale=0.6] at (0.6,0.94) {$\pi_{R}(A)>\pi_{R}(B)$};
\node[scale=0.6] at (.9,.65) [rotate=-90] {$\pi_{C}(A)>\pi_{C}(B),$};
\node[scale=0.6] at (.9,.25) [rotate=-90] {$\pi_{R}(A)<\pi_{R}(B)$};
\end{tikzpicture}};

\node at (0.43,.46-1.2) {
\begin{tikzpicture}[scale=4.7]
\draw[line width=.5pt,gray] (0,0) -- (1,0) -- (1,1) -- (0,1) -- (0,0);
\filldraw[fill=mondrianYellow,draw=mondrianYellow,opacity=1] (1/2,1/2) -- (4/5,1/2) -- (4/5,8/9) -- (1/2,8/9) -- (1/2,1/2);
\filldraw[fill=mondrianBlue,draw=mondrianBlue,opacity=1] (4/5,1/2) -- (1,1/2) -- (1,0) -- (4/5,0) -- (4/5,1/2);
\filldraw[fill=mondrianRed,draw=mondrianRed,opacity=1] (1/2,8/9) -- (1/2,1) -- (0,1) -- (0,8/9) -- (1/2,8/9);
\node[scale=0.6pt] at (-0.03,-0.06) {$0$};
\node[scale=0.6pt] at (-0.03,1) {$1$};
\node[scale=0.6pt] at (1,-0.06) {$1$};
\node[scale=0.8] at (0.65,-0.06) {$p$};
\node[scale=0.8] at (-0.08,0.65) {$q$};
\draw[line width=1.5pt] (4/5,0) -- (4/5,8/9) -- (0,8/9);
\end{tikzpicture}};

\node at (1.9,.46-1.2) {
\begin{tikzpicture}[scale=4.7]
\draw[line width=.5pt,gray] (0,0) -- (1,0) -- (1,1) -- (0,1) -- (0,0);
\filldraw[fill=mondrianYellow,draw=mondrianYellow,opacity=1] (0,0) -- (4/5,0) -- (4/5,8/9) -- (0,8/9) -- (0,0);
\filldraw[fill=mondrianBlue,draw=mondrianBlue,opacity=1] (4/5,8/9) -- (1,8/9) -- (1,0) -- (4/5,0) -- (4/5,8/9);
\filldraw[fill=mondrianRed,draw=mondrianRed,opacity=1] (4/5,8/9) -- (4/5,1) -- (0,1) -- (0,8/9) -- (4/5,8/9);
\node[scale=0.6pt] at (-0.03,-0.06) {$0$};
\node[scale=0.6pt] at (-0.03,1) {$1$};
\node[scale=0.6pt] at (1,-0.06) {$1$};
\node[scale=0.8] at (0.65,-0.06) {$\omega$};
\node[scale=0.8] at (-0.08,0.65) {$\nu$};
\draw[line width=1.5pt] (4/5,0) -- (4/5,1);
\draw[line width=1.5pt] (0,8/9) -- (1,8/9);
\end{tikzpicture}};
\end{tikzpicture}

\end{center}
\vspace*{-5mm}
\caption{Construction of $M$-choice and $M$-belief sets for the asymmetric game of chicken shown at the top. The middle panels show partitions of the unit square based on orderings of choice probabilities (left) and expected payoffs (right). There are three $M$-choice sets for which these orderings match, see the lower-left panel, which can be labeled or colored by the ordering they represent. The colored sets in the lower-right panel show the beliefs that generate the same ordering of expected payoffs as choices of the same color.}\label{newAGCfig}
\vspace*{-4mm}
\end{figure}

\noindent To illustrate the construction of $M$-choice and $M$-belief sets, consider the asymmetric game of chicken at the top of Figure~\ref{newAGCfig}. Let $p$ and $q$ denote the probabilities with which Column and Row choose $A$. And let $\nu$ and $\omega$ denote Column's and Row's beliefs that the other player chooses $A$. Since $B$ is chosen with complementary probability, the sets of choice and belief profiles can be summarized by unit squares consisting of the pairs $(p,q)$ and $(\nu,\omega)$ respectively, see the second row of Figure~\ref{newAGCfig}. In the left panel, the quadrants reflect the four possible orderings of choice probabilities. In the right panel, $\pi_C$  and $\pi_R$ denote the expected payoffs for Column and Row. The horizontal line at $\omega=\deel{4}{5}$ indicates the belief for which Row is indifferent and the vertical line at $\nu=\deel{8}{9}$ does the same for Column. These \q{indifference curves} divide the unit square in four rectangles where expected payoffs are strictly ordered.

To check for \textit{monotonicity}, we match each of these rectangles with a quadrant in the left panel. For instance, the largest rectangle on the right for which the expected payoff of $A$ exceeds that of $B$ for both players is matched with the north-east quadrant on the left. Likewise, the smallest rectangle on the right for which the expected payoff of $B$ exceeds that of $A$ for both players is matched with the south-west quadrant on the left. More generally, these matchings are such that the choice probabilities on the left are ranked the same way as the payoffs given beliefs on the right. \textit{Consequential unbiasedness} requires that the ranking of expected payoffs based on beliefs is the same as that based on choices. Graphically, this equilibrium condition is implemented by superimposing the rectangles of the right panel on the left unit square.

The intersection of any quadrant with the matched rectangle determines the $M$-choice and $M$-belief sets. For instance, superimposing the smallest rectangle of the right panel on the left panel yields an empty intersection, i.e. there is no $M$ equilibrium in which both players more likely chose $B$. In contrast, superimposing the largest rectangle of the right panel on the left panel yields a non-empty intersection indicated by the yellow $M$-choice set in the bottom-left panel of Figure \ref{newAGCfig}.  The corresponding $M$-belief set is simply the large rectangle itself, as indicated by the yellow set in the bottom-right panel. Repeating this procedure for the different payoff rankings yields three full-dimensional $M$ equilibria that are colorable, see the bottom panels of Figure~\ref{newAGCfig}. (The thick black lines in the lower panels are discussed in Example 3 below.)

\medskip

\noindent\textbf{Example 2: Mondriaan on a Simplex}

\noindent The construction of $M$ sets is similar for games with more players or strategies, except that we cannot visualize them as the sets of choice and belief profiles are high dimensional. An exception is symmetric games with three strategies if we restrict attention to \textit{symmetric} $M$-equilibria. Then the set of choice profiles, for instance, is a single two-dimensional simplex, i.e. we do not take a product over players. We use this simplification in various parts of the paper. An important caveat is that drawing symmetric $M$ equilibria in a single simplex overstates their sizes, e.g. for a two-player game a symmetric $M$ equilibrium of size one-sixth corresponds to only a $\deel{1}{36}$ fraction of the set of all profiles.

Consider, for instance, the symmetric two-player game shown at the top of Figure \ref{mondrianGame} with pure strategies labeled $R$ (red), $B$ (blue), and $Y$ (yellow). The three lines in the middle-right panel show the set of beliefs for which two pure strategies yield the same expected payoff. These indifference curves partition the simplex into different areas with a strict ranking of payoffs. The middle-left panel shows the \q{barycentric} division of the simplex. Monotonicity and consequential unbiasedness require the ranking of expected payoffs to match that of the underlying choice probabilities. This results in three full-dimensional $M$-equilibrium choice sets colored by the same (strict) ranking of choice probabilities and expected payoffs, see the lower-left panel of Figure \ref{mondrianGame}. The lower-right panel shows the $M$-equilibrium belief sets. Each $M$-equilibrium choice set is supported by beliefs of the same color, i.e. beliefs that produce the same ranking of expected payoffs as the choices.

\begin{figure}[t]
\begin{center}
\begin{tabular}{c|ccc}
& $R$ & $B$ & $Y$ \\ \cline{1-4}
\multicolumn{1}{r}{\rule{0pt}{5mm}$R$} & \multicolumn{1}{|c}{$9$, $9$} & \multicolumn{1}{c}{$6$, $8$} & \multicolumn{1}{c}{$4$, $4$} \\
\multicolumn{1}{r}{$B$} & \multicolumn{1}{|c}{$8$, $6$} & \multicolumn{1}{c}{$8$, $8$} & \multicolumn{1}{c}{$2$, $4$} \\
\multicolumn{1}{r}{$Y$} & \multicolumn{1}{|c}{$4$, $4$} & \multicolumn{1}{c}{$4$, $2$} & \multicolumn{1}{c}{$5$, $5$} \\
\end{tabular}

\vspace*{2mm}
Mondriaan's \q{composition of Red, Yellow, and Blue} game.
\vspace*{2mm}

\begin{tikzpicture}[scale=5]
\draw[line width=.5pt,gray] (0,0) -- (1,0) -- (1/2,1/2*3^.5) -- (0,0);
\draw[line width=1pt] (1/2,0) -- (1/2,1/2*3^.5);
\draw[line width=1pt] (0,0) -- (3/4,1/4*3^.5);
\draw[line width=1pt] (1/4,1/4*3^.5) -- (1,0);
\node[scale=0.6] at (-0.03,-0.03) {$R$};
\node[scale=0.6] at (1.03,-0.03) {$B$};
\node[scale=0.6] at (1/2,1/2*3^.5+0.03) {$Y$};

\node[scale=0.6] at (.28,1/20) {$\sigma_R\!>\!\sigma_B\!>\!\sigma_Y$};
\node[scale=0.6] at (.7,1/20) {$\sigma_B\!>\!\sigma_R\!>\!\sigma_Y$};
\node[scale=0.6] at (.25,.2) [rotate=30] {$\sigma_R\!>\!\sigma_Y\!>\!\sigma_B$};
\node[scale=0.6] at (.73,.21) [rotate=-30] {$\sigma_B\!>\!\sigma_Y\!>\!\sigma_R$};
\node[scale=0.6] at (.43,.51) [rotate=90] {$\sigma_Y\!>\!\sigma_R\!>\!\sigma_B$};
\node[scale=0.6] at (.58,.51) [rotate=90] {$\sigma_Y\!>\!\sigma_B\!>\!\sigma_R$};

\node at (1.75,.43) {
\begin{tikzpicture}[scale=5]
\draw[line width=.5pt,gray] (0,0) -- (1,0) -- (1/2,1/2*3^.5) -- (0,0);
\draw[line width=1pt] (1/3,0) -- (3/4,1/4*3^.5);
\draw[line width=1pt] (5/12,5/12*3^.5) -- (2/3,1/3*3^.5);
\draw[line width=1pt] (2/7,2/7*3^.5) -- (5/7,2/7*3^.5);
\node[scale=0.6] at (-0.03,-0.03) {$R$};
\node[scale=0.6] at (1.03,-0.03) {$B$};
\node[scale=0.6] at (1/2,1/2*3^.5+0.03) {$Y$};
\node[scale=0.6] at (1/3,1/4) {$\pi_R\!>\!\pi_B\!>\!\pi_Y$};
\node[scale=0.6] at (2/3,1/10) {$\pi_B\!>\!\pi_R\!>\!\pi_Y$};
\node[scale=0.6] at (1/2,.55) {$\pi_R\!>\!\pi_Y\!>\!\pi_B$};
\node[scale=0.6] at (.7,.8) {$\pi_Y\!>\!\pi_R\!>\!\pi_B$};
\end{tikzpicture}};

\node at (0.5,-0.6) {
\begin{tikzpicture}[scale=5]
\draw[line width=.5pt,gray] (0,0) -- (1,0) -- (1/2,1/2*3^.5) -- (0,0);
\node[scale=0.6] at (-0.03,-0.03) {$R$};
\node[scale=0.6] at (1.03,-0.03) {$B$};
\node[scale=0.6] at (1/2,1/2*3^.5+0.03) {$Y$};

\filldraw[fill=mondrianRed,draw=mondrianRed,opacity=1] (0,0) -- (1/3,0) -- (1/2,1/10*3^.5) -- (1/2,1/6*3^.5) -- (0,0);
\filldraw[fill=mondrianBlue,draw=mondrianBlue,opacity=1] (1,0) -- (1/2,0) -- (1/2,1/10*3^.5) -- (4/7,1/7*3^.5) -- (1,0);
\filldraw[fill=mondrianYellow,draw=mondrianYellow,opacity=1] (1/2,1/2*3^.5) -- (5/12,5/12*3^.5) -- (1/2,7/18*3^.5) -- (1/2,1/2*3^.5);

\draw[line width=1.5pt,black] (1/3,0) -- (4/7,1/7*3^.5);
\draw[line width=1.5pt,black] (5/12,5/12*3^.5) -- (1/2,7/18*3^.5);


\end{tikzpicture}};

\node at (1.75,-0.6) {
\begin{tikzpicture}[scale=5]
\draw[line width=.5pt,gray] (0,0) -- (1,0) -- (1/2,1/2*3^.5) -- (0,0);
\node[scale=0.6] at (-0.03,-0.03) {$R$};
\node[scale=0.6] at (1.03,-0.03) {$B$};
\node[scale=0.6] at (1/2,1/2*3^.5+0.03) {$Y$};

\filldraw[fill=mondrianRed,draw=mondrianRed,opacity=1] (0,0) -- (1/3,0) -- (3/4,1/4*3^.5) -- (5/7,2/7*3^.5) -- (2/7,2/7*3^.5) -- (0,0);
\filldraw[fill=mondrianBlue,draw=mondrianBlue,opacity=1] (1,0) -- (1/3,0) -- (3/4,1/4*3^.5) -- (1,0);
\filldraw[fill=mondrianYellow,draw=mondrianYellow,opacity=1] (1/2,1/2*3^.5) -- (5/12,5/12*3^.5) -- (2/3,1/3*3^.5) -- (1/2,1/2*3^.5);

\draw[line width=1.5pt,black] (1/3,0) -- (3/4,1/4*3^.5);
\draw[line width=1.5pt,black] (5/12,5/12*3^.5) -- (2/3,1/3*3^.5);

\end{tikzpicture}};

\end{tikzpicture}
\vspace*{-1mm}
\caption{Construction of $M$-choice and $M$-belief sets for Mondriaan's game shown at the top. The colored sets in the lower-left panel consist of choices for which the barycentric division of the simplex (middle-right panel) matches the division based on expected payoffs (middle-left panel). The colored sets in the lower-right panel consist of beliefs that generate the same strict ordering of expected payoffs as choices of the same color.}\label{mondrianGame}
\end{center}
\vspace*{-7mm}
\end{figure}

\medskip

\noindent\textbf{Example 3: Lower-Dimensional $M$ sets}

\noindent An $M$-equilibrium exists for any normal-form game (Section \ref{sec:FPmodels}) and, generically, there exists at least one full-dimensional $M$ set (Section \ref{sec:robust}). But non-generic games may have only lower-dimensional $M$ sets.
For a matching-pennies game,\footnote{A $2\times 2$ symmetric zero-sum game with strategies labeled ``Heads'' and ``Tails'' in which player 1 wins \$1 if players' choices match and loses \$1 otherwise.} for instance,
the unique $M$-equilibrium consists of a \textit{single} profile that corresponds to both players randomizing uniformly. This $M$ equilibrium is colorable as both choice probabilities and expected payoffs are equal.

In generic games, lower-dimensional $M$ sets and full-dimensional $M$ sets may co-exist. In particular, there may be lower-dimensional $M$ sets contained in the boundaries of full-dimensional $M$ sets. As such they do not generate new predictions. In the asymmetric game of chicken of Figure \ref{newAGCfig}, for instance, there is a one-dimensional $M$ equilibrium associated with Column's indifference curve at $\nu=\deel{8}{9}$ where $\pi_{C}(A)=\pi_{C}(B)$. The inequalities in \eqref{Mdef} impose no restrictions on the choice probabilities of strategies with equal expected payoffs. The only restriction on $p$ is that for Row we must have $q>\hf$ and thus $\pi_{R}(A)>\pi_{R}(B)$, which implies $p<\deel{4}{5}$. This $M$-equilibrium choice set is indicated by the thick horizontal line in the bottom-left panel of Figure \ref{newAGCfig}. The thick vertical line in this figure, as well as the thick lines in Figure \ref{mondrianGame}, are explained similarly. These lower-dimensional $M$ sets are not colorable as the ranking of choice probabilities changes over the set and does not everywhere match the ranking of payoffs. The set of supporting beliefs is simply the entire indifference curve, as indicated by the thick lines in the bottom-right panels of Figures \ref{newAGCfig} and \ref{mondrianGame}.

These examples illustrate that there can be multiple $M$ equilibria of different dimensions. Full-dimensional $M$ sets correspond to strict payoff rankings while lower-dimensional $M$ sets occur when there are payoff ties. A natural way to enumerate all $M$ equilibria is thus to consider all possible payoff rankings, strict or not.

\subsection{Rank Characterization}
\label{sec:rankChar}

We say $\mu_i\in\Sigma_i$ is \textit{regular} if it is totally mixed and all its entries are different. For $i\in N$, let $\mu_i$ be regular and let $V_i(\mu_i)$ denote the set of points in $\Sigma_i$ that results by permuting the entries of $\mu_i$. The convex hull $P_i(\mu_i)=\text{co}(V_i(\mu_i))$ is the \textit{permutahedron} generated by $\mu_i$. The introduction of this permutahedron serves four purposes.
\begin{itemize}[leftmargin=7mm]\addtolength{\itemsep}{-2mm}
\item[1.] In this section, we use the permutahedron to determine all $M$ equilibria based on the observation that any payoff ranking corresponds (one-to-one) to a face of the permutahedron with a dimension equal to the number of payoff ties.\footnote{We follow the convention that faces can be of different dimension and include the vertices, edges, etc.}
\item[2.] In Section \ref{sec:FPmodels}, we use it to define, for $i\in N$, player $i$'s \textit{rank correspondence}
        \begin{equation}\label{rankgen}
        rank^{\mu_i}_i(\pi_i)=\mbox{co}\bigl(\bigl\{\sigma_i\in V_i(\mu_i)\mid\pi_{ij}(\sigma_{-i})<\pi_{ik}(\sigma_{-i})\Rightarrow\sigma_{ij}<\sigma_{ik}\,\text{for}\,1\leq j,k\leq K_i\bigr\}\bigr)
        \end{equation}
        which is a bounded-rational version of player $i$'s best-reply correspondence as suboptimal choices are chosen with positive probability.  The rank correspondence underlies a parametric fixed-point model that is applied to data from experiments in Section \ref{subsec:empeval}.
\item[3.] In Section \ref{sec:Meta}, we use the rank correspondences to connect $M$ equilibrium with \citeauthor{Selten1975}'s (\citeyear{Selten1975}) notion of trembling-hand perfection and \citeauthor{Myerson1978}'s (\citeyear{Myerson1978}) notion of properness. The intuition is that, for any $\mu_i\in\Sigma_i$, the $rank_i^{\mu_i}$ correspondences are best-reply correspondences if players' strategy sets are restricted to $P_i(\mu_i)$, i.e. $rank_i^{\mu_i}(\pi_i)=\text{argmax}_{\sigma_i\in P_i(\mu_i)}\sum_{k}\sigma_{ik}\pi_{ik}$. Hence, the fixed-points defined by rank correspondences are Nash equilibria of the perturbed game $(N,\{P_i(\mu_i),\Pi_i\}_{i\,\in\,N})$.
\item[4.] Finally, in Section \ref{sec:Nash}, we use the permutahedron approach to geometrically determine the set of \textit{profect} Nash equilibria. We show that profectness is a refinement criterium that is less strict than properness and more strict than perfectness.
\end{itemize}
The top panels of Figure \ref{rank3} illustrate the rank correspondence $rank^{\mu_i}_i$ for $K_i=3$ and some regular $\mu_i$. The left panel shows the six possible strict rankings, corresponding to vertices of the permutahedron. The middle panel shows the six possible rankings with one tie, corresponding to the edges, and the right panel shows the one ranking with two payoff ties, corresponding to the permutahedron itself. The bottom panels show the rank correspondence for an irregular $\hat{\mu}_i$ of the form $\hat{\mu}_i(\varepsilon)=(1,\varepsilon,\varepsilon)/(1+2\varepsilon)$ with $\varepsilon\in(0,1)$. Now the set $V_i(\hat{\mu}_i)$ of permutations has only three elements and its convex hull is a simplex that has fewer faces than a permutahedron, which reflects the fact that only the highest payoff matters not the entire ranking of payoffs.\footnote{In general, the simplex has $\smash[b]{\sum_{k\,=\,0}^{K_i-1}\bigl(\!\!\begin{array}{cc} K_i \\ k \end{array}\!\!\bigr)=2^{K_i}-1}$ faces while the number of faces of the permutahedron is given by the Fubini number, $f_{K_i}$, which can be recursively defined as $\smash{f_{K_i}=\sum_{k\,=\,0}^{K_i-1}\bigl(\!\!\begin{array}{cc} K_i \\ k \end{array}\!\!\bigr)f_k}$ with $\rule{0pt}{5mm}f_0=1$.}

\begin{figure}[t]
\begin{center}
\begin{tikzpicture}[scale=4.3]
\draw[line width=.5pt,gray] (0,0) -- (1,0) -- (1/2,1/2*3^.5) -- (0,0);
\node[scale=0.8pt] at (-0.05,-0.05) {$s_1$};
\node[scale=0.8pt] at (1.05,-0.05) {$s_2$};
\node[scale=0.8pt] at (1/2,1/2*3^.5+0.05) {$s_3$};

\filldraw[mondrianRed] (1/4,3^.5/20) circle (.4pt);
\filldraw[mondrianOrange] (1/5,3^.5/10) circle (.4pt);
\filldraw[mondrianYellow] (9/20,3^.5*7/20) circle (.4pt);
\filldraw[mondrianGreen] (11/20,3^.5*7/20) circle (.4pt);
\filldraw[mondrianCyan] (4/5,3^.5/10) circle (.4pt);
\filldraw[mondrianBlue] (3/4,3^.5/20) circle (.4pt);

\node[scale=0.5pt] at (0.32,0.03) {$1\succ 2\succ 3$};
\node[scale=0.5pt] at (0.68,0.03) {$2\succ 1 \succ 3$};
\node[scale=0.5pt] at (0.27,0.25) {$1\succ 3\succ 2$};
\node[scale=0.5pt] at (0.73,0.25) {$2\succ 3\succ 1$};
\node[scale=0.5pt] at (0.25,0.66) {$3\succ 1\succ 2$};
\node[scale=0.5pt] at (0.75,0.66) {$3\succ 2\succ 1$};

\node at (1.85,0.43) {
\begin{tikzpicture}[scale=4.3]
\draw[line width=.5pt,gray] (0,0) -- (1,0) -- (1/2,1/2*3^.5) -- (0,0);
\node[scale=0.8pt] at (-0.05,-0.05) {$s_1$};
\node[scale=0.8pt] at (1.05,-0.05) {$s_2$};
\node[scale=0.8pt] at (1/2,1/2*3^.5+0.05) {$s_3$};

\draw[line width=1pt,mondrianRed] (1/4,3^.5/20) -- (1/5,3^.5/10);
\draw[line width=1pt,mondrianOrange] (1/5,3^.5/10) -- (9/20,3^.5*7/20);
\draw[line width=1pt,mondrianYellow] (9/20,3^.5*7/20) -- (11/20,3^.5*7/20);
\draw[line width=1pt,mondrianGreen] (11/20,3^.5*7/20) -- (4/5,3^.5/10);
\draw[line width=1pt,mondrianBlue] (4/5,3^.5/10) -- (3/4,3^.5/20);
\draw[line width=1pt,mondrianPurple] (3/4,3^.5/20) -- (1/4,3^.5/20);

\node[scale=0.5pt] at (0.5,0.03) {$1\sim 2\succ 3$};
\node[scale=0.5pt] at (0.28,0.4) [rotate=60] {$1\sim 3\succ 2$};
\node[scale=0.5pt] at (0.72,0.4) [rotate=-60] {$2\sim 3\succ 1$};
\node[scale=0.5pt] at (0.13,0.07) [rotate=30] {$1\!\succ\!2\!\sim\!3$};
\node[scale=0.5pt] at (0.87,0.07) [rotate=-30] {$2\!\succ\!1\!\sim\!3$};
\node[scale=0.5pt] at (0.5,0.64) {$3\succ 1\sim 2$};

\end{tikzpicture}};

\node at (3.2,0.43) {
\begin{tikzpicture}[scale=4.3]
\draw[line width=.5pt,gray] (0,0) -- (1,0) -- (1/2,1/2*3^.5) -- (0,0);
\node[scale=0.8pt] at (-0.05,-0.05) {$s_1$};
\node[scale=0.8pt] at (1.05,-0.05) {$s_2$};
\node[scale=0.8pt] at (1/2,1/2*3^.5+0.05) {$s_3$};

\filldraw[mondrianGrey] (1/4,3^.5/20) -- (1/5,3^.5/10) -- (9/20,3^.5*7/20) -- (11/20,3^.5*7/20) -- (4/5,3^.5/10) -- (3/4,3^.5/20);

\node[scale=0.5pt] at (0.5,0.28) {$1\sim 2\sim 3$};

\end{tikzpicture}};

\node at (0.5,-0.7) {
\begin{tikzpicture}[scale=4.3]
\draw[line width=.5pt,gray] (0,0) -- (1,0) -- (1/2,1/2*3^.5) -- (0,0);
\node[scale=0.8pt] at (-0.05,-0.05) {$s_1$};
\node[scale=0.8pt] at (1.05,-0.05) {$s_2$};
\node[scale=0.8pt] at (1/2,1/2*3^.5+0.05) {$s_3$};

\node[scale=0.5pt] at (0.2,0.06) {$1\succ 2,3$};
\node[scale=0.5pt] at (0.8,0.06) {$2\succ 1,3$};
\node[scale=0.5pt] at (0.5,1/2*3^.5-0.2) {$3\succ 1,2$};

\filldraw[mondrianRed] (3/13,1/13*3^.5) circle (.4pt);
\filldraw[mondrianBlue] (10/13,1/13*3^.5) circle (.4pt);
\filldraw[mondrianYellow] (1/2,9/26*3^.5) circle (.4pt);
\end{tikzpicture}};

\node at (1.85,-0.7) {
\begin{tikzpicture}[scale=4.3]
\draw[line width=.5pt,gray] (0,0) -- (1,0) -- (1/2,1/2*3^.5) -- (0,0);
\draw[line width=1pt,mondrianOrange] (3/13,1/13*3^.5) -- (1/2,9/26*3^.5);
\draw[line width=1pt,mondrianGreen] (1/2,9/26*3^.5) -- (10/13,1/13*3^.5);
\draw[line width=1pt,mondrianPurple] (3/13,1/13*3^.5) -- (10/13,1/13*3^.5);
\node[scale=0.8pt] at (-0.05,-0.05) {$s_1$};
\node[scale=0.8pt] at (1.05,-0.05) {$s_2$};
\node[scale=0.8pt] at (1/2,1/2*3^.5+0.05) {$s_3$};

\node[scale=0.5pt] at (0.5,0.08) {$1\sim 2\succ 3$};
\node[scale=0.5pt] at (0.32,0.38) [rotate=60] {$1\sim 3\succ 2$};
\node[scale=0.5pt] at (0.68,0.38) [rotate=-60] {$2\sim 3\succ 1$};

\end{tikzpicture}};

\node at (3.2,-0.7) {
\begin{tikzpicture}[scale=4.3]
\draw[line width=.5pt,gray] (0,0) -- (1,0) -- (1/2,1/2*3^.5) -- (0,0);
\filldraw[mondrianGrey] (3/13,1/13*3^.5) -- (10/13,1/13*3^.5) --(1/2,9/26*3^.5) -- (3/13,1/13*3^.5);
\node[scale=0.8pt] at (-0.05,-0.05) {$s_1$};
\node[scale=0.8pt] at (1.05,-0.05) {$s_2$};
\node[scale=0.8pt] at (1/2,1/2*3^.5+0.05) {$s_3$};

\node[scale=0.5pt] at (0.5,0.28) {$1\sim 2\sim 3$};

\end{tikzpicture}};

\end{tikzpicture}
\end{center}
\vspace*{-5mm}
\caption{The top (bottom) panels show the rank correspondence $rank^{\mu_i}_i(\pi_i)$ for some regular (irregular) $\mu_i$ when $K_i=3$. The left, middle, and right panels correspond to zero, one, and two payoff indifferences respectively. In the top panels, each of the thirteen payoff orderings corresponds to a face of the permutahedron. In the bottom panels, each of the seven maximum-payoff orderings corresponds to a face of the simplex.}\label{rank3}
\vspace*{-3mm}
\end{figure}

Finally, the above construction captures standard best-reply correspondences when $\varepsilon=0$ so that $\hat{\mu}_i=(1,0,0)$. Then the set $V_i(\hat{\mu}_i)$ of permutations is $S_i$ and its convex hull is $\Sigma_i$. For this irregular $\hat{\mu}_i$, player $i$'s rank correspondence is simply the best-reply correspondence $\br_i(\pi_i)$, which maps each payoff vector to a face of the \textit{full} simplex with a dimension equal to the number of payoff ties.

$M$ equilibrium is based on ordinal comparisons, e.g. the ranks of the expected payoffs based on choices equal the ranks of the expected payoffs based on beliefs. This can be implemented by requiring, for $i\in N$, that $rank^{\mu_i}_i(\pi_i(\sigma_{-i}))=rank^{\mu_i}_i(\pi_i(\omega_i))$ for some regular $\mu_i$. But note that if this equality holds for \textit{some} regular $\mu_i$ it holds for \textit{all} regular $\mu_i$.
To reflect this $\mu_i$ independence, we define $rank_i$ (without superscript) as follows. For $\varepsilon\in(0,1)$, let $\mu_i(\varepsilon)=c_\varepsilon(1,\varepsilon,\ldots,\varepsilon^{K_i-1})$ with $c_\varepsilon=(1-\varepsilon)/(1-\varepsilon^{K_i})$ to ensure $\mu_i\in\Sigma_i$. Then $rank_i$ is player $i$'s rank correspondence for a specific value of $\varepsilon$, say $\varepsilon=\hf$. Let $rank$ denote the concatenations of the $rank_i$ and let $\mathcal{P}=\prod_{i\in N}\mathcal{P}_i$ where $\mathcal{P}_i$ denotes the face set of the permutahedron $P_i(\mu_i)$.
\begin{proposition}\label{rankchar}
For $r\in\mathcal{P}$, the closure of
\begin{displaymath}\label{MbyRank}
(M^c_r,M^b_r)\,=\,\{(\sigma,\omega)\in\Sigma_{int}\times\,\Omega\,|\,rank(\sigma)\,\subseteq\,rank(\pi(\omega))\,=\,rank(\pi(\sigma))\,=\,r\},
\end{displaymath}
if non-empty, is an $M$ equilibrium of $G$.
\end{proposition}
This proposition offers a systematic approach to computing $M$ equilibria based on payoff comparisons being strict or not. This approach is finite since there are only a finite number of such payoff comparisons, or equivalently, since the face set $\mathcal{P}$ is finite.\footnote{See Appendix A for an alternative finite method to compute $M$-equilibrium sets.} For the Mondriaan game, for instance, we first consider the cases of no ties, which correspond to the six vertices of $\mathcal{P}$, to obtain three full-dimensional $M$ sets. Then we consider the cases of one payoff tie, which correspond to the six edges of $\mathcal{P}$, to obtain two lower-dimensional $M$ sets. Finally, we consider the case of two payoff ties, which corresponds to $\mathcal{P}$ itself, but for the Mondriaan game there is no $M$ set for this case. More generally, there typically will not be an $M$ equilibrium for all $r\in\mathcal{P}$, but for any normal-form game there is at least one, see the next section.

The set of all $M$ equilibria is the closure of the disjoint union $\mathcal{M}(G)=\sqcup_{r\,\in\,\mathcal{P}}(M^c_r,M^b_r)$, i.e.
\begin{displaymath}
  \mathcal{M}(G)\,=\,\{(\sigma,\omega)\in\Sigma_{int}\times\,\Omega\,|\,rank(\sigma)\,\subseteq\,rank(\pi(\omega))=rank(\pi(\sigma))\}
\end{displaymath}
The conditions that define $\mathcal{M}(G)$ are intuitive generalizations of those underlying Nash equilibria. The perfect-maximization assumption, $\sigma\in\br(\pi(\omega))$, is replaced with monotonicity, i.e. $rank(\sigma)\subseteq rank(\pi(\omega))$, and the perfect-beliefs assumption, $\omega=\sigma$, is replaced with consequential unbiasedness, i.e. $rank(\pi(\omega))=rank(\pi(\sigma))$.

\subsection{$\mu$ Equilibrium: A Parametric Fixed-Point Model}
\label{sec:FPmodels}

The non-parametric nature of $M$ equilibrium makes it less amenable to calibration exercises. Here we introduce a closely related parametric fixed-point model, $\mu$ \textit{equilibrium}, that can be applied to (laboratory) data. It is based on the rank correspondences in \eqref{rankgen}. One difference is that in the previous section it did not matter which regular $\mu_i$ we used but in the parametric model the $\mu_i$ will serve as \q{rationality parameters,} as explained below.
For $i\in N$, let $\mu_i\in\Sigma_i$ be arbitrary. Let $\mu$ and let $rank^{\mu}$ denote the concatenations of the $\mu_i$ and $rank_i^{\mu_i}$.
\begin{definition}\label{ORE}
For $\mu\in\Sigma$, the pair $(\sigma,M^b)$ with $\sigma\in\Sigma$ and $M^b\subseteq\Omega$ is a {\bf\em $\boldsymbol{\mu}$ Equilibrium} of $G$ if
\be\label{OREdef}
\sigma\,\in\,rank^{\mu}(\pi(\omega))\,=\,rank^{\mu}(\pi(\sigma))
\ee
for all $\omega\in M^b$. Let $E_{\mu}(G)$ denote the set of all $\mu$ equilibria of $G$.
\end{definition}

\vspace*{-4mm}
\begin{proposition}\label{EG}
$E_{\mu}(G)$ is non-empty for any $\mu\in\Sigma$ and any normal-form game $G$.
\end{proposition}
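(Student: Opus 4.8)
The plan is to reduce the statement to Kakutani's fixed-point theorem applied to the self-correspondence on the choice simplex obtained by composing $\pi$ with $rank^{\mu}$, and then to exhibit one explicit belief profile showing the associated belief set is non-empty. Existence of the fixed point gives a choice profile $\sigma^c$ satisfying (\ref{OREdef}) for correct beliefs, and the belief-set requirement of Definition~\ref{ORE} is then handled by a short closure argument.

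First I would define $\Phi:\Sigma\to\Sigma$ by $\Phi(\sigma)=rank^{\mu}(\pi(\sigma))$, where $\pi(\sigma)$ is the expected-payoff profile obtained from the correct-belief profile $(\sigma,\ldots,\sigma)$. The domain $\Sigma=\prod_{i\in N}\Sigma_i$ is non-empty, compact and convex, being a product of simplices. Non-emptiness and convexity of the values of $\Phi$ are immediate from (\ref{rankgen}): for any $\pi_i$ at least one permutation of $\mu_i$ orders its entries consistently with the ranks of $\pi_i$ (take the permutation that sorts $\mu_i$ in the same order as $\pi_i$, so that $\sigma_{ik}>\sigma_{i\ell}$ forces $\pi_{ik}\geq\pi_{i\ell}$), and $rank^{\mu_i}_i$ is by definition the convex hull of the admissible permutations, hence a non-empty closed convex polytope.

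The only genuinely technical point is upper-hemicontinuity of $\Phi$, and this is where I expect the (modest) work to lie. Since each expected payoff $\pi_{ik}$ is multilinear in the opponents' probabilities, $\pi$ is continuous, so it suffices to show $rank^{\mu}$ has closed graph; the composition of a continuous map with an upper-hemicontinuous compact-valued correspondence is then upper-hemicontinuous. Closedness of the graph of $rank^{\mu}$ is established exactly as for the $rank$ and $BR$ correspondences in Section~\ref{sec:prelim}: the admissible permutations of $\mu_i$ are cut out by the weak inequalities $\pi_{ik}\geq\pi_{i\ell}$, and this admissible set can only \emph{enlarge} as entries of $\pi_i$ become tied. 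Hence along any sequence $\pi_i^{(m)}\to\pi_i$ the permutations admissible for the limit contain those admissible along the tail, and passing to convex hulls preserves this inclusion, giving the closed graph. Kakutani's (1941) theorem then yields $\sigma^c$ with $\sigma^c\in rank^{\mu}(\pi(\sigma^c))$, which is (\ref{OREdef}) for the correct-belief profile.

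Finally I would verify the belief set. With $\Sigma^b$ as in Definition~\ref{ORE}, the correct-belief profile $\hat\sigma^b=(\sigma^c,\ldots,\sigma^c)$ satisfies $\pi(\hat\sigma^b)=\pi(\sigma^c)$, so trivially $rank^{\mu}(\pi(\hat\sigma^b))=rank^{\mu}(\pi(\sigma^c))$; thus $\hat\sigma^b\in\Sigma^b$ and $\overline{\Sigma}^b\neq\emptyset$. It remains to confirm (\ref{OREdef}) for every $\sigma^b\in\overline{\Sigma}^b$: on $\Sigma^b$ we have $\sigma^c\in rank^{\mu}(\pi(\sigma^c))=rank^{\mu}(\pi(\sigma^b))$, and for a limit point $\sigma^b$ of $\Sigma^b$ the closed graph of $rank^{\mu}\circ\pi$ keeps $\sigma^c\in rank^{\mu}(\pi(\sigma^b))$. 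This completes the argument; the main obstacle, the upper-hemicontinuity of $rank^{\mu}$, is dispatched by the same tie-enlargement reasoning already used for $rank$ in the preliminaries.
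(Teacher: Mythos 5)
Your argument is correct and takes essentially the same route as the paper: the paper also applies Kakutani's theorem to the correspondence $\sigma\mapsto rank^{\mu}(\pi(\sigma))$, using that the image of $rank^{\mu}$ is closed and convex, to obtain $\sigma^c\in rank^{\mu}(\pi(\sigma^c))$, and then observes that $\overline{\Sigma}^b\ni\sigma^c$ so the belief set is non-empty. You merely spell out the upper-hemicontinuity (closed-graph) and closure steps that the paper leaves implicit.
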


\vspace*{-6mm}
\begin{corollary}
There exists an $M$ equilibrium for any normal-form game $G$.
\end{corollary}
\vspace*{-2mm}

\noindent The typical approach is to formulate a model with few parameters that reflect the degree of \q{rationality.} Parsimony is imposed so that these rationality parameters can be estimated from the data without overfitting and used to out-of-sample predict behavior in similar games. We next show $\mu$ equilibrium is ideally suited for this approach and that it offers computational advantages over the commonly used logit-QRE. Recall that the latter is defined as
\begin{equation}\label{logitQRE}
  \sigma_{ij}\,=\,\frac{\exp(\lambda\pi_{ij}(\sigma_{-i}))}{\sum_{k\,=\,1}^{K_i}\exp(\lambda\pi_{ik}(\sigma_{-i}))}
\end{equation}
for $i\in N$, $1\leq j\leq K_i$, see \cite{McKelveyPalfrey1995}. QRE is a fixed-point model based on rational expectations, i.e. choices on the left match beliefs on the right. The logistic formulation in \eqref{logitQRE} is not the only possibility. Any set of \textit{regular} quantal responses, $R_i:\field{R}^{K_i}\rightarrow\Sigma_i$, that are interior, continuous, strictly increasing, and monotone in expected payoffs can be used to define a QRE as a fixed-point: $\sigma=R(\pi(\sigma))$, where $R$ and $\pi$ denote the concatenations of players' quantal responses and payoffs. The infinite-dimensional space of all regular quantal responses is denoted $\mathscr{R}$. For a typical element of $\mathscr{R}$, the QRE profile can only be computed numerically as the fixed-point conditions involve transcendental functions, see e.g. \eqref{logitQRE}.

For $i\in N$, let $rank_i^\varepsilon$ denote player $i$'s rank correspondence when $\mu_i=\mu_i(\varepsilon)$ as defined in the previous section. When $\varepsilon=0$, player $i$'s rank correspondence coincides with the best-reply correspondence, i.e. $rank_i^0(\pi_i)=\br_i(\pi_i)$. When $\varepsilon$ tends to 1, player $i$'s rank correspondence limits to $rank^1_i(\pi_i)=(1/K_i,\ldots,1/K_i)$ irrespective of payoffs. In other words, behavior ranges from fully rational to completely random as the rationality parameter $\varepsilon$ varies between 0 and 1. The $\varepsilon$ parameter thus plays a similar role as the rationality parameter $\lambda$ in logit-QRE.

\begin{figure}[t]
\begin{center}

\begin{tikzpicture}[scale=3.5]
\draw[line width=.5pt,gray] (0,0) -- (1,0) -- (1,1) -- (0,1) -- (0,0);
\draw[line width=1pt,dashed] (1/2,0) -- (1/2,1);
\draw[line width=1pt] (0,1/2) -- (1,1/2);
\node[scale=0.6pt] at (-0.03,-0.06) {$0$};
\node[scale=0.6pt] at (-0.03,0.5) {$\hf$};
\node[scale=0.6pt] at (-0.03,1) {$1$};
\node[scale=0.6pt] at (0.5,-0.06) {$\hf$};
\node[scale=0.6pt] at (1,-0.06) {$1$};
\node[scale=0.6pt] at (0.2,0.1) {$\varepsilon=1$};
\node[scale=0.6pt] at (0.75,-0.06) {$p$};
\node[scale=0.6pt] at (-0.03,0.75) {$q$};

\node at (1.8,0.46) {
\begin{tikzpicture}[scale=3.5]
\draw[line width=.5pt,gray] (0,0) -- (1,0) -- (1,1) -- (0,1) -- (0,0);
\draw[line width=1pt,dashed] (2/3,0) -- (2/3,8/9) -- (1/3,8/9) -- (1/3,1);
\draw[line width=1pt] (0,2/3) -- (4/5,2/3) -- (4/5,1/3) -- (1,1/3);
\node[scale=0.6pt] at (-0.03,-0.06) {$0$};
\node[scale=0.6pt] at (-0.03,0.5) {$\hf$};
\node[scale=0.6pt] at (-0.03,1) {$1$};
\node[scale=0.6pt] at (0.5,-0.06) {$\hf$};
\node[scale=0.6pt] at (1,-0.06) {$1$};
\node[scale=0.6pt] at (0.2,0.1) {$\varepsilon=\hf$};
\node[scale=0.6pt] at (0.75,-0.06) {$p$};
\node[scale=0.6pt] at (-0.03,0.75) {$q$};
\end{tikzpicture}};

\node at (3.1,0.46) {
\begin{tikzpicture}[scale=3.5]
\draw[line width=.5pt,gray] (0,0) -- (1,0) -- (1,1) -- (0,1) -- (0,0);
\draw[line width=1pt,dashed] (4/5,0) -- (4/5,8/9) -- (1/5,8/9) -- (1/5,1);
\draw[line width=1pt] (0,4/5) -- (4/5,4/5) -- (4/5,1/5) -- (1,1/5);
\node[scale=0.6pt] at (-0.03,-0.06) {$0$};
\node[scale=0.6pt] at (-0.03,0.5) {$\hf$};
\node[scale=0.6pt] at (-0.03,1) {$1$};
\node[scale=0.6pt] at (0.5,-0.06) {$\hf$};
\node[scale=0.6pt] at (1,-0.06) {$1$};
\node[scale=0.6pt] at (0.2,0.1) {$\varepsilon=\deel{1}{4}$};
\node[scale=0.6pt] at (0.75,-0.06) {$p$};
\node[scale=0.6pt] at (-0.03,0.75) {$q$};
\end{tikzpicture}};

\node at (0.48,-0.87) {
\begin{tikzpicture}[scale=3.5]
\draw[line width=.5pt,gray] (0,0) -- (1,0) -- (1,1) -- (0,1) -- (0,0);
\draw[line width=1pt,dashed] (6/7,0) -- (6/7,8/9) -- (1/7,8/9) -- (1/7,1);
\draw[line width=1pt] (0,6/7) -- (4/5,6/7) -- (4/5,1/7) -- (1,1/7);
\node[scale=0.6pt] at (-0.03,-0.06) {$0$};
\node[scale=0.6pt] at (-0.03,0.5) {$\hf$};
\node[scale=0.6pt] at (-0.03,1) {$1$};
\node[scale=0.6pt] at (0.5,-0.06) {$\hf$};
\node[scale=0.6pt] at (1,-0.06) {$1$};
\node[scale=0.6pt] at (0.2,0.1) {$\varepsilon=\deel{1}{6}$};
\node[scale=0.6pt] at (0.75,-0.06) {$p$};
\node[scale=0.6pt] at (-0.03,0.75) {$q$};
\end{tikzpicture}};

\node at (1.78,-0.87) {
\begin{tikzpicture}[scale=3.5]
\draw[line width=.5pt,gray] (0,0) -- (1,0) -- (1,1) -- (0,1) -- (0,0);
\draw[line width=1pt,dashed] (8/9,0) -- (8/9,8/9) -- (1/9,8/9) -- (1/9,1);
\draw[line width=1pt] (0,8/9) -- (4/5,8/9) -- (4/5,1/9) -- (1,1/9);
\node[scale=0.6pt] at (-0.03,-0.06) {$0$};
\node[scale=0.6pt] at (-0.03,0.5) {$\hf$};
\node[scale=0.6pt] at (-0.03,1) {$1$};
\node[scale=0.6pt] at (0.5,-0.06) {$\hf$};
\node[scale=0.6pt] at (1,-0.06) {$1$};
\node[scale=0.6pt] at (0.2,0.1) {$\varepsilon=\deel{1}{8}$};
\node[scale=0.6pt] at (0.75,-0.06) {$p$};
\node[scale=0.6pt] at (-0.03,0.75) {$q$};
\end{tikzpicture}};

\node at (3.08,-0.87) {
\begin{tikzpicture}[scale=3.5]
\draw[line width=.5pt,gray] (0,0) -- (1,0) -- (1,1) -- (0,1) -- (0,0);
\draw[line width=1pt,dashed] (1,0) -- (1,8/9) -- (0,8/9) -- (0,1);
\draw[line width=1pt] (0,1) -- (4/5,1) -- (4/5,0) -- (1,0);
\node[scale=0.6pt] at (-0.03,-0.06) {$0$};
\node[scale=0.6pt] at (-0.03,0.5) {$\hf$};
\node[scale=0.6pt] at (-0.03,1) {$1$};
\node[scale=0.6pt] at (0.5,-0.06) {$\hf$};
\node[scale=0.6pt] at (1,-0.06) {$1$};
\node[scale=0.6pt] at (0.2,0.1) {$\varepsilon=0$};
\node[scale=0.6pt] at (0.75,-0.06) {$p$};
\node[scale=0.6pt] at (-0.03,0.75) {$q$};
\end{tikzpicture}};

\node at (0.56,-2.2) {
\begin{tikzpicture}[scale=3.5]
\draw[line width=.5pt,gray] (0,0) -- (1,0) -- (1,1) -- (0,1) -- (0,0);

\begin{axis}[scale=0.175, axis line style={draw=none}, tick style={draw=none}, ticks=none, xmin=0, xmax=1.2, ymin=0, ymax=1]
\addplot[thin, smooth] plot coordinates
            {
                (0,1)
                (0.0760436, 0.999883)
                (0.121351, 0.998872)
                (0.182805, 0.992879)
                (0.239426, 0.980621)
                (0.261676, 0.974261)
                (0.292481, 0.96439)
                (0.309935, 0.95837)
                (0.321106, 0.954391)
                (0.335729, 0.949064)
                (0.345955, 0.945274)
                (0.358668, 0.940505)
                (0.364309, 0.938374)
                (0.371951, 0.935476)
                (0.382511, 0.931458)
                (0.410921, 0.920649)
                (0.436396, 0.911094)
                (0.453556, 0.904813)
                (0.473075, 0.897895)
                (0.493087, 0.891132)
                (0.516138, 0.883871)
                (0.536948, 0.877921)
                (0.559995, 0.872143)
                (0.57812, 0.868297)
                (0.605533, 0.863826)
                (0.634131, 0.861112)
                (0.660017, 0.86054)
                (0.688908, 0.862121)
                (0.713482, 0.865314)
                (0.736683, 0.869837)
                (0.760893, 0.876027)
                (0.8,0.88888)
            };
\end{axis}

\begin{axis}[scale=0.175, axis line style={draw=none}, tick style={draw=none}, ticks=none, xmin=0, xmax=1.2, ymin=0, ymax=1]
\addplot[thin, smooth] plot coordinates
            {
                (0.5, 0.5)
                (0.525533, 0.510291)
                (0.541837, 0.51613)
                (0.565403, 0.523442)
                (0.580626, 0.527394)
                (0.617481, 0.534169)
                (0.631916, 0.535657)
                (0.653433, 0.536576)
                (0.689411, 0.534504)
                (0.71134, 0.530991)
                (0.733677, 0.525677)
                (0.756468, 0.518493)
                (0.779658, 0.509407)
                (0.803055, 0.498472)
                (0.833955, 0.481334)
                (0.863433, 0.462013)
                (0.890249, 0.441606)
                (0.913495, 0.421216)
                (0.932826, 0.401678)
                (0.951756, 0.379138)
                (0.965738, 0.358914)
                (0.983081, 0.324685)
                (0.991812, 0.296681)
                (0.994342, 0.284321)
                (1, 0)
            };
\end{axis}

\node[scale=0.6pt] at (-0.03,-0.06) {$0$};
\node[scale=0.6pt] at (-0.03,0.5) {$\hf$};
\node[scale=0.6pt] at (-0.03,1) {$1$};
\node[scale=0.6pt] at (0.5,-0.06) {$\hf$};
\node[scale=0.6pt] at (1,-0.06) {$1$};
\node[scale=0.6pt] at (0.75,-0.06) {$p$};
\node[scale=0.6pt] at (-0.03,0.75) {$q$};
\end{tikzpicture}};

\node at (1.78,-2.2) {
\begin{tikzpicture}[scale=3.5]
\draw[line width=.5pt,gray] (0,0) -- (1,0) -- (1,1) -- (0,1) -- (0,0);
\draw[line width=1pt,mondrianYellow] (1/2,1/2) -- (4/5,4/5) -- (4/5,1/2);
\draw[line width=1pt,mondrianBlue] (4/5,1/2) -- (4/5,1/5) -- (1,0);
\draw[line width=1pt,mondrianYellow] (4/5,8/9) -- (1/2,8/9);
\draw[line width=1pt,mondrianRed] (1/2,8/9) -- (1/9,8/9) -- (0,1);
\node[scale=0.6pt] at (-0.03,-0.06) {$0$};
\node[scale=0.6pt] at (-0.03,0.5) {$\hf$};
\node[scale=0.6pt] at (-0.03,1) {$1$};
\node[scale=0.6pt] at (0.5,-0.06) {$\hf$};
\node[scale=0.6pt] at (1,-0.06) {$1$};
\node[scale=0.6pt] at (0.75,-0.06) {$p$};
\node[scale=0.6pt] at (-0.03,0.75) {$q$};
\end{tikzpicture}};

\node at (3.08,-2.2) {
\begin{tikzpicture}[scale=3.5]
\draw[line width=.5pt,gray] (0,0) -- (1,0) -- (1,1) -- (0,1) -- (0,0);
\filldraw[fill=mondrianYellow,draw=mondrianYellow,opacity=1] (0,0) -- (4/5,0) -- (4/5,8/9) -- (0,8/9) -- (0,0);
\filldraw[fill=mondrianBlue,draw=mondrianBlue,opacity=1] (4/5,8/9) -- (1,8/9) -- (1,0) -- (4/5,0) -- (4/5,8/9);
\filldraw[fill=mondrianRed,draw=mondrianRed,opacity=1] (0,1) -- (4/5,1) -- (4/5,8/9) -- (0,8/9) -- (0,1);
\node[scale=0.6pt] at (-0.03,-0.06) {$0$};
\node[scale=0.6pt] at (-0.03,0.5) {$\hf$};
\node[scale=0.6pt] at (-0.03,1) {$1$};
\node[scale=0.6pt] at (0.5,-0.06) {$\hf$};
\node[scale=0.6pt] at (1,-0.06) {$1$};
\node[scale=0.6pt] at (0.75,-0.06) {$\omega$};
\node[scale=0.6pt] at (-0.03,0.75) {$v$};
\end{tikzpicture}};

\end{tikzpicture}

\end{center}
\vspace*{-6mm}
\caption{The top two rows show Column's (dashed) and Row's (solid) $rank^\varepsilon$ correspondences for the asymmetric game of chicken of Figure \ref{newAGCfig} for various levels of $\varepsilon$ (indicated in the panels). The curve in the bottom-middle panel shows the $\mu$ equilibria for $\varepsilon\in[0,1]$ and its color reflects the set of beliefs that support it, see the bottom-right panel. The bottom-left panel shows the logit-QRE, for which the supporting belief is only the logit-QRE itself.}\label{AGCfigBR}
\vspace*{-2mm}
\end{figure}

An advantage of $\mu$ equilibrium is that the fixed-point condition \eqref{OREdef} does not involve transcendental functions. Consider the asymmetric game of chicken at the top of Figure~\ref{newAGCfig}. Let $p$ ($q$) denote the probability with which Column (Row) chooses $A$ and let $\mu_R=\mu_C=(1,\varepsilon)/(1+\varepsilon)$ where $\varepsilon\in[0,1]$ is the rationality parameter. The $rank^\varepsilon$ correspondences
\bd
rank^{\varepsilon}_R(p)\,=\,\left\{
\begin{array}{ccl}
\frac{1}{1+\varepsilon} &\mbox{if}& p<\deel{4}{5}\\[1mm]
[\frac{\varepsilon}{1+\varepsilon},\frac{1}{1+\varepsilon}] &\mbox{if}& p=\deel{4}{5}\nonumber\\[1mm]
\frac{\varepsilon}{1+\varepsilon} &\mbox{if}& p>\deel{4}{5}\nonumber
\end{array}\right.
\,\,\,\,\,\,\,
\mbox{and}
\,\,\,\,\,\,\,
rank^{\varepsilon}_C(q)\,=\,\left\{
\begin{array}{ccl}
\frac{1}{1+\varepsilon} &\mbox{if}& q<\deel{8}{9}\nonumber\\[1mm]
[\frac{\varepsilon}{1+\varepsilon},\frac{1}{1+\varepsilon}] &\mbox{if}& q=\deel{8}{9}\nonumber\\[1mm]
\frac{\varepsilon}{1+\varepsilon} &\mbox{if}& q>\deel{8}{9}\nonumber
\end{array}\right.
\ed
are ``flat,'' i.e. $rank^{\varepsilon}_R(p)=rank^{\varepsilon}_C(q)=\hf$, when $\varepsilon=1$, and they limit to standard best responses when $\varepsilon=0$. The top two rows of Figure~\ref{AGCfigBR} show Column's and Row's rank correspondences when $\varepsilon$ decreases from 1 (top left) to 0 (bottom right). Their intersection typically consists of an odd number of points (1 or 3) except at $\varepsilon=\deel{1}{4}$, when it contains a component, and at $\varepsilon=\deel{1}{8}$, when a bifurcation occurs and the intersection contains another component.

In the bottom-middle panel of Figure \ref{AGCfigBR}, the piecewise-linear curve shows the $\mu$ equilibria for $0\leq\varepsilon\leq 1$. For comparison, the logit-QRE for this game are shown in the bottom-left panel. There are similarities, i.e. for low rationality levels there is a unique equilibrium close to $p=q=\hf$ (random behavior), a bifurcation occurs for intermediate rationality levels, and the limit points for high rationality levels correspond to Nash equilibria. Importantly, both the logit-QRE and $\mu$ equilibria belong to an $M$-choice set shown in the bottom-left panel of Figure \ref{newAGCfig}, a result we explain in the next section.

There are also differences, most notably that $\mu$ equilibria can be analytically computed and are supported by sets of beliefs. In the bottom-middle panel, the curve's color indicates the set of beliefs that support the $\mu$ equilibrium, see the bottom-right panel. In contrast, the rational-expectations assumption underlying logit-QRE means it is supported only by itself.

While a single-parameter model provides a parsimonious model that is easy to estimate, it would be more realistic to assume that players' rationality parameters differ, i.e. $\varepsilon_R\neq\varepsilon_C$. This begs the question what choices occur under a heterogeneous $\mu$ equilibrium. It is straightforward to verify that any profile $(p,q)$ that belongs to an $M$-choice set is a $\mu$ equilibrium for some choice of $(\varepsilon_R,\varepsilon_C)$:
\begin{displaymath}
  (p,q)\,=\,\left\{\begin{array}{ccl}
  (\frac{1}{1+\varepsilon_C},\frac{\varepsilon_R}{1+\varepsilon_R}) & \text{if} & \varepsilon_C\leq\deel{1}{4}\\[1mm]
  (\frac{\varepsilon_C}{1+\varepsilon_C},\frac{1}{1+\varepsilon_R}) & \text{if} & \varepsilon_R\leq\deel{1}{8}\\[1mm]
  (\frac{1}{1+\varepsilon_C},\frac{1}{1+\varepsilon_R}) & \text{if} & \varepsilon_C\geq\deel{1}{4},\,\varepsilon_R\geq\deel{1}{8}\end{array}\right.
\end{displaymath}
For example, when $\varepsilon_C$ is varied between $\deel{1}{4}$ and 1 and $\varepsilon_R$ between $\deel{1}{8}$ and 1, the set of profiles in the bottom row produce the rectangle $[\hf,\deel{4}{5}]\times[\hf,\deel{8}{9}]$, which corresponds to the yellow $M$ set in Figure \ref{newAGCfig}. Likewise, the top and middle row correspond to the blue and red $M$-choice sets respectively. In other words, by varying $(\varepsilon_R,\varepsilon_C)$, the $\mu$ equilibria \q{fill out} the $M$-choice sets.

\subsection{$M$ Equilibrium as a Meta Theory}
\label{sec:Meta}

$M$-choice sets unify equilibria from various parametric models in the refinement and behavioral-game-theory literature.
\begin{proposition}\label{prop:meta}
For almost all $\sigma\in\Sigma_{int}$ the following statements are equivalent:
\begin{itemize}\addtolength{\itemsep}{-2mm}
\vspace*{-2mm}
\item[1.] $\sigma$ belongs to an $M$-choice set of $G$.
\item[2.] $\sigma$ is a $\mu$-equilibrium of $G$ for some regular $\mu\in\Sigma$.
\item[3.] $\sigma$ is an $\varepsilon$-proper equilibrium of $G$ for some $\varepsilon\in(0,1)$.
\item[4.] $\sigma$ is a Quantal Response Equilibrium of $G$ for some regular quantal response $R\in\mathscr{R}$.\footnote{Assuming generic games with non-thick indifference curves, see the proof in Appendix B.}
\end{itemize}
\end{proposition}
\noindent This equivalence result is surprising given the different origins of the models -- $\mu$ equilibria are Nash equilibria on restricted strategy sets, $\varepsilon$-proper equilibria are defined by inequalities, and QRE result from adding perturbations to expected payoffs. We briefly discuss these approaches.

\cite{Selten1975} was the first to consider robustness of Nash equilibria when players may tremble and all strategies have some small chance of being played. One way to formalize this is to restrict player $i$'s strategy space to an interior simplex, e.g. $\Sigma_i(\varepsilon)$ with vertices that are permutations of $\hat{\mu}_i=(1,\varepsilon,\ldots,\varepsilon)/(1+(K_i-1)\varepsilon)$ as in the bottom panels of Figure \ref{rank3}. The limit of Nash equilibria of such perturbed games then defines a (trembling-hand) perfect equilibrium. Regular $\mu$ equilibrium restricts strategy sets to an interior permutahedron, e.g. $P_i(\varepsilon)$ with vertices that are permutations of $\mu_i(\varepsilon)=c_\varepsilon(1,\varepsilon,\varepsilon^2,\ldots,\varepsilon^{K_i-1})$ as in the top panels of Figure \ref{rank3}. To illustrate the difference, consider the game in the left panel of Figure \ref{refinement}, which has three Nash equilibria: $(R,R)$, $(B,B)$, and $(Y,Y)$. Only the last two are perfect and only the last one is proper. It is readily verified that on the interior simplex both $(\varepsilon,1,\varepsilon)/(1+2\varepsilon)$ and $(\varepsilon,\varepsilon,1)/(1+2\varepsilon)$ are Nash equilibria. For $\varepsilon\in[0,1]$, these equilibria are shown by the blue lines in the middle panel of Figure \ref{refinement}. On the permutahedron, only $(\varepsilon^2,\varepsilon,1)/(1+\varepsilon+\varepsilon^2)$ is a Nash equilibrium as reflected by the green curve in the middle panel. Finally, the black curve in the middle panel shows the logit-QRE.

The middle panel shows the unique $M$-choice set in green (the right panel shows the supporting belief set). Note that $\mu$ equilibria and logit-QRE belong to this set, but not the (path leading to the) perfect equilibrium $(B,B)$: it is a best response to interior profiles that satisfy $\sigma_B\geq\sigma_R\geq\sigma_Y$, but the resulting expected payoffs are ordered $\pi_B>\pi_Y>\pi_R$. Monotonicity implies more costly mistakes are less likely, whence $(B,B)$ cannot be part of any $M$-choice set.

\begin{figure}[t]
\begin{center}

\begin{tikzpicture}[scale=3.6]
\draw[line width=.5pt,gray] (0,0) -- (1,0) -- (1/2,1/2*3^.5) -- (0,0);
\node[scale=0.8pt] at (-0.05,-0.05) {$R$};
\node[scale=0.8pt] at (1.05,-0.05) {$B$};
\node[scale=0.8pt] at (1/2,1/2*3^.5+0.05) {$Y$};
\filldraw[fill=mondrianGreen,draw=mondrianGreen,opacity=1] (1/2,1/2*3^.5) -- (1/2,1/6*3^.5) -- (3/4,1/4*3^.5) -- (1/2,1/2*3^.5);
\begin{axis}[scale=0.175, axis line style={draw=none}, tick style={draw=none}, ticks=none, xmin=0, xmax=1.2, ymin=0, ymax=1]
\addplot[thin, green, smooth] plot coordinates
            {
               (0.5, 0.866025)
               (0.540541, 0.780203)
               (0.564516, 0.698408)
               (0.57554, 0.62304)
               (0.576923, 0.555144)
               (0.571429, 0.494872)
               (0.561224, 0.44185)
               (0.547945, 0.395445)
               (0.532787, 0.354928)
               (0.516605, 0.319567)
               (0.5, 0.288675)
            };
\end{axis}
\draw[line width=1.25pt,mondrianBlue] (0.5, 0.288675) -- (0.5, 0.866025);
\begin{axis}[scale=0.175, axis line style={draw=none}, tick style={draw=none}, ticks=none, xmin=0, xmax=1.2, ymin=0, ymax=1]
\addplot[thin, smooth] plot coordinates
            {
              (0.5, 0.288675)
              (0.52836, 0.30822)
              (0.556731, 0.336564)
              (0.582273, 0.376725)
              (0.600668, 0.431172)
              (0.60759, 0.499086)
              (0.602123, 0.573303)
              (0.588429, 0.642609)
              (0.572271, 0.699427)
              (0.557272, 0.742616)
              (0.54471, 0.774424)
              (0.534663, 0.79767)
              (0.526797, 0.814704)
              (0.520696, 0.82727)
              (0.515985, 0.836608)
              (0.512353, 0.843597)
              (0.509552, 0.848862)
              (0.507392, 0.85285)
              (0.505725, 0.855886)
              (0.504438, 0.858205)
              (0.503442, 0.859983)
            };
\end{axis}
\draw[line width=1.25pt,mondrianBlue](0.5, 0.288675) -- (1,0);

\node at (2,0.43) {
\begin{tikzpicture}[scale=3.6]
\draw[line width=.5pt,gray] (0,0) -- (1,0) -- (1/2,1/2*3^.5) -- (0,0);
\node[scale=0.8pt] at (-0.05,-0.05) {$R$};
\node[scale=0.8pt] at (1.05,-0.05) {$B$};
\node[scale=0.8pt] at (1/2,1/2*3^.5+0.05) {$Y$};
\filldraw[fill=mondrianGreen,draw=mondrianGreen,opacity=1] (1/4,1/4*3^.5) -- (1,0) -- (1/2,1/2*3^.5) -- (1/4,1/4*3^.5);
\end{tikzpicture}};

\node at (-0.9,0.35) {
\begin{tabular}{c|ccc}
& $R$ & $B$ & $Y$ \\ \cline{1-4}
\multicolumn{1}{r}{\rule{0pt}{5mm}$R$} & \multicolumn{1}{|c}{$1$, $1$} & \multicolumn{1}{c}{$1$, $1$} & \multicolumn{1}{c}{$1$, $0$} \\
\multicolumn{1}{r}{$B$} & \multicolumn{1}{|c}{$1$, $1$} & \multicolumn{1}{c}{$2$, $2$} & \multicolumn{1}{c}{$2$, $2$} \\
\multicolumn{1}{r}{$Y$} & \multicolumn{1}{|c}{$0$, $1$} & \multicolumn{1}{c}{$2$, $2$} & \multicolumn{1}{c}{$3$, $3$} \\
\end{tabular}};

\end{tikzpicture}
\end{center}
\vspace*{-6mm}

\caption{The top-left panel shows a game from \cite{BBD1991} with three Nash equilibria: $(R,R)$, $(B,B)$, and $(Y,Y)$. Only the last two are perfect and only the last one is proper.}\label{refinement}
\vspace*{-4mm}
\end{figure}

\cite{Myerson1978} provides a \q{logical} definition for perfectness: $\sigma$ is perfect if it is the limit of $\varepsilon$-perfect equilibria, i.e. $\sigma(\varepsilon)\in\Sigma_{int}$ such that $\pi_{ij}(\sigma(\varepsilon))<\pi_{ik}(\sigma(\varepsilon))$ implies $\sigma_{ij}(\varepsilon)\leq\varepsilon$. \cite{Myerson1978} also proposes \textit{properness} as an alternative refinement:\footnote{See \cite{vanDamme1996} for an excellent survey of the refinement literature.} $\sigma$ is proper if it is the limit of $\varepsilon$-proper equilibria, i.e. $\sigma(\varepsilon)\in\Sigma_{int}$ such that $\pi_{ij}(\sigma(\varepsilon))<\pi_{ik}(\sigma(\varepsilon))$ implies $\sigma_{ij}(\varepsilon)\leq\varepsilon\sigma_{ik}(\varepsilon)$. Nash equilibria on the permutahedron $P_i(\varepsilon)$ with vertices that are permutations of $\mu_i(\varepsilon)$ defined above are examples of $\varepsilon$-proper equilibria. The $M$-choice sets unify all $\varepsilon$-proper equilibria and contain all proper Nash equilibria.\footnote{Proposition \ref{prop:meta} shows that $\varepsilon$-proper equilibria are contained in some $M$-choice set, which is closed, and, hence contains the limit $\sigma$.}  Interestingly, $M$ equilibrium also suggests a refinement criterium that falls between perfectness and properness, see Section \ref{sec:Nash}.

\citeauthor{Harsanyi1973}'s (\citeyear{Harsanyi1973}) work on \textit{purification,} i.e. interpreting pure and mixed Nash equilibria as limits of pure-strategy Bayes Nash equilibria of games with randomly perturbed payoffs, underlies \citeauthor{McKelveyPalfrey1995}'s (\citeyear{McKelveyPalfrey1995}) Quantal Response Equilibrium.\footnote{\cite{Harsanyi1973} considers separate perturbations for each of the $\prod_i |S_i|$ payoffs while QRE is based on perturbations of the $\sum_i|S_i|$ expected payoffs.} \q{Integrating out} the random perturbations yields \q{quantal responses,} $R_i:\field{R}^{K_i}\rightarrow\Sigma_i$ for $i\in N$, that map expected payoffs to choice probabilities. The quantal responses are regular if the payoff perturbations are interchangeable, which is a weaker condition than i.i.d., see \cite{goeree2005}. The space of all regular quantal responses, $\mathscr{R}$, is infinite dimensional. Yet the $M$-choice set unifies all the fixed-points that can be obtained using different elements of this space (e.g. logit, probit, etc.).

Importantly, $M$ equilibrium is \textit{not} an example of a QRE as it is set-valued while QRE is point-valued. And the union of different QREs is \textit{not} a QRE. For example, suppose one data set fits logit-QRE but not probit-QRE while for another data set the opposite is true. Then both data sets can be explained by $M$ equilibrium but not by QRE. (See also Section \ref{subsec:noLogit}.)

The number of parameters underlying the different models in Proposition \ref{prop:meta} differs sharply (e.g. one for $\varepsilon$-proper and infinitely many for QRE). In contrast, $M$ equilibrium is parameter free, i.e. in the first item of Proposition \ref{prop:meta} there is no \q{for-some-parameter-value(s)} qualifier. Interestingly, while the computation of Nash equilibria on permutahedra, or of QRE, or of $\varepsilon$-proper equilibria can be cumbersome, determining the set of all of them follows straightforwardly by matching rankings of payoffs and choice probabilities.

The main distinction, however, between $M$ equilibrium and the other models is that it relaxes the stringent rational-expectations condition of correct beliefs. Just like $M$-choice sets extend \citeauthor{Selten1975}'s (\citeyear{Selten1975}) idea of ``trembles'' by allowing for sizeable deviations, $M$-belief sets generalize \citeauthor{Selten1975}'s notion of robustness -- rather than requiring a Nash-equilibrium profile to be a best response to an infinitesimal path of beliefs converging to it, elements in the $M$-choice set are better responses to an entire set of beliefs that contains the $M$-choice set.

\subsection{Profect Nash Equilibria in $\overline{\mathcal{M}}(G)$}
\label{sec:Nash}

Monotonicity implies more costly mistakes are less likely, which holds for proper equilibria but not necessarily for perfect equilibria. As a result, any proper equilibrium is contained in some $M$-choice set while perfect equilibria are not necessarily part of $\overline{\mathcal{M}}(G)$.
However, monotonicity does \textit{not} require the more stringent condition that more costly mistakes are \textit{infinitely} less likely as implied by properness. Consider the game in the top-left panel of Figure \ref{tab:BBD}, which has three pure-strategy Nash equilibria: $(R,R)$, $(B,B)$, and $(Y,Y)$, and a mixed-strategy Nash equilibrium $\sigma=(0.05,0.9,0.05)$. None of the pure-strategy equilibria are proper, including $(R,R)$.\footnote{To see that $(R,R)$ is not proper, note that a $Y$ tremble is far worse than a $B$ tremble so the trembling choice profile would have to be $\sigma(\varepsilon)=(1,\varepsilon,\varepsilon^2)$ appropriately normalized. But then the expected payoff of $B$ exceeds that of $R$ for sufficiently small $\varepsilon$.} The only proper equilibrium is the mixed-strategy equilibrium, which is supported by a \textit{single} belief, i.e. the mixed-strategy profile itself. Intuitively, this makes the proper equilibrium empirically irrelevant given that $(R,R)$ is supported by a large full-dimensional belief set, see the top-right panel.
In this example, $(R,R)$ is a \textit{profect equilibrium}.\footnote{In Latin, the noun \q{profect} is the doublet of profit and \q{to profect} means to benefit, profit, or advance, see \url{https://en.wiktionary.org/wiki/profect}.}

\begin{figure}[t]
\begin{center}

\begin{tikzpicture}[scale=3.6]
\draw[line width=.5pt,gray] (0,0) -- (1,0) -- (1/2,1/2*3^.5) -- (0,0);
\node[scale=0.8pt] at (-0.05,-0.05) {$R$};
\node[scale=0.8pt] at (1.05,-0.05) {$B$};
\node[scale=0.8pt] at (1/2,1/2*3^.5+0.05) {$Y$};
\filldraw[fill=mondrianRed,draw=mondrianRed,opacity=1] (0,0) -- (1/2,1/6*3^.5) -- (1/2,3^.5/74) -- (0,0);
\filldraw[fill=mondrianBlue,draw=mondrianBlue,opacity=1] (1/2,0) -- (1/2,3^.5/74) -- (37/40,3^.5/40) -- (18/19,0) -- (1/2,0);
\draw[line width=1pt,black] (0,0) -- (37/40,3^.5/40);
\draw[line width=1pt,black] ((37/40,3^.5/40) -- (18/19,0);

\node at (2,0.43) {
\begin{tikzpicture}[scale=3.6]
\draw[line width=.5pt,gray] (0,0) -- (1,0) -- (1/2,1/2*3^.5) -- (0,0);
\node[scale=0.8pt] at (-0.05,-0.05) {$R$};
\node[scale=0.8pt] at (1.05,-0.05) {$B$};
\node[scale=0.8pt] at (1/2,1/2*3^.5+0.05) {$Y$};
\filldraw[fill=mondrianRed,draw=mondrianRed,opacity=1] (0,0) -- (37/40,3^.5/40) -- (1/4,1/4*3^.5) -- (0,0);
\filldraw[fill=mondrianBlue,draw=mondrianBlue,opacity=1] (0,0) -- (37/40,3^.5/40) -- (18/19,0) -- (0,0);
\draw[line width=1pt,black] (0,0) -- (37/40,3^.5/40);
\draw[line width=1pt,black] ((37/40,3^.5/40) -- (18/19,0);
\end{tikzpicture}};

\node at (-0.9,0.35) {
\begin{tabular}{c|ccc}
& $R$ & $B$ & $Y$ \\ \cline{1-4}
\multicolumn{1}{r}{\rule{0pt}{5mm}$R$} & \multicolumn{1}{|c}{$1$, $1$} & \multicolumn{1}{c}{$0$, $1$} & \multicolumn{1}{c}{$1$, $-17$} \\
\multicolumn{1}{r}{$B$} & \multicolumn{1}{|c}{$1$, $0$} & \multicolumn{1}{c}{$1$, $1$} & \multicolumn{1}{c}{$-17$, $1$} \\
\multicolumn{1}{r}{$Y$} & \multicolumn{1}{|c}{$-17$, $1$} & \multicolumn{1}{c}{$1$, $-17$} & \multicolumn{1}{c}{$1$, $1$} \\
\end{tabular}};

\node at (0.5,0.43-1.05) {
\begin{tikzpicture}[scale=3.6]
\draw[line width=.5pt,gray] (0,0) -- (1,0) -- (1/2,1/2*3^.5) -- (0,0);
\node[scale=0.8pt] at (-0.05,-0.05) {$R$};
\node[scale=0.8pt] at (1.05,-0.05) {$B$};
\node[scale=0.8pt] at (1/2,1/2*3^.5+0.05) {$Y$};
\filldraw[fill=mondrianRed,draw=mondrianRed,opacity=1] (0,0) -- (1/2,1/6*3^.5) -- (1/2,0) -- (0,0);
\filldraw[fill=mondrianCyan,draw=mondrianCyan,opacity=1] (1,0) -- (3/4,1/4*3^.5)  -- (3/5,1/5*3^.5) -- (1,0);
\end{tikzpicture}};

\node at (2,0.43-1.05) {
\begin{tikzpicture}[scale=3.6]
\draw[line width=.5pt,gray] (0,0) -- (1,0) -- (1/2,1/2*3^.5) -- (0,0);
\node[scale=0.8pt] at (-0.05,-0.05) {$R$};
\node[scale=0.8pt] at (1.05,-0.05) {$B$};
\node[scale=0.8pt] at (1/2,1/2*3^.5+0.05) {$Y$};
\filldraw[fill=mondrianRed,draw=mondrianRed,opacity=1] (0,0) -- (1,0) -- (1/4,1/4*3^.5) -- (0,0);
\filldraw[fill=mondrianCyan,draw=mondrianCyan,opacity=1] (1,0) -- (1/3,1/3*3^.5)  -- (1/2,1/2*3^.5) -- (1,0);
\end{tikzpicture}};

\node at (-0.9,0.35-1.05) {
\begin{tabular}{c|ccc}
& $R$ & $B$ & $Y$ \\ \cline{1-4}
\multicolumn{1}{r}{\rule{0pt}{5mm}$R$} & \multicolumn{1}{|c}{$2$, $2$} & \multicolumn{1}{c}{$1$, $1$} & \multicolumn{1}{c}{$0$, $0$} \\
\multicolumn{1}{r}{$B$} & \multicolumn{1}{|c}{$1$, $1$} & \multicolumn{1}{c}{$1$, $1$} & \multicolumn{1}{c}{$1$, $1$} \\
\multicolumn{1}{r}{$Y$} & \multicolumn{1}{|c}{$0$, $0$} & \multicolumn{1}{c}{$1$, $1$} & \multicolumn{1}{c}{$1$, $1$} \\
\end{tabular}};

\end{tikzpicture}
\end{center}
\vspace*{-5mm}

\caption{For the game in the top-left panel, $(R,R)$, $(B,B)$, and $(Y,Y)$ are Nash equilibria. None are proper, only the mixed equilibrium $\sigma=(0.05,0.9,0.05)$ is, while $(R,R)$ is a \textit{profect} Nash equilibrium. For the game in the bottom-left panel (from \citeauthor{vanDamme1996}, \citeyear{vanDamme1996}) the simplex edge with $\sigma_R=0$ forms a Nash-equilibrium component. Only $(R,R)$ and $(B,B)$ are proper. The middle and right panels show the $M$-choice and $M$-belief sets respectively.}\label{tab:BBD}
\vspace*{-2mm}
\end{figure}

\begin{definition}\label{def:profect}
$\sigma\in\Sigma_{int}$ is an $\boldsymbol{\varepsilon}$\textbf{-profect equilibrium} if $\pi_{ij}(\sigma)<\pi_{ik}(\sigma)\Rightarrow\sigma_{ij}<\min(\varepsilon,\sigma_{ik})$ for all $i\in N$, $1\leq j,k\leq|S_i|$.
$\sigma\in\Sigma$ is a \textbf{profect equilibrium} iff there exist $\varepsilon_k\in(0,1)$ and $\sigma_k\in\Sigma_{int}$ for $k\in\field{N}$ such that $\lim_{k\rightarrow\infty}\varepsilon_k=0$ and $\sigma_k$ is $\varepsilon_k$-profect and $\lim_{k\rightarrow\infty}\sigma_k=\sigma$.
\end{definition}
Profectness implies more costly mistakes are less likely but not necessarily infinitely so. Let $\mathcal{N}(G)$ be the set of Nash equilibria of $G$ and $\mathcal{N}_s(G)$ the subset satisfying selection criterion $s$.
\begin{proposition}\label{prop:incl}
For any $G$, $\,\mathcal{N}(G)\,\supseteq\,\mathcal{N}_{per\!f\hspace*{-0.5pt}ect}(G)\,\supseteq\,\mathcal{N}_{pro\hspace*{-0.7pt}f\hspace*{-0.7pt}ect}(G)\,\supseteq\,\mathcal{N}_{proper}(G)$. The inclusions may be strict.
\end{proposition}
Existence of a profect equilibrium thus follows from existence of a proper equilibrium.

Profect equilibria can be found geometrically using the methods of Section \ref{sec:rankChar}. For $r\in\mathcal{P}$, let $\overline{M}^c_r$ denote the $M$-equilibrium choice set. Profiles in this set are better responses in the sense that choice probabilities are ranked the same as expected payoffs but they are not necessarily best responses. To find the best responses in $\overline{M}^c_r$, recall from Section \ref{sec:rankChar} that they lie in a face of the simplex. This simplex face, denoted $\Sigma_r$, is determined by $r$ via the best-reply mapping, i.e. $\Sigma_r=\br(r)$:
\begin{displaymath}
\mathcal{N}_{pro\hspace*{-0.7pt}f\hspace*{-0.7pt}ect}(G)\,=\,\sqcup_{r\,\in\,\mathcal{P}}\,\overline{M}^c_r\cap\Sigma_r
\end{displaymath}
For example, for the game in the top-left panel of Figure \ref{tab:BBD}, the red $M$-choice set contains the profect (and perfect, but not proper) equilibrium $(R,R)$. For the blue $M$-choice set payoffs are ranked $\pi_B>\pi_R>\pi_Y$ but its intersection with the $B$ vertex is empty. For a totally-mixed strategy, the simplex' face selected by the best-reply correspondence is the simplex itself. Hence, $\sigma=(0.05,0.9,0.05)$ is also a profect (and proper) equilibrium.

Finally, when a payoff indifference curve lies in the simplex' boundary, non-perfect Nash equilibria may be added on closing $M$ sets. For example, for the game in the bottom-left panel of Figure \ref{tab:BBD}, any profile $(0,\sigma,1-\sigma)$ is a Nash equilibrium and those with $\sigma\geq\hf$ are included when closing the cyan-colored $M$ set. Of these, only $\sigma=(0,1,0)$ is profect (and proper).

\subsection{Robust $M$ Equilibria}
\label{sec:robust}

Denote by $\Gamma_i=\field{R}^{|S|}$ the space of payoffs of player $i$ and let $\Gamma=\prod_{i=1}^n\Gamma_i$. Let $G(\varepsilon)\subset\Gamma$ denote the set of games that result by perturbing any of the payoffs of $G$ by at most $\varepsilon>0$.
\begin{definition}\label{def:BS}
$(\sigma,\omega)\in\Sigma\times\Omega$ is a {\bf\em robust profile} of $G$ if there exists $\varepsilon>0$ such that $(\sigma,\omega)\in\overline{\mathcal{M}}(G')$ for all $G'\in G(\varepsilon)$. $M\in\overline{\mathcal{M}}(G)$ is {\bf\em robust} if its interior consists of robust profiles.
\end{definition}
Robustness is akin to \citeauthor{WuJiang1962}'s (\citeyear{WuJiang1962}) notion of \q{essentiality,} which is also based on arbitrary payoff perturbations. A Nash profile is essential if the perturbed game has a Nash equilibrium ``close'' to it. We relax the best-response assumption to monotonicity, but sharpen the requirement that the profile belongs to an $M$ equilibrium of the perturbed game (rather than being ``close'' to an $M$ equilibrium of the perturbed game).

The proof of the next proposition provides a simple genericity condition under which full-dimensional $M$ sets exist.
\begin{proposition}\label{gen1}
There exists an open and dense set $\mathcal{O}\subset\Gamma$ such that for $G\in\mathcal{O}$:
\begin{itemize}\addtolength{\itemsep}{-2mm}
\vspace*{-2mm}
\item[(i)] A full dimensional, colorable, and robust $M$ equilibrium exists;
\item[(ii)] The measure of each $M$-choice set is bounded by $1/\prod_{i\in N}K_i!$;
\item[(iii)] The total measure of all $M$-choice sets combined is bounded by $1/\max_{i\in N}K_i!$;
\item[(iv)] In contrast, an $M$-belief set may have full measure.
\end{itemize}
\end{proposition}
\noindent The fact that the measure of any full-dimensional $M$-choice set falls factorially fast with the number of players and the number of strategies shows $M$ equilibrium is falsifiable. Moreover, combining the results in Propositions \ref{prop:meta} and \ref{gen1}, implies that the set of all regular QRE, i.e. all fixed-points obtained by varying over an infinite-dimensional space of regular quantal responses, is bounded by $1/\max_{i\in N}K_i!$. In other words, regular QRE is falsifiable.

This finding contrasts with \citeauthor{HaileHortacsuKosenok2008}'s (\citeyear{HaileHortacsuKosenok2008}) critique about the (lack of) empirical content of QRE. Their construction, though, relies on creating anti-monotonic behavior by assuming payoff perturbations that are not interchangeable.\footnote{To illustrate, consider a simple decision task in which subjects are asked to choose between option $A$ that pays \$0 and option $B$ that pays \$1. Now suppose these payoffs are perturbed by adding the mean-zero random variables $\varepsilon_A=2$ with probability $p\in(0,1)$ and $\varepsilon_A=-2p/(1-p)$ with probability $1-p$, while $\varepsilon_B=0$ for sure. Then the inferior option $A$ is chosen with (arbitrary) probability $p\in(0,1)$. See also \cite{goeree2005}.} When perturbations are interchangeable, monotonicity holds and renders the critique by \cite{HaileHortacsuKosenok2008} vacuous.

\begin{figure}[t]
\begin{center}

\begin{tikzpicture}[scale=3.6]
\draw[line width=.5pt,gray] (0,0) -- (1,0) -- (1/2,1/2*3^.5) -- (0,0);
\node[scale=0.8pt] at (-0.05,-0.05) {$R$};
\node[scale=0.8pt] at (1.05,-0.05) {$B$};
\node[scale=0.8pt] at (1/2,1/2*3^.5+0.05) {$Y$};
\filldraw[fill=mondrianGrey,draw=mondrianGrey,opacity=1] (1/4,1/4*3^.5) -- (1/2,1/2*3^.5) -- (3/4,1/4*3^.5) -- (1/2,1/6*3^.5) -- (1/4,1/4*3^.5);
\filldraw[fill=mondrianGrey,draw=mondrianGrey,opacity=1] (0,0) -- (1/2,0)  -- (1/4,1/12*3^.5) -- (0,0);
\draw[line width=1.5pt,black] (1/2,0) -- (1/8,1/8*3^.5);

\node at (2,0.43) {
\begin{tikzpicture}[scale=3.6]
\draw[line width=.5pt,gray] (0,0) -- (1,0) -- (1/2,1/2*3^.5) -- (0,0);
\node[scale=0.8pt] at (-0.05,-0.05) {$R$};
\node[scale=0.8pt] at (1.05,-0.05) {$B$};
\node[scale=0.8pt] at (1/2,1/2*3^.5+0.05) {$Y$};
\filldraw[fill=mondrianGrey,draw=mondrianGrey,opacity=1] (1/8,1/8*3^.5) -- (1/2,0) -- (1,0) -- (1/2,1/2*3^.5) -- (1/8,1/8*3^.5);
\filldraw[fill=mondrianGrey,draw=mondrianGrey,opacity=1] (0,0) -- (1/2,0)  -- (1/8,1/8*3^.5) -- (0,0);
\draw[line width=1.5pt,black] (1/2,0) -- (1/8,1/8*3^.5);
\end{tikzpicture}};

\node at (-0.9,0.35) {
\begin{tabular}{c|ccc}
& $R$ & $B$ & $Y$ \\ \cline{1-4}
\multicolumn{1}{r}{\rule{0pt}{5mm}$R$} & \multicolumn{1}{|c}{$3$, $3$} & \multicolumn{1}{c}{$2$, $3$} & \multicolumn{1}{c}{$1$, $2$} \\
\multicolumn{1}{r}{$B$} & \multicolumn{1}{|c}{$3$, $2$} & \multicolumn{1}{c}{$2$, $2$} & \multicolumn{1}{c}{$1$, $3$} \\
\multicolumn{1}{r}{$Y$} & \multicolumn{1}{|c}{$2$, $1$} & \multicolumn{1}{c}{$3$, $1$} & \multicolumn{1}{c}{$4$, $4$} \\
\end{tabular}};

\end{tikzpicture}
\end{center}
\vspace*{-5mm}
\caption{A non-generic game with \q{cloned} strategies $R$ and $B$ that has full-dimensional but non-colorable $M$ equilibrium sets.}\label{fig:nonGeneric}
\vspace*{-3mm}
\end{figure}

Generically, full-dimensionality, colorability, and robustness are equivalent -- they all reflect the same \textit{strict} ordering of choices and expected payoffs.  In non-generic games, however, full-dimensionality does not imply colorability. Consider, for instance, the game in the left panel of Figure \ref{fig:nonGeneric}, for which $\pi_R=\pi_B>\pi_Y$ for $\sigma_R>\sigma_Y+\hf$ and  $\pi_R=\pi_B<\pi_Y$ for $\sigma_R<\sigma_Y+\hf$. The two $M$-choice sets shown in grey are full-dimensional but not colorable since the ordering of choice probabilities varies over the $M$-choice set and does not everywhere match the ordering of expected payoffs. (Another non-colorable $M$-choice set is shown by the black line and consists of a component of proper equilibria that satisfy $\pi_R=\pi_B=\pi_Y$.)

Nor does colorability imply full-dimensionality in non-generic games. For example, for a matching-pennies game, $\overline{\mathcal{M}}(G)$ consists of a single colorable profile: the unique Nash equilibrium in which both players randomize uniformly. Perturbing the matching-pennies game results in a game with a full-dimensional $M$ set that includes the barycenter, so the unique $M$ equilibrium is robust.
Likewise, one of the non-colorable $M$-choice sets for the game in the left panel of Figure \ref{fig:nonGeneric} contains a robust subset of diagonal profiles $(\alpha,\alpha,1-2\alpha)$, for $0\leq\alpha\leq\deel{1}{3}$, that are colorable as the equality in payoffs is matched by equality of choice probabilities.

\section{An Experimental Test of $M$ Equilibrium}
\label{sec:exp-test}

We report the results from a series of experiments to illustrate how $M$ equilibrium provides a lens through which to better understand strategic behavior in games. The experiments shared some common features. Sixteen participants joined each experimental session. Subjects were first given instructions in a power-point presentation (read aloud). Subjects played two-player matrix games with two or three possible choices. Each game was played for 8 rounds ($2\times 2$ games) or 15 rounds ($3\times 3$ games). In each round, participants were randomly rematched with a different participant in a perfect stranger protocol to ensure that two players never played the same game together more than once. This feature was made explicitly clear to subjects.

On the screen that displayed the payoff matrix, subjects could select a row of the matrix\footnote{All subjects played by choosing a row. For asymmetric games this simply means that Column played as Row with a transposed payoff matrix.} and submit their beliefs about their opponent's choice in terms of ``percentage chances.'' Belief elicitation was incentivized using a generalization of a method proposed by \citet{wilson2018}, which is an implementation of \citeauthor{hossain2013}'s (\citeyear{hossain2013}) \emph{binarized scoring rule} (BSR). The BSR is incentive compatible for general risk-preferences and, hence, avoids issues of risk-aversion that plague other scoring rules (e.g. quadratic scoring rule).\footnote{See \citet{schotter2014} and \citet{schlag2015} for discussions of belief elicitation methods.} The method operationalizes the BSR for binary-choice settings in a way that is simple to explain.\footnote{See Appendix C for details and a generalization of the method to games with more than two choices.}

After all subjects had submitted their choices and beliefs, they were shown their opponent's choice, and the results of the belief elicitation task. To avoid hedging, their payoff in each round was randomly selected to be either their payoff from the game or their payoff from the belief elicitation task. At the end of the experiment, subjects were informed about their total earnings and paid in cash.

\subsection{Comparative Statics in Asymmetric Matching Pennies Games}
\label{subsec:AMP}

Besides providing a better fit to data from laboratory experiments, behavioral-game-theory models often yield more sensible comparative statics than Nash equilibrium. In many games, Nash predictions are unaffected by (non-critical) changes in game parameters, e.g. the number of sellers in a Bertrand game or the effort-cost in a minimum-effort coordination game,\footnote{See e.g. Table 2 in \cite{AndersonGoereeHolt2002} for more examples.} while such changes can have a dramatic effect on observed behavior (e.g. \citeauthor{CapraGoereeGomezHolt1999}, \citeyear{CapraGoereeGomezHolt1999}). The experiments reported in this section test comparative statics when not only the Nash equilibrium but also the behavioral-game-theory models predict identical behavior across games.

Consider the $2\times2$ asymmetric matching pennies (AMP) games in Table~\ref{AMPexp}, which are all derived from the same parametric form shown in the top-left panel. The use of this common parametric form guarantees that the best response structure is identical across games. Hence, all games in this family have the same unique mixed-strategy Nash equilibrium. Let $p$ and $q$ denote the probability with which Column and Row choose $A$ respectively. The Nash equilibrium for any of the games in Table \ref{AMPexp} is $(p^*,q^*)=(\deel{1}{2},\deel{1}{6})$.\footnote{The study of such games has been instrumental in the development of alternative models of strategic behavior. See, for instance, \citet{ochs1995, erev1998, goeree2003, selten2008}, as well as the comment by \citet{brunner2011} and the reply by \citet{selten2011}.} The parametric form we use also implies that logit-QRE predictions are the same in all games. Since logit-QRE is based on expected payoff differences, i.e. $\sigma_{ij}/\sigma_{ik}=\exp(\lambda(\pi_{ij}-\pi_{ik}))$ for $1\leq j,k\leq K_i$, $i\in N$, adding the same amount to the payoff numbers of both options (the $W$, $X$, $Y$, and $Z$ in the top-left panel of Table~\ref{AMPexp}) ensures that logit-QRE predictions are unaffected. The $M$-choice and $M$-belief sets, which are also the same for all games, are shown in the left and right panels of Figure~\ref{expdataAMP} respectively, along with the observed choice and belief averages for each of the five games.\footnote{The graphs in Figure~\ref{expdataAMP} use the data from all rounds in the experiment. Restricting the data to only the second half of the experiment, or just the last round, results in a very similar picture and results.}

\begin{table}[t]
\begin{center}
\begin{tabular}{lcl}
\begin{tabular}{ccc}
\multicolumn{1}{r}{AMP}& \multicolumn{1}{|c}{$A$} & \multicolumn{1}{c}{$B$}  \\ \cline{1-3}
\multicolumn{1}{r}{\rule{0pt}{4mm}$A$} & \multicolumn{1}{|c}{$X+10$, $Z$} & \multicolumn{1}{c}{$W$, $Z+50$} \\
\multicolumn{1}{r}{\rule{0pt}{4mm}$B$} & \multicolumn{1}{|c}{$X$, $Y+10$} & \multicolumn{1}{c}{$W+10$, $Y$} \\
\end{tabular}

&\hspace*{1cm}&

\begin{tabular}{ccc}
\multicolumn{1}{r}{AMP1}& \multicolumn{1}{|c}{$A$} & \multicolumn{1}{c}{$B$}  \\ \cline{1-3}
\multicolumn{1}{r}{\rule{0pt}{4mm}$A$} & \multicolumn{1}{|c}{$20$, $10$} & \multicolumn{1}{c}{$10$, $60$} \\
\multicolumn{1}{r}{\rule{0pt}{4mm}$B$} & \multicolumn{1}{|c}{$10$, $20$} & \multicolumn{1}{c}{$20$, $10$} \\
\end{tabular}\\[8mm]

\begin{tabular}{ccc}
\multicolumn{1}{r}{AMP2}& \multicolumn{1}{|c}{$A$} & \multicolumn{1}{c}{$B$}  \\ \cline{1-3}
\multicolumn{1}{r}{\rule{0pt}{4mm}$A$} & \multicolumn{1}{|c}{$60$, $10$} & \multicolumn{1}{c}{$10$, $60$} \\
\multicolumn{1}{r}{\rule{0pt}{4mm}$B$} & \multicolumn{1}{|c}{$50$, $20$} & \multicolumn{1}{c}{$20$, $10$} \\
\end{tabular}
&\hspace*{1cm}&
\begin{tabular}{ccc}
\multicolumn{1}{r}{AMP3}& \multicolumn{1}{|c}{$A$} & \multicolumn{1}{c}{$B$}  \\ \cline{1-3}
\multicolumn{1}{r}{\rule{0pt}{4mm}$A$} & \multicolumn{1}{|c}{$60$, $50$} & \multicolumn{1}{c}{$10$, $100$} \\
\multicolumn{1}{r}{\rule{0pt}{4mm}$B$} & \multicolumn{1}{|c}{$50$, $20$} & \multicolumn{1}{c}{$20$, $10$} \\
\end{tabular}\\[8mm]

\begin{tabular}{ccc}
\multicolumn{1}{r}{AMP4}& \multicolumn{1}{|c}{$A$} & \multicolumn{1}{c}{$B$}  \\ \cline{1-3}
\multicolumn{1}{r}{\rule{0pt}{4mm}$A$} & \multicolumn{1}{|c}{$60$, $50$} & \multicolumn{1}{c}{$10$, $100$} \\
\multicolumn{1}{r}{\rule{0pt}{4mm}$B$} & \multicolumn{1}{|c}{$50$, $60$} & \multicolumn{1}{c}{$20$, $50$} \\
\end{tabular}
&\hspace*{1cm}&
\begin{tabular}{ccc}
\multicolumn{1}{r}{AMP5}& \multicolumn{1}{|c}{$A$} & \multicolumn{1}{c}{$B$}  \\ \cline{1-3}
\multicolumn{1}{r}{\rule{0pt}{4mm}$A$} & \multicolumn{1}{|c}{$60$, $50$} & \multicolumn{1}{c}{$50$, $100$} \\
\multicolumn{1}{r}{\rule{0pt}{4mm}$B$} & \multicolumn{1}{|c}{$50$, $60$} & \multicolumn{1}{c}{$60$, $50$} \\
\end{tabular}
\end{tabular}
\vspace*{-1mm}
\caption{Asymmetric Matching Pennies (AMP) games.}\label{AMPexp}
\end{center}
\vspace*{-7mm}
\end{table}

\begin{result}
In  the AMP games:
\begin{itemize}\addtolength{\itemsep}{-2mm}
\vspace*{-2mm}
\item[(i)] Subjects' choices differ across games.
\item[(ii)] Subjects' beliefs differ across games.
\item[(iii)] Subjects' beliefs are not correct in any of the games.
\item[(iv)] Subjects' choices are heterogeneous.
\item[(v)] Subjects' beliefs are heterogeneous.
\end{itemize}
\end{result}

\noindent Support for these findings can be found in Appendix D.  They are obviously at odds with Nash and logit-QRE, which both predict that choices and beliefs are identical and homogeneous across all five games and that beliefs match choices (in addition, the Nash-equilibrium prediction, $(p^*,q^*)=(\deel{1}{2},\deel{1}{6})$ is far from observed choice averages.). Heterogeneous Quantal Response Equilibrium (HQRE), a generalization of QRE proposed by \citet{rogers2009}, allows for heterogeneous choices by assuming that players' rationality parameters are draws from commonly-known distributions. Players' beliefs are thus assumed to be correct. The HQRE model is at odds with findings (i)-(iii) and (v). \citet{rogers2009} also propose a variant called Subjective Quantal Response Equilibrium (SQRE), that allows for heterogeneity in choices and beliefs. SQRE assumes that players' have subjective beliefs about the distributions that others' rationality parameters are drawn from. It is therefore not at odds with findings (iii)-(v), but since it is based on logit quantal responses, it predicts no change in choices and beliefs across games, which is refuted by findings (i) and (ii). Finally, the level-$k$ and Cognitive Hierarchy models, which are based on best responses, yield identical predictions for choices and beliefs across games, contradicting findings (i) and (ii).

\begin{figure}[t]
\begin{center}
\begin{tikzpicture}[scale=5]
\draw[line width=.5pt,gray] (0,0) -- (1,0) -- (1,1) -- (0,1) -- (0,0);
\node[scale=0.8pt] at (-0.03,-0.06) {$0$};
\node[scale=0.8pt] at (-0.03,0.5) {$\hf$};
\node[scale=0.8pt] at (-0.03,1) {$1$};
\node[scale=0.8pt] at (0.5,-0.06) {$\hf$};
\node[scale=0.8pt] at (1,-0.06) {$1$};
\node[scale=0.8pt] at (0.75,-0.06) {$p$};
\node[scale=0.8pt] at (-0.03,0.75) {$q$};
\filldraw[fill=mondrianYellow,draw=mondrianYellow,opacity=1] (0,1/6) -- (1/2,1/6) -- (1/2,1/2) -- (0,1/2) -- (0,1/6);
\draw[line width=1pt,black] (0.22,0.46) circle [x radius=0.0358,y radius=0.0432]; \node[black,scale=0.68pt] at (0.22,0.46) {1};
\draw[line width=1pt,black] (0.26,0.48) circle [x radius=0.0381,y radius=0.0433]; \node[black,scale=0.68pt] at (0.26,0.48) {2};
\draw[line width=1pt,black] (0.32,0.49) circle [x radius=0.0403,y radius=0.0433]; \node[black,scale=0.68pt] at (0.32,0.49) {3};
\draw[line width=1pt,black] (0.23,0.36) circle [x radius=0.0366,y radius=0.0416]; \node[black,scale=0.68pt] at (0.23,0.36) {4};
\draw[line width=1pt,black] (0.22,0.39) circle [x radius=0.0357,y radius=0.0423]; \node[black,scale=0.68pt] at (0.22,0.39) {5};

\node at (2,0.47) {
\begin{tikzpicture}[scale=5]
\draw[line width=.5pt,gray] (0,0) -- (1,0) -- (1,1) -- (0,1) -- (0,0);
\node[scale=0.8pt] at (-0.03,-0.06) {$0$};
\node[scale=0.8pt] at (-0.03,0.5) {$\hf$};
\node[scale=0.8pt] at (-0.03,1) {$1$};
\node[scale=0.8pt] at (0.5,-0.06) {$\hf$};
\node[scale=0.8pt] at (1,-0.06) {$1$};
\node[scale=0.8pt] at (0.75,-0.06) {$p$};
\node[scale=0.8pt] at (-0.03,0.75) {$q$};
\filldraw[fill=mondrianYellow,draw=mondrianYellow,opacity=1] (0,1/6) -- (1/2,1/6) -- (1/2,1) -- (0,1) -- (0,1/6);
\draw[line width=1pt,black] (0.3528,0.5123) circle [x radius=0.0241,y radius=0.0248]; \node[black,scale=0.68pt] at (0.3528,0.5123) {1};
\draw[line width=1pt,black] (0.4285,0.5598) circle [x radius=0.0251,y radius=0.0238]; \node[black,scale=0.68pt] at (0.4285,0.5598) {2};
\draw[line width=1pt,black] (0.4678,0.5810) circle [x radius=0.0261,y radius=0.0238]; \node[black,scale=0.68pt] at (0.4678,0.5810) {3};
\draw[line width=1pt,black] (0.4091,0.5263) circle [x radius=0.0245,y radius=0.0258]; \node[black,scale=0.68pt] at (0.4091,0.5263) {4};
\draw[line width=1pt,black] (0.3390,0.5032) circle [x radius=0.0241,y radius=0.0240]; \node[black,scale=0.68pt] at (0.3390,0.5032) {5};

\end{tikzpicture}};
\end{tikzpicture}
\vspace*{-2mm}
\caption{Average choices (left) and beliefs (right) in each of the five AMP games. The ellipses represent 95\% confidence regions for sample means. The yellow areas represent the $M$-choice (left) and $M$-belief (right) sets.}\label{expdataAMP}
\end{center}
\vspace*{-5mm}
\end{figure}

\begin{result}
The findings in Result 1 contradict the comparative statics predictions of Nash equilibrium, QRE, HQRE, SQRE, level-$k$, and Cognitive Hierarchy, but accord well with $M$-equilibrium predictions.
\end{result}
\begin{support}
Set-valued $M$ equilibrium easily accommodates the variations in choices and beliefs across games as well as the fact that beliefs differ from choics.  As can be seen from Figure~\ref{expdataAMP}, average choices and beliefs fall within the $M$-equilibrium sets.

$M$ equilibrium relies on the assumption of monotonicity. This posits that subjects will choose the alternative that is best, given their beliefs, more often. In Table \ref{best_responses} we report the fraction of best responses given stated beliefs for the five games. As can be seen, these range between .55 and .75, in accordance with monotonicity.
\end{support}

\begin{table}[t]
\begin{center}
\begin{tabular}{ccccccc}
\multicolumn{1}{r}{AMP game}& \multicolumn{1}{|c}{1} & \multicolumn{1}{c}{2} & \multicolumn{1}{c}{3}  & \multicolumn{1}{c}{4}  & \multicolumn{1}{c}{5} & \multicolumn{1}{c}{$average$}  \\ \cline{1-7}
\multicolumn{1}{r}{Row} & \multicolumn{1}{|c}{.61} & \multicolumn{1}{c}{.56}  & \multicolumn{1}{c}{.55} & \multicolumn{1}{c}{.60} & \multicolumn{1}{c}{.58} & \multicolumn{1}{c}{.58}\\
\multicolumn{1}{r}{Column} & \multicolumn{1}{|c}{.75} & \multicolumn{1}{c}{.73}  & \multicolumn{1}{c}{.66} & \multicolumn{1}{c}{.72} & \multicolumn{1}{c}{.75} & \multicolumn{1}{c}{.72}\\
\multicolumn{1}{r}{average} & \multicolumn{1}{|c}{.68} & \multicolumn{1}{c}{.65}  & \multicolumn{1}{c}{.61} & \multicolumn{1}{c}{.66} & \multicolumn{1}{c}{.67} & \multicolumn{1}{c}{.65}\\
\end{tabular}
\vspace*{0mm}
\caption{Fraction of best responses for each role in each of the five AMP games.}\label{best_responses}
\end{center}
\vspace*{-5mm}
\end{table}

\subsection{$M$-Equilibrium Multiplicity and Mis-Coordination}

Next, consider the two symmetric $3\times3$ games in the first row of Table~\ref{3by3exp}. As in the AMP experiments, the two games share a common structure. In particular, one game can be obtained from the other by adding a constant to each column (row) of Row's (Column's) payoffs. As a result, the two games have the same best-response structure and the same unique pure-strategy Nash equilibrium $(Y,Y)$. Note that the games are dominance solvable. One might therefore expect Nash equilibrium to be a good predictor of behavior in these games, and that there would be little difference in how individuals play the two games.

Subjects' behavior does not support these predictions. The top-left (top-right) panel of Figure~\ref{expdataDS} shows average choices (beliefs) in the two experiments. As in the $2\times2$ AMP experiments, observed choices and beliefs are far from Nash and far from each other. These results may seem puzzling when viewed from the perspective of Nash equilibrium, but $M$ equilibrium allows us to glean some intuition for these findings.

\begin{table}[t]
\vspace*{10mm}
\begin{center}
\begin{tabular}{c|ccc}
DS1 &  $R$ & $B$ & $Y$ \\ \cline{1-4}
\multicolumn{1}{r}{\rule{0pt}{5mm}$R$} & \multicolumn{1}{|c}{$80$, $80$} & \multicolumn{1}{c}{$30$, $160$} & \multicolumn{1}{c}{$20$, $10$} \\
\multicolumn{1}{r}{$B$} & \multicolumn{1}{|c}{$160$, $30$} & \multicolumn{1}{c}{$30$, $30$} & \multicolumn{1}{c}{$10$, $40$} \\
\multicolumn{1}{r}{$Y$} & \multicolumn{1}{|c}{$10$, $20$} & \multicolumn{1}{c}{$40$, $10$} & \multicolumn{1}{c}{$30$, $30$} \\
\end{tabular}
\hspace*{10mm}
\begin{tabular}{c|ccc}
DS2 & $R$ & $B$ & $Y$ \\ \cline{1-4}
\multicolumn{1}{r}{\rule{0pt}{5mm}$R$} & \multicolumn{1}{|c}{$75$, $75$} & \multicolumn{1}{c}{$5$, $155$} & \multicolumn{1}{c}{$190$, $5$} \\
\multicolumn{1}{r}{$B$} & \multicolumn{1}{|c}{$155$, $5$} & \multicolumn{1}{c}{$5$, $5$} & \multicolumn{1}{c}{$180$, $15$} \\
\multicolumn{1}{r}{$Y$} & \multicolumn{1}{|c}{$5$, $190$} & \multicolumn{1}{c}{$15$, $180$} & \multicolumn{1}{c}{$200$, $200$} \\
\end{tabular}
\vspace*{5mm}\\
\begin{tabular}{c|ccc}
 NL & $R$ & $B$ & $Y$ \\ \cline{1-4}
\multicolumn{1}{r}{\rule{0pt}{5mm}$R$} & \multicolumn{1}{|c}{$70$, $70$} & \multicolumn{1}{c}{$60$, $500$} & \multicolumn{1}{c}{$10$, $50$} \\
\multicolumn{1}{r}{$B$} & \multicolumn{1}{|c}{$500$, $60$} & \multicolumn{1}{c}{$40$, $40$} & \multicolumn{1}{c}{$0$, $61$} \\
\multicolumn{1}{r}{$Y$} & \multicolumn{1}{|c}{$50$, $10$} & \multicolumn{1}{c}{$61$, $0$} & \multicolumn{1}{c}{$30$, $30$} \\
\end{tabular}
\hspace*{10mm}
\begin{tabular}{c|ccc}
KM & $R$ & $B$ & $Y$ \\ \cline{1-4}
\multicolumn{1}{r}{\rule{0pt}{5mm}$R$} & \multicolumn{1}{|c}{$120$, $120$} & \multicolumn{1}{c}{$90$, $60$} & \multicolumn{1}{c}{$60$, $120$} \\
\multicolumn{1}{r}{$B$} & \multicolumn{1}{|c}{$60$, $90$} & \multicolumn{1}{c}{$90$, $90$} & \multicolumn{1}{c}{$60$, $90$} \\
\multicolumn{1}{r}{$Y$} & \multicolumn{1}{|c}{$120$, $60$} & \multicolumn{1}{c}{$90$, $60$} & \multicolumn{1}{c}{$30$, $30$} \\
\end{tabular}
\vspace*{-1mm}
\caption{Symmetric $3\times 3$ games.}\label{3by3exp}
\end{center}
\vspace*{-7mm}
\end{table}

In contrast to the unique Nash equilibrium prediction, these games have multiple $M$ equilibria. In particular, there are four $M$-choice sets and supporting $M$-belief sets, see the left and and right panels of Figure~\ref{expdataDS} respectively. The Nash equilibrium, $(Y,Y)$, is part of the yellow set, which is supported by a small $M$-belief set. The blue $M$-choice set is furthest away from Nash but is supported by the largest $M$-belief set. The blue $M$ equilibrium captures most of the observed behavior in DS1. However, in DS2, subjects do not appear to play any of the four $M$-equilibria, at least based on aggregate behavior.

Interestingly, multiplicity of $M$ equilibria introduces issues of strategic coordination, akin to those in standard game theory when games have multiple Nash equilibria. It is not enough for subjects to figure out what strategies constitute $M$-equilibria, they need to play the same one.  Whether a specific $M$ equilibrium is more salient and therefore facilitates strategic coordination depends on features of the game.

\begin{figure}[p]
\begin{center}

\begin{tikzpicture}[scale=5]
\draw[line width=.5pt,gray] (0,0) -- (1,0) -- (1/2,1/2*3^.5) -- (0,0);
\node[scale=0.8pt] at (-0.03,-0.03) {$R$};
\node[scale=0.8pt] at (1.03,-0.03) {$B$};
\node[scale=0.8pt] at (1/2,1/2*3^.5+0.03) {$Y$};

\filldraw[fill=mondrianBlue,draw=mondrianBlue,opacity=1] (1/2,1/6*3^.5) -- (1/2,0) -- (7/8,0) -- (13/16,1/16*3^.5) -- (1/2,1/6*3^.5);
\filldraw[fill=mondrianCyan,draw=mondrianCyan,opacity=1] (13/16,1/16*3^.5) -- (9/10,1/30*3^.5) -- (15/22,5/22*3^.5) -- (21/32,7/31*3^.5) -- (13/16,1/16*3^.5);
\filldraw[fill=mondrianGreen,draw=mondrianGreen,opacity=1] (15/22,5/22*3^.5) -- (12/17,4/17*3^.5) -- (1/2,2/5*3^.5) -- (1/2,7/18*3^.5) -- (15/22,5/22*3^.5);
\filldraw[fill=mondrianYellow,draw=mondrianYellow,opacity=1] (1/2,2/5*3^.5) -- (4/9,4/9*3^.5) -- (1/2,1/2*3^.5) -- (1/2,2/5*3^.5);

\filldraw [fill=red,draw=red,opacity=1] (.6260, .1317) circle (0.03);
\filldraw [fill=green,draw=green,opacity=1] (.5667, .4150) circle (0.03);

\node at (2,1/4*3^.5) {
\begin{tikzpicture}[scale=5]
\draw[line width=.5pt,gray] (0,0) -- (1,0) -- (1/2,1/2*3^.5) -- (0,0);
\node[scale=0.8pt] at (-0.03,-0.03) {$R$};
\node[scale=0.8pt] at (1.03,-0.03) {$B$};
\node[scale=0.8pt] at (1/2,1/2*3^.5+0.03) {$Y$};

\filldraw[fill=mondrianBlue,draw=mondrianBlue,opacity=1] (0,0) -- (7/8,0) -- (7/16,7/16*3^.5) -- (0,0);
\filldraw[fill=mondrianCyan,draw=mondrianCyan,opacity=1] (7/8,0) --(15/16,0) -- (15/34,15/34*3^.5) -- (7/16,7/16*3^.5) -- (7/8,0);
\filldraw[fill=mondrianGreen,draw=mondrianGreen,opacity=1] (15/16,0) -- (1,0) -- (4/9,4/9*3^.5) -- (15/34,15/34*3^.5) -- (15/16,0);
\filldraw[fill=mondrianYellow,draw=mondrianYellow,opacity=1] (1,0) -- (1/2,1/2*3^.5) -- (4/9,4/9*3^.5) -- (1,0);

\filldraw [fill=red,draw=red,opacity=1] (.5324, .1641) circle (.03);
\filldraw [fill=green,draw=green,opacity=1] (.5129, .3717) circle (.03);

\end{tikzpicture}};

\node at (1/2,-.6) {
\begin{tikzpicture}[scale=5]
\draw[line width=.5pt,gray] (0,0) -- (1,0) -- (1/2,1/2*3^.5) -- (0,0);
\node[scale=0.8pt] at (-0.03,-0.03) {$R$};
\node[scale=0.8pt] at (1.03,-0.03) {$B$};
\node[scale=0.8pt] at (1/2,1/2*3^.5+0.03) {$Y$};

\filldraw[fill=mondrianBlue,draw=mondrianBlue,opacity=1] (1/2,1/6*3^.5) -- (1/2,0) -- (7/8,0) -- (13/16,1/16*3^.5) -- (1/2,1/6*3^.5);
\filldraw[fill=mondrianCyan,draw=mondrianCyan,opacity=1] (13/16,1/16*3^.5) -- (9/10,1/30*3^.5) -- (15/22,5/22*3^.5) -- (21/32,7/31*3^.5) -- (13/16,1/16*3^.5);
\filldraw[fill=mondrianGreen,draw=mondrianGreen,opacity=1] (15/22,5/22*3^.5) -- (12/17,4/17*3^.5) -- (1/2,2/5*3^.5) -- (1/2,7/18*3^.5) -- (15/22,5/22*3^.5);
\filldraw[fill=mondrianYellow,draw=mondrianYellow,opacity=1] (1/2,2/5*3^.5) -- (4/9,4/9*3^.5) -- (1/2,1/2*3^.5) -- (1/2,2/5*3^.5);

\filldraw [fill=cyan,draw=cyan,opacity=1] (.6250, .1580) circle (0.31^.5*.075);
\filldraw [fill=green,draw=green,opacity=1] (.7059, .2547) circle (0.07^.5*.075);
\filldraw [fill=red,draw=red,opacity=1] (.7254, .1098) circle (0.15^.5*.075);
\filldraw [fill=gray,draw=gray,opacity=1] (.6860, .0604) circle (0.09^.5*.075);
\filldraw [fill=magenta,draw=magenta,opacity=1] (.6316, .5470) circle (0.04^.5*.075);
\filldraw [fill=orange,draw=orange,opacity=1] (.5914, .0931) circle (0.19^.5*.075);
\filldraw [fill=pink,draw=pink,opacity=1] (.5000, .0241) circle (0.15^.5*.075);
\end{tikzpicture}};

\node at (2,-.6) {
\begin{tikzpicture}[scale=5]
\draw[line width=.5pt,gray] (0,0) -- (1,0) -- (1/2,1/2*3^.5) -- (0,0);
\node[scale=0.8pt] at (-0.03,-0.03) {$R$};
\node[scale=0.8pt] at (1.03,-0.03) {$B$};
\node[scale=0.8pt] at (1/2,1/2*3^.5+0.03) {$Y$};

\filldraw[fill=mondrianBlue,draw=mondrianBlue,opacity=1] (0,0) -- (7/8,0) -- (7/16,7/16*3^.5) -- (0,0);
\filldraw[fill=mondrianCyan,draw=mondrianCyan,opacity=1] (7/8,0) --(15/16,0) -- (15/34,15/34*3^.5) -- (7/16,7/16*3^.5) -- (7/8,0);
\filldraw[fill=mondrianGreen,draw=mondrianGreen,opacity=1] (15/16,0) -- (1,0) -- (4/9,4/9*3^.5) -- (15/34,15/34*3^.5) -- (15/16,0);
\filldraw[fill=mondrianYellow,draw=mondrianYellow,opacity=1] (1,0) -- (1/2,1/2*3^.5) -- (4/9,4/9*3^.5) -- (1,0);

\filldraw [fill=cyan,draw=cyan,opacity=1] (.5875, .1485) circle (0.31^.5*.075);
\filldraw [fill=green,draw=green,opacity=1] (.7338, .2741) circle (0.07^.5*.075);
\filldraw [fill=red,draw=red,opacity=1] (.7740, .0453) circle (0.15^.5*.075);
\filldraw [fill=gray,draw=gray,opacity=1] (.1931, .0900) circle (0.09^.5*.075);
\filldraw [fill=magenta,draw=magenta,opacity=1] (.5268, .5853) circle (0.04^.5*.075);
\filldraw [fill=orange,draw=orange,opacity=1] (.4449, .2669) circle (0.19^.5*.075);
\filldraw [fill=pink,draw=pink,opacity=1] (.4031, .0618) circle (0.15^.5*.075);

\end{tikzpicture}};

\node at (1/2,-1.7) {
\begin{tikzpicture}[scale=5]
\draw[line width=.5pt,gray] (0,0) -- (1,0) -- (1/2,1/2*3^.5) -- (0,0);
\node[scale=0.8pt] at (-0.03,-0.03) {$R$};
\node[scale=0.8pt] at (1.03,-0.03) {$B$};
\node[scale=0.8pt] at (1/2,1/2*3^.5+0.03) {$Y$};

\filldraw[fill=mondrianBlue,draw=mondrianBlue,opacity=1] (1/2,1/6*3^.5) -- (1/2,0) -- (7/8,0) -- (13/16,1/16*3^.5) -- (1/2,1/6*3^.5);
\filldraw[fill=mondrianCyan,draw=mondrianCyan,opacity=1] (13/16,1/16*3^.5) -- (9/10,1/30*3^.5) -- (15/22,5/22*3^.5) -- (21/32,7/31*3^.5) -- (13/16,1/16*3^.5);
\filldraw[fill=mondrianGreen,draw=mondrianGreen,opacity=1] (15/22,5/22*3^.5) -- (12/17,4/17*3^.5) -- (1/2,2/5*3^.5) -- (1/2,7/18*3^.5) -- (15/22,5/22*3^.5);
\filldraw[fill=mondrianYellow,draw=mondrianYellow,opacity=1] (1/2,2/5*3^.5) -- (4/9,4/9*3^.5) -- (1/2,1/2*3^.5) -- (1/2,2/5*3^.5);

\filldraw [fill=orange,draw=orange,opacity=1] (.4886, .8463) circle (0.09^.5*.075);
\filldraw [fill=red,draw=red,opacity=1] (.6067, .2194) circle (0.16^.5*.075);
\filldraw [fill=cyan,draw=cyan,opacity=1] (.5992, .3636) circle (0.27^.5*.075);
\filldraw [fill=green,draw=green,opacity=1] (.6636, .2992) circle (0.11^.5*.075);
\filldraw [fill=pink,draw=pink,opacity=1] (.5071, .6804) circle (0.15^.5*.075);
\filldraw [fill=gray,draw=gray,opacity=1] (.4896, .1624) circle (0.10^.5*.075);
\filldraw [fill=magenta,draw=magenta,opacity=1] (.5439, .4558) circle (0.12^.5*.075);

\end{tikzpicture}};

\node at (2,-1.7) {
\begin{tikzpicture}[scale=5]
\draw[line width=.5pt,gray] (0,0) -- (1,0) -- (1/2,1/2*3^.5) -- (0,0);
\node[scale=0.8pt] at (-0.03,-0.03) {$R$};
\node[scale=0.8pt] at (1.03,-0.03) {$B$};
\node[scale=0.8pt] at (1/2,1/2*3^.5+0.03) {$Y$};

\filldraw[fill=mondrianBlue,draw=mondrianBlue,opacity=1] (0,0) -- (7/8,0) -- (7/16,7/16*3^.5) -- (0,0);
\filldraw[fill=mondrianCyan,draw=mondrianCyan,opacity=1] (7/8,0) --(15/16,0) -- (15/34,15/34*3^.5) -- (7/16,7/16*3^.5) -- (7/8,0);
\filldraw[fill=mondrianGreen,draw=mondrianGreen,opacity=1] (15/16,0) -- (1,0) -- (4/9,4/9*3^.5) -- (15/34,15/34*3^.5) -- (15/16,0);
\filldraw[fill=mondrianYellow,draw=mondrianYellow,opacity=1] (1,0) -- (1/2,1/2*3^.5) -- (4/9,4/9*3^.5) -- (1,0);

\filldraw [fill=orange,draw=orange,opacity=1] (.5011, .7991) circle (0.09^.5*.075);
\filldraw [fill=red,draw=red,opacity=1] (.5596, .1303) circle (0.16^.5*.075);
\filldraw [fill=cyan,draw=cyan,opacity=1] (.5048, .3446) circle (0.27^.5*.075);
\filldraw [fill=green,draw=green,opacity=1] (.7200, .2198) circle (0.11^.5*.075);
\filldraw [fill=pink,draw=pink,opacity=1] (.5745, .5455) circle (0.15^.5*.075);
\filldraw [fill=gray,draw=gray,opacity=1] (.3009, .1802) circle (0.10^.5*.075);
\filldraw [fill=magenta,draw=magenta,opacity=1] (.3825, .5163) circle (0.12^.5*.075);

\end{tikzpicture}};

\end{tikzpicture}
\caption{The four $M$ equilibria for the DS1 and DS2 games. In the top panels, the colored circles indicate the average choices (left) and beliefs (right) for DS1 (red) and DS2 (green). The middle (DS1) and bottom (DS2) panels show a $k$-means analysis for the DS games. The $k$-means algorithm was performed on the elicited belief data. Each colored circle corresponds to the average choices (left) and elicited beliefs (right) within each cluster. The size of the circles is proportional to the number of observations belonging to the particular cluster.}\label{expdataDS}
\end{center}
\vspace*{-25mm}
\end{figure}

For the two games in this experiment we conjecture that even though they both share the same best response structure, the differences in the payoff matrices can affect the salience of specific strategies and the associated $M$ equilibria. For instance, in DS1 the maximum payoff from $Y$ is 40, compared to 80 and 160 from $R$ and $B$ respectively. This makes $Y$ and any $M$ equilibrium involving a high choice probability for $Y$ unattractive, leaving the blue $M$-equilibrium as the most salient. In DS2, all actions have minimum and maximum payoffs of similar magnitudes. Thus, no $M$ equilibrium is particularly salient. We therefore expect a higher degree of strategic mis-coordination in DS2.

To examine whether strategic (mis-)coordination is driving observed behavior, we provide a more detailed analysis of the beliefs and their nexus to choices. We separate elicited beliefs in each of the two games in different clusters using the $k$-means clustering algorithm \citep{macqueen1967}, which organizes observations from a multi-dimensional space into $k$ separate clusters. It takes $k$ points as the provisional centers of the respective clusters and assigns each observation to the nearest one. Next, it computes the mean of the observations in each cluster, which becomes the new center. The process is repeated until convergence.\footnote{See Appendix D for a more detailed description of the $k$-means algorithm and how it is implemented here.}  We then compute the average of the choices corresponding to the beliefs in each cluster.

The middle panels of Figure~\ref{expdataDS} show the results for DS1. In this game, the elicited beliefs are mainly concentrated in the lower side of the blue belief set (right-middle panel) and the corresponding choices are also in the blue choice set (left-middle panel), indicating that subjects in this game are mostly playing the blue $M$ equilibrium. Interestingly, for the two belief clusters outside of the blue set, the corresponding average choices are very close to the set of the same color, indicating that some subjects played a different $M$ equilibrium.

The bottom panels of Figure~\ref{expdataDS} correspond to DS2. Elicited beliefs here are more spread out, with a substantial fraction outside of the blue set. Nevertheless, except for two clusters (depicted by the cyan and magenta colored disk), all other choice clusters lie within, or are very close to, the set with the same color of the corresponding beliefs. In contrast to DS1, the blue set contains the average choices of only two of these clusters, while one of the clusters is essentially playing the $M$ equilibrium that includes Nash. In view of these results, our conjecture that observed heterogeneity in DS2 is driven by strategic mis-coordination due to $M$-equilibrium multiplicity is plausible (see also the discussion in Section 4.4).

\subsection{No Logit}
\label{subsec:noLogit}

\begin{figure}[t]
\begin{center}

\begin{tikzpicture}[scale=5]
\draw[line width=.5pt,gray] (0,0) -- (1,0) -- (1/2,1/2*3^.5) -- (0,0);
\node[scale=0.8pt] at (-0.03,-0.03) {$R$};
\node[scale=0.8pt] at (1.03,-0.03) {$B$};
\node[scale=0.8pt] at (1/2,1/2*3^.5+0.03) {$Y$};

\filldraw[fill=mondrianBlue,draw=mondrianBlue,opacity=1] (1/2,19/118*3^.5) -- (1/2,0) -- (20/21,0) -- (1/2,19/118*3^.5);
\filldraw[fill=mondrianCyan,draw=mondrianCyan,opacity=1] (1/2,1/6*3^.5) -- (41/44,1/44*3^.5) -- (225/317,75/317*3^.5) -- (1/2,1/6*3^.5);
\filldraw[fill=mondrianGreen,draw=mondrianGreen,opacity=1] (225/317,75/317*3^.5) -- (129/178,43/178*3^.5) -- (1/2,41/86*3^.5) -- (1/2,143/326*3^.5) -- (225/317,75/317*3^.5);
\filldraw[fill=mondrianYellow,draw=mondrianYellow,opacity=1] (1/2,41/86*3^.5) -- (1/2,1/2*3^.5) -- (43/88,43/88*3^.5) -- (1/2,41/86*3^.5);

\filldraw [fill=green,draw=green,opacity=1] (.5100, .2771) circle (0.21^.5*.075);
\filldraw [fill=pink,draw=pink,opacity=1] (.6485, .1286) circle (0.21^.5*.075);
\filldraw [fill=purple,draw=purple,opacity=1] (.6111, .0) circle (0.04^.5*.075);
\filldraw [fill=brown,draw=brown,opacity=1] (.6842, .3191) circle (0.08^.5*.075);
\filldraw [fill=gray,draw=gray,opacity=1] (.5949, .0658) circle (0.16^.5*.075);
\filldraw [fill=cyan,draw=cyan,opacity=1] (.7736, .1634) circle (0.11^.5*.075);
\filldraw [fill=red,draw=red,opacity=1] (.7857, .2062) circle (0.04^.5*.075);
\filldraw [fill=orange,draw=orange,opacity=1] (.6286, .4206) circle (0.15^.5*.075);

\node at (2,1/4*3^.5) {
\begin{tikzpicture}[scale=5]
\draw[line width=.5pt,gray] (0,0) -- (1,0) -- (1/2,1/2*3^.5) -- (0,0);
\node[scale=0.8pt] at (-0.03,-0.03) {$R$};
\node[scale=0.8pt] at (1.03,-0.03) {$B$};
\node[scale=0.8pt] at (1/2,1/2*3^.5+0.03) {$Y$};

\filldraw[fill=mondrianBlue,draw=mondrianBlue,opacity=1] (0,0) -- (20/21,0) -- (1/4,1/4*3^.5) -- (0,0);
\filldraw[fill=mondrianCyan,draw=mondrianCyan,opacity=1] (20/21,0) -- (1/4,1/4*3^.5) -- (15/32,15/32*3^.5) -- (20/21,0);
\filldraw[fill=mondrianGreen,draw=mondrianGreen,opacity=1] (150/157,0) -- (43/45,0) -- (15/32,15/32*3^.5) -- (43/88,43/88*3^.5) -- (150/157,0);
\filldraw[fill=mondrianYellow,draw=mondrianYellow,opacity=1] (43/45,0) -- (1,0) -- (1/2,1/2*3^.5) -- (43/88,43/88*3^.5) -- (43/45,0);

\filldraw [fill=green,draw=green,opacity=1] (.5044, .2430) circle (0.21^.5*.075);
\filldraw [fill=pink,draw=pink,opacity=1] (.6201, .0859) circle (0.21^.5*.075);
\filldraw [fill=purple,draw=purple,opacity=1] (.1819, .0832) circle (0.04^.5*.075);
\filldraw [fill=brown,draw=brown,opacity=1] (.5125, .4834) circle (0.08^.5*.075);
\filldraw [fill=gray,draw=gray,opacity=1] (.4459, .0570) circle (0.16^.5*.075);
\filldraw [fill=cyan,draw=cyan,opacity=1] (.3435, .3181) circle (0.11^.5*.075);
\filldraw [fill=red,draw=red,opacity=1] (.8650, .0573) circle (0.04^.5*.075);
\filldraw [fill=orange,draw=orange,opacity=1] (.6737, .2751) circle (0.15^.5*.075);

\end{tikzpicture}};

\end{tikzpicture}
\vspace*{-3mm}
\caption{The four $M$ equilibrium sets for the ``no logit'' game. The graphs show choice (left) and belief (right) data organized in clusters obtained using the $k$-means algorithm on the elicited beliefs data. Each colored circle corresponds to the average choices (left) and elicited beliefs (right) within each cluster. The size of the circles is proportional to the number of observations belonging to the particular cluster. }\label{Nologitexp}
\end{center}
\vspace*{-9mm}
\end{figure}

For the symmetric $3\times3$ game in the lower-left panel of Table~\ref{3by3exp}, $(Y,Y)$ is the unique Nash equilibrium and there are four $M$ equilibria, see Figure~\ref{Nologitexp}.  The two $M$ equilibria with the largest belief sets also have the largest choice sets and are therefore expected to be empirically most relevant. An interesting feature in this game is that the profiles in the blue equilibrium cannot be part of any logit-QRE,\footnote{Since the Nash equilibrium is unique, there is a single path of logit equilibria that starts at the barycenter and ends at $Y$, see \cite{McKelveyPalfrey1995}. If this path had non-empty intersection with the blue $M$ set, it would need to cross the diagonal $\sigma_R=\sigma_Y$. But on this diagonal $\pi_Y>\pi_R$.}
which is why the game is labeled ``NL'' for ``no logit.''

Nevertheless, as can be seen from the experimental results depicted in Figure~\ref{Nologitexp}, both the blue and the cyan $M$ equilibria are empirically relevant. A $k$-means clustering analysis again shows strong evidence of strategic mis-coordination and reveals that the majority of choices (62\%) are in the ``no logit'' blue region.  As pointed out above, the issue of equilibrium-multiplicity only arises in the context of $M$ equilibrium -- the ``no logit'' game has a unique Nash equilibrium and a unique logit-QRE. The strategic coordination problems uncovered by the choice and belief data cannot be explained by existing behavioral game-theory models.

\subsection{Stability}

\begin{figure}[t]
\begin{center}

\begin{tikzpicture}[scale=5]
\draw[line width=.5pt,gray] (0,0) -- (1,0) -- (1/2,1/2*3^.5) -- (0,0);
\node[scale=0.8pt] at (-0.03,-0.03) {$R$};
\node[scale=0.8pt] at (1.03,-0.03) {$B$};
\node[scale=0.8pt] at (1/2,1/2*3^.5+0.03) {$Y$};

\filldraw[fill=mondrianOrange,draw=mondrianOrange,opacity=1] (0,0) -- (1/4,1/4*3^.5) -- (1/2,1/6*3^.5) -- (0,0);

\filldraw [fill=cyan,draw=cyan,opacity=1] (.0455, .0787) circle (0.23^.5*.075);
\filldraw [fill=pink,draw=pink,opacity=1] (.2553, .2211) circle (0.20^.5*.075);
\filldraw [fill=brown,draw=brown,opacity=1] (.1339, .2320) circle (0.23^.5*.075);
\filldraw [fill=gray,draw=gray,opacity=1] (.4000, .1732) circle (0.04^.5*.075);
\filldraw [fill=blue,draw=blue,opacity=1] (.3400, .3811) circle (0.10^.5*.075);
\filldraw [fill=green,draw=green,opacity=1] (.1170, .0921) circle (0.20^.5*.075);

\node at (2,1/4*3^.5) {
\begin{tikzpicture}[scale=5]
\draw[line width=.5pt,gray] (0,0) -- (1,0) -- (1/2,1/2*3^.5) -- (0,0);
\node[scale=0.8pt] at (-0.03,-0.03) {$R$};
\node[scale=0.8pt] at (1.03,-0.03) {$B$};
\node[scale=0.8pt] at (1/2,1/2*3^.5+0.03) {$Y$};

\filldraw[fill=mondrianOrange,draw=mondrianOrange,opacity=1] (0,0) -- (1,0) -- (1/3,1/3*3^.5) -- (0,0);

\filldraw [fill=cyan,draw=cyan,opacity=1] (.2807, .3196) circle (0.23^.5*.075);
\filldraw [fill=pink,draw=pink,opacity=1] (.4413, .3357) circle (0.20^.5*.075);
\filldraw [fill=brown,draw=brown,opacity=1] (.2210, .0985) circle (0.23^.5*.075);
\filldraw [fill=gray,draw=gray,opacity=1] (.6480, .5075) circle (0.04^.5*.075);
\filldraw [fill=blue,draw=blue,opacity=1] (.6550, .1749) circle (0.10^.5*.075);
\filldraw [fill=green,draw=green,opacity=1] (.4217, .1441) circle (0.20^.5*.075);

\end{tikzpicture}};

\end{tikzpicture}
\vspace*{-3mm}
\caption{The colored sets correspond to the unique symmetric $M$ equilibrium sets for the ``Kohlberg-Mertens game.'' The colored circles indicate the average choices (left) and beliefs (right) for each of six clusters obtained through the $k$-means algorithm. Each circle's size is proportional to the number of observations in the respective cluster.}\label{KMkmeans}
\end{center}
\vspace*{-9mm}
\end{figure}

A final example is a game introduced by \citet{KohlbergMertens1986}, which has six pure strategy Nash equilibria that can be arranged in a circle (with mixtures of adjacent pure equilibria corresponding to mixed equilibria), see the bottom-right game in Table~\ref{3by3exp}. The \q{KM} game was identified by \citet{mclennan2016} as one where ``observed behavior will not be characterized by repetition of any one of the equilibria.'' This prediction follows from the application of the \emph{index +1 principle}, introduced in that paper.\footnote{As \citet{mclennan2016} points out, a more precise (albeit less catchy) version of the ``index +1 principle'' is the ``Euler characteristic equals index principle.'' \citet{demichelis2003} show that the latter condition is necessary for any ``natural'' dynamic process of adjustment to converge. For the KM game, the Euler characteristic of the set of Nash equilibria is zero, while its index is $+1$.} In stark contrast to the predicted instability of Nash equilibria there is a unique symmetric and robust $M$ equilibrium.

Figure~\ref{KMkmeans} shows results for the KM game. The data from the experiment is broken down in clusters, using the same procedure as before. Far from the instability and unpredictability of behavior one might expect based on the classical game-theoretic reasoning, we find average choices to be concentrated near the $R$ vertex of the simplex. Interestingly, beliefs are far more spread out, but with $Y$ being the least expected choice by the opponent.
The orange areas in Figure~\ref{KMkmeans} show that the choice and belief sets for the unique $M$-equilibrium capture virtually all observed choices and beliefs. This example illustrates the value of $M$ equilibrium as an empirically relevant theory. In this case, as a theory that provides sharper and more accurate predictions than classical game-theory-solution concepts.

\subsection{An Empirical Evaluation}
\label{subsec:empeval}

The $3\times 3$ games in Table \ref{3by3exp} are of similar complexity and were played using the same experimental protocol by students from the same subject pool. As such they are well suited to compare $\mu$ equilibrium and logit-QRE in terms of fit and out-of-sample predictive power.

One difference between the models is in terms of beliefs: a $\mu$-equilibrium choice is supported by a set of beliefs while the belief has to match the choice under logit-QRE. The main focus of this section, however, is on observed choices. One reason is that many prior studies have applied behavioral-game-theory models to explain choices in experiments without belief elicitation. As such it is useful to evaluate $\mu$ equilibrium based solely on its choice predictions. We demonstrate below that $\mu$ equilibrium offers an interesting alternative to existing behavioral-game-theory models as it naturally predicts heterogeneous choices in games with multiple $M$ equilibria.

Another difference is that $\mu$ equilibria can be computed explicitly. For $\mu(\varepsilon)=(1,\varepsilon,\varepsilon^2)/(1+\varepsilon+\varepsilon^2)$ and $\varepsilon\in[0,1]$, let $E_{\varepsilon}(G)$ denote the set of $\mu$-equilibrium choices of $G$, which are examples of $\varepsilon$-proper equilibria (see Section \ref{sec:Meta}). We choose this parametrization for $\mu$ not because we think it is the best fitting model (our discussion of profect Nash equilibria in Section 2.4 suggests it is not) but because it allows us to measure the sizes of deviations using a well-known model from the refinement literature. Like the profect Nash equilibria of Section \ref{sec:Nash}, $E_{\varepsilon}(G)$ can be found geometrically, i.e.
\begin{displaymath}
  E_{\varepsilon}(G)\,=\,\sqcup_{r\,\in\,\mathcal{P}}\,\overline{M}^c_r\cap P_r(\varepsilon)
\end{displaymath}
where $P_r(\varepsilon)$ is the face of the permutahedron $P(\mu(\varepsilon))$ labeled by $r\in\mathcal{P}$. The top panels of Figure \ref{epsEqgraphs} show three permutahedra in the interior of the simplex corresponding to $\varepsilon=\deel{2}{3}$ (small), $\varepsilon=\deel{2}{5}$ (medium), and $\varepsilon=\deel{1}{5}$ (large). The $\mu$ equilibrium profiles are those for which the label of the $M$-choice set matches that of the permutahedron (see the top panels of Figure \ref{rank3}). The colored curves in the top panels of Figure \ref{epsEqgraphs} show the $\mu$-equilibrium choice predictions for the various games.\footnote{For instance, the lower blue curve in the top-middle panel of Figure \ref{epsEqgraphs} corresponds to the ranking $B\succ R\succ Y$ while the upper blue curve corresponds to $B\succ R\sim Y$. See Appendix F for an explicit description of the $\mu$ equilibria for the games in Table \ref{3by3exp}.} The bottom panels show logit-QRE choice predictions for $\lambda\in[0,\infty)$.

At first blush, the two models yield similar predictions. For the KM game, for instance, both models produce a curve that is confined to the orange $M$-choice set and that connects the simplex' centroid (random behavior) with the $R$ vertex (rational behavior), see the right panels. Likewise, the curves for the DS games in the left panels look similar, traveling through several $M$-choice sets and connecting the centroid to the unique Nash equilibrium.

\begin{figure}[t]
\begin{center}

\begin{tikzpicture}[scale=4.3]
\draw[line width=.5pt,gray] (0,0) -- (1,0) -- (1/2,1/2*3^.5) -- (0,0);
\node[scale=0.8pt] at (-0.03,-0.03) {$R$};
\node[scale=0.8pt] at (1.03,-0.03) {$B$};
\node[scale=0.8pt] at (1/2,1/2*3^.5+0.03) {$Y$};

\filldraw[fill=mondrianGrey,draw=mondrianGrey,opacity=1] (1/2,1/6*3^.5) -- (1/2,0) -- (7/8,0) -- (13/16,1/16*3^.5) -- (1/2,1/6*3^.5);
\filldraw[fill=mondrianGrey,draw=mondrianGrey,opacity=1] (13/16,1/16*3^.5) -- (9/10,1/30*3^.5) -- (15/22,5/22*3^.5) -- (21/32,7/31*3^.5) -- (13/16,1/16*3^.5);
\filldraw[fill=mondrianGrey,draw=mondrianGrey,opacity=1] (15/22,5/22*3^.5) -- (12/17,4/17*3^.5) -- (1/2,2/5*3^.5) -- (1/2,7/18*3^.5) -- (15/22,5/22*3^.5);
\filldraw[fill=mondrianGrey,draw=mondrianGrey,opacity=1] (1/2,2/5*3^.5) -- (4/9,4/9*3^.5) -- (1/2,1/2*3^.5) -- (1/2,2/5*3^.5);

\draw[line width=.5pt,dashed,black] (0.421053, 0.182321) -- (0.368421, 0.273482) -- (0.447368,0.410223) -- (0.552632, 0.410223) -- (0.631579, 0.273482) -- (0.578947,0.182321) -- (0.421053, 0.182321);
\draw[line width=.5pt,dashed,black] (0.307692,0.0888231) -- (0.230769, 0.222058) -- (0.423077,0.555144) -- (0.576923, 0.555144) -- (0.769231, 0.222058) -- (0.692308, 0.0888231)-- (0.307692, 0.0888231);
\draw[line width=.5pt,dashed,black] (0.177419, 0.0279363) -- (0.112903, 0.139682) -- (0.435484,0.698408) -- (0.564516, 0.698408) -- (0.887097, 0.139682) -- (0.822581,0.0279363) -- (0.177419, 0.0279363);

\begin{axis}[scale=0.175, axis line style={draw=none}, tick style={draw=none}, ticks=none, xmin=0, xmax=1.2, ymin=0, ymax=1]
\addplot[thin, mondrianBlue, smooth] plot coordinates
            {
                (0.865881,0.0157939)
                (0.825739,0.0269265)
                (0.788626,0.0400845)
                (0.754473,0.0548088)
                (0.72316,0.0707014)
                (0.694541,0.0874235)
                (0.668448,0.104691)
                (0.644706,0.122268)
                (0.623136,0.139964)
                (0.603563,0.157623)
                (0.585819,0.175123)
                (0.569743,0.192367)
                (0.555185,0.209283)
                (0.542007,0.225816)
                (0.530081,0.241926)
                (0.519289,0.257585)
                (0.509523,0.272776)
                (0.500686,0.287489)
            };
\end{axis}
\begin{axis}[scale=0.175, axis line style={draw=none}, tick style={draw=none}, ticks=none, xmin=0, xmax=1.2, ymin=0, ymax=1]
\addplot[thin, mondrianBlue, smooth] plot coordinates
            {
                (0.865881,0.0157939)
                (13/16,1/16*3^.5)
            };
\end{axis}
\begin{axis}[scale=0.175, axis line style={draw=none}, tick style={draw=none}, ticks=none, xmin=0, xmax=1.2, ymin=0, ymax=1]
\addplot[thin, mondrianCyan, smooth] plot coordinates
            {
                (13/16,1/16*3^.5)
                (0.739471,0.23667)
            };
\end{axis}
\begin{axis}[scale=0.175, axis line style={draw=none}, tick style={draw=none}, ticks=none, xmin=0, xmax=1.2, ymin=0, ymax=1]
\addplot[thin, mondrianCyan, smooth] plot coordinates
            {
                (0.807692,0.199852)
                (0.801846,0.203486)
                (0.796025,0.207009)
                (0.79023,0.210424)
                (0.784464,0.213733)
                (0.778726,0.216939)
                (0.773018,0.220043)
                (0.767342,0.223049)
                (0.761699,0.225958)
                (0.756089,0.228772)
                (0.750514,0.231494)
                (0.744974,0.234126)
                (0.739471,0.23667)
            };
\end{axis}
\begin{axis}[scale=0.175, axis line style={draw=none}, tick style={draw=none}, ticks=none, xmin=0, xmax=1.2, ymin=0, ymax=1]
\addplot[thin, mondrianCyan, smooth] plot coordinates
            {
                (0.807692,0.199852)
                (15/22,5/22*3^.5)
            };
\end{axis}
\begin{axis}[scale=0.175, axis line style={draw=none}, tick style={draw=none}, ticks=none, xmin=0, xmax=1.2, ymin=0, ymax=1]
\addplot[thin, mondrianGreen, smooth] plot coordinates
            {
                (15/22,5/22*3^.5)
                (0.576934,0.555431)
            };
\end{axis}
\begin{axis}[scale=0.175, axis line style={draw=none}, tick style={draw=none}, ticks=none, xmin=0, xmax=1.2, ymin=0, ymax=1]
\addplot[thin, mondrianGreen, smooth] plot coordinates
           {
                (0.57729,0.585725)
                (0.577308,0.584374)
                (0.577322,0.583026)
                (0.577334,0.581681)
                (0.577343,0.580339)
                (0.577348,0.579001)
                (0.57735,0.577665)
                (0.577349,0.576333)
                (0.577345,0.575003)
                (0.577339,0.573677)
                (0.577329,0.572353)
                (0.577316,0.571033)
                (0.5773,0.569716)
                (0.577281,0.568402)
                (0.577259,0.567091)
                (0.577234,0.565783)
                (0.577207,0.564478)
                (0.577176,0.563177)
                (0.577143,0.561878)
                (0.577107,0.560582)
                (0.577068,0.55929)
                (0.577026,0.558)
                (0.576981,0.556714)
                (0.576934,0.555431)
            };
\end{axis}
\begin{axis}[scale=0.175, axis line style={draw=none}, tick style={draw=none}, ticks=none, xmin=0, xmax=1.2, ymin=0, ymax=1]
\addplot[thin, mondrianGreen, smooth] plot coordinates
            {
                (0.57729,0.585725)
                (1/2,2/5*3^.5)
            };
\end{axis}
\begin{axis}[scale=0.175, axis line style={draw=none}, tick style={draw=none}, ticks=none, xmin=0, xmax=1.2, ymin=0, ymax=1]
\addplot[thin, mondrianYellow, smooth] plot coordinates
           {
                (1/2,2/5*3^.5)
                    (0.453456,0.763422)
            };
\end{axis}
\begin{axis}[scale=0.175, axis line style={draw=none}, tick style={draw=none}, ticks=none, xmin=0, xmax=1.2, ymin=0, ymax=1]
\addplot[thin, mondrianYellow, smooth] plot coordinates
           {
                    (0.5,0.866025)
                    (0.495099,0.857366)
                    (0.490396,0.848712)
                    (0.485886,0.840067)
                    (0.481567,0.831438)
                    (0.477435,0.822827)
                    (0.473486,0.81424)
                    (0.469718,0.80568)
                    (0.466127,0.797152)
                    (0.462708,0.788658)
                    (0.459459,0.780203)
                    (0.456376,0.77179)
                    (0.453456,0.763422)
            };
\end{axis}

\node at (1.8,0.455) {
\begin{tikzpicture}[scale=4.3]
\draw[line width=.5pt,gray] (0,0) -- (1,0) -- (1/2,1/2*3^.5) -- (0,0);
\node[scale=0.8pt] at (-0.03,-0.03) {$R$};
\node[scale=0.8pt] at (1.03,-0.03) {$B$};
\node[scale=0.8pt] at (1/2,1/2*3^.5+0.03) {$Y$};

\filldraw[fill=mondrianGrey,draw=mondrianGrey,opacity=1] (1/2,19/118*3^.5) -- (1/2,0) -- (20/21,0) -- (1/2,19/118*3^.5);
\filldraw[fill=mondrianGrey,draw=mondrianGrey,opacity=1] (1/2,1/6*3^.5) -- (41/44,1/44*3^.5) -- (225/317,75/317*3^.5) -- (1/2,1/6*3^.5);
\filldraw[fill=mondrianGrey,draw=mondrianGrey,opacity=1] (225/317,75/317*3^.5) -- (129/178,43/178*3^.5) -- (1/2,41/86*3^.5) -- (1/2,143/326*3^.5) -- (225/317,75/317*3^.5);
\filldraw[fill=mondrianGrey,draw=mondrianGrey,opacity=1] (1/2,41/86*3^.5) -- (1/2,1/2*3^.5) -- (43/88,43/88*3^.5) -- (1/2,41/86*3^.5);

\draw[line width=.5pt,dashed,black] (0.421053, 0.182321) -- (0.368421, 0.273482) -- (0.447368,0.410223) -- (0.552632, 0.410223) -- (0.631579, 0.273482) -- (0.578947,0.182321) -- (0.421053, 0.182321);
\draw[line width=.5pt,dashed,black] (0.307692,0.0888231) -- (0.230769, 0.222058) -- (0.423077,0.555144) -- (0.576923, 0.555144) -- (0.769231, 0.222058) -- (0.692308, 0.0888231)-- (0.307692, 0.0888231);
\draw[line width=.5pt,dashed,black] (0.177419, 0.0279363) -- (0.112903, 0.139682) -- (0.435484,0.698408) -- (0.564516, 0.698408) -- (0.887097, 0.139682) -- (0.822581,0.0279363) -- (0.177419, 0.0279363);

\begin{axis}[scale=0.175, axis line style={draw=none}, tick style={draw=none}, ticks=none, xmin=0, xmax=1.2, ymin=0, ymax=1]
\addplot[thin, mondrianBlue, smooth] plot coordinates
            {
            (0.948673,0.00228605)
            (0.902934,0.00821807)
            (0.86017,0.0171888)
            (0.820447,0.0286299)
            (0.783748,0.0420296)
            (0.749993,0.0569371)
            (0.719061,0.0729629)
            (0.6908,0.0897758)
            (0.665042,0.107099)
            (0.641609,0.124703)
            (0.620325,0.142402)
            (0.601014,0.160046)
            (0.583508,0.177515)
            (0.56765,0.194719)
            (0.553291,0.211585)
            (0.540293,0.228061)
            (0.528529,0.24411)
            (0.517885,0.259706)
            (0.508252,0.274831)
            (0.5,0.288675)
            };
\end{axis}
\begin{axis}[scale=0.175, axis line style={draw=none}, tick style={draw=none}, ticks=none, xmin=0, xmax=1.2, ymin=0, ymax=1]
\addplot[thin, mondrianBlue, smooth] plot coordinates
            {
            (0.948673,0.00228605)
            (0.5,0.288675)
            };
\end{axis}
\begin{axis}[scale=0.175, axis line style={draw=none}, tick style={draw=none}, ticks=none, xmin=0, xmax=1.2, ymin=0, ymax=1]
\addplot[thin, mondrianCyan, smooth] plot coordinates
            {
            (0.855997,0.165857)
            (0.826204,0.187691)
            (0.796849,0.206515)
            (0.768146,0.222628)
            (0.740249,0.236314)
            (0.713271,0.24784)
            (0.687283,0.257455)
            (0.662329,0.265385)
            (0.638429,0.271834)
            (0.615583,0.276987)
            (0.593779,0.281008)
            (0.572993,0.284042)
            (0.553196,0.286219)
            (0.534351,0.287652)
            (0.51642,0.288442)
            };
\end{axis}
\begin{axis}[scale=0.175, axis line style={draw=none}, tick style={draw=none}, ticks=none, xmin=0, xmax=1.2, ymin=0, ymax=1]
\addplot[thin, mondrianCyan, smooth] plot coordinates
            {
            (0.855997,0.165857)
            (225/317,75/317*3^.5)
            };
\end{axis}
\begin{axis}[scale=0.175, axis line style={draw=none}, tick style={draw=none}, ticks=none, xmin=0, xmax=1.2, ymin=0, ymax=1]
\addplot[thin, mondrianGreen, smooth] plot coordinates
            {
            (225/317,75/317*3^.5)
            (0.574168,0.637887)
            };
\end{axis}
\begin{axis}[scale=0.175, axis line style={draw=none}, tick style={draw=none}, ticks=none, xmin=0, xmax=1.2, ymin=0, ymax=1]
\addplot[thin, mondrianGreen, smooth] plot coordinates
            {
            (0.560381,0.716298)
            (0.561551,0.711492)
            (0.562672,0.706709)
            (0.563745,0.701949)
            (0.564771,0.697212)
            (0.565751,0.692498)
            (0.566685,0.687809)
            (0.567575,0.683144)
            (0.56842,0.678504)
            (0.569222,0.673888)
            (0.569982,0.669297)
            (0.570699,0.664732)
            (0.571376,0.660193)
            (0.572012,0.655679)
            (0.572609,0.651192)
            (0.573166,0.64673)
            (0.573686,0.642295)
            (0.574168,0.637887)
            };
\end{axis}
\begin{axis}[scale=0.175, axis line style={draw=none}, tick style={draw=none}, ticks=none, xmin=0, xmax=1.2, ymin=0, ymax=1]
\addplot[thin, mondrianGreen, smooth] plot coordinates
            {
            (0.560381,0.716298)
            (1/2,41/86*3^.5)
            };
\end{axis}
\begin{axis}[scale=0.175, axis line style={draw=none}, tick style={draw=none}, ticks=none, xmin=0, xmax=1.2, ymin=0, ymax=1]
\addplot[thin, mondrianYellow, smooth] plot coordinates
            {
            (1/2,41/86*3^.5)
            (0.489479,0.846982)
            };
\end{axis}
\begin{axis}[scale=0.175, axis line style={draw=none}, tick style={draw=none}, ticks=none, xmin=0, xmax=1.2, ymin=0, ymax=1]
\addplot[thin, mondrianYellow, smooth] plot coordinates
            {
            (0.5,0.866025)
            (0.499004,0.864293)
            (0.498016,0.862561)
            (0.497036,0.860829)
            (0.496064,0.859098)
            (0.495099,0.857366)
            (0.494143,0.855635)
            (0.493195,0.853903)
            (0.492254,0.852172)
            (0.491321,0.850442)
            (0.490396,0.848712)
            (0.489479,0.846982)
            };
\end{axis}

\end{tikzpicture}};

\node at (3.1,0.45) {
\begin{tikzpicture}[scale=4.3]
\draw[line width=.5pt,gray] (0,0) -- (1,0) -- (1/2,1/2*3^.5) -- (0,0);
\node[scale=0.8pt] at (-0.03,-0.03) {$R$};
\node[scale=0.8pt] at (1.03,-0.03) {$B$};
\node[scale=0.8pt] at (1/2,1/2*3^.5+0.03) {$Y$};
\filldraw[fill=mondrianGrey,draw=mondrianGrey,opacity=1] (0,0) -- (1/4,1/4*3^.5) -- (1/2,1/6*3^.5) -- (0,0);

\draw[line width=.5pt,dashed,black] (0.421053, 0.182321) -- (0.368421, 0.273482) -- (0.447368,0.410223) -- (0.552632, 0.410223) -- (0.631579, 0.273482) -- (0.578947,0.182321) -- (0.421053, 0.182321);
\draw[line width=.5pt,dashed,black] (0.307692,0.0888231) -- (0.230769, 0.222058) -- (0.423077,0.555144) -- (0.576923, 0.555144) -- (0.769231, 0.222058) -- (0.692308, 0.0888231)-- (0.307692, 0.0888231);
\draw[line width=.5pt,dashed,black] (0.177419, 0.0279363) -- (0.112903, 0.139682) -- (0.435484,0.698408) -- (0.564516, 0.698408) -- (0.887097, 0.139682) -- (0.822581,0.0279363) -- (0.177419, 0.0279363);

\begin{axis}[scale=0.175, axis line style={draw=none}, tick style={draw=none}, ticks=none, xmin=0, xmax=1.2, ymin=0, ymax=1]
\addplot[thin, mondrianOrange, smooth] plot coordinates
            {
            (0.,0.)
            (0.0540541,0.0780203)
            (0.112903,0.139682)
            (0.172662,0.186912)
            (0.230769,0.222058)
            (0.285714,0.247436)
            (0.336735,0.26511)
            (0.383562,0.276812)
            (0.42623,0.283943)
            (0.464945,0.28761)
            (0.5,0.288675)
            };
\end{axis}
\end{tikzpicture}};

\node at (0.545,-0.58) {
\begin{tikzpicture}[scale=4.3]
\draw[line width=.5pt,gray] (0,0) -- (1,0) -- (1/2,1/2*3^.5) -- (0,0);
\node[scale=0.8pt] at (-0.03,-0.03) {$R$};
\node[scale=0.8pt] at (1.03,-0.03) {$B$};
\node[scale=0.8pt] at (1/2,1/2*3^.5+0.03) {$Y$};

\filldraw[fill=mondrianGrey,draw=mondrianGrey,opacity=1] (1/2,1/6*3^.5) -- (1/2,0) -- (7/8,0) -- (13/16,1/16*3^.5) -- (1/2,1/6*3^.5);
\filldraw[fill=mondrianGrey,draw=mondrianGrey,opacity=1] (13/16,1/16*3^.5) -- (9/10,1/30*3^.5) -- (15/22,5/22*3^.5) -- (21/32,7/31*3^.5) -- (13/16,1/16*3^.5);
\filldraw[fill=mondrianGrey,draw=mondrianGrey,opacity=1] (15/22,5/22*3^.5) -- (12/17,4/17*3^.5) -- (1/2,2/5*3^.5) -- (1/2,7/18*3^.5) -- (15/22,5/22*3^.5);
\filldraw[fill=mondrianGrey,draw=mondrianGrey,opacity=1] (1/2,2/5*3^.5) -- (4/9,4/9*3^.5) -- (1/2,1/2*3^.5) -- (1/2,2/5*3^.5);

\begin{axis}[scale=0.175, axis line style={draw=none}, tick style={draw=none}, ticks=none, xmin=0, xmax=1.2, ymin=0, ymax=1]
\addplot[thin, black, smooth] plot coordinates
            {
                (0.5,0.288675)
                (0.583377,0.19728)
                (0.652938,0.146413)
                (0.701555,0.120386)
                (0.735454,0.106727)
                (0.759733,0.0997263)
                (0.777494,0.0967615)
                (0.790605,0.0965349)
                (0.800222,0.0983639)
                (0.807088,0.101875)
                (0.811695,0.10686)
                (0.814377,0.113205)
                (0.815366,0.120849)
                (0.814831,0.129756)
                (0.8129,0.139902)
                (0.809678,0.151259)
                (0.80526,0.163788)
                (0.79974,0.177433)
                (0.793213,0.192114)
                (0.785784,0.207729)
                (0.777564,0.224155)
                (0.768673,0.241252)
                (0.759232,0.258871)
                (0.749361,0.276858)
                (0.739178,0.295061)
                (0.728792,0.313342)
                (0.718301,0.331571)
                (0.707791,0.349639)
                (0.697336,0.367453)
                (0.686995,0.384938)
                (0.676817,0.402038)
                (0.666837,0.41871)
                (0.657085,0.434926)
                (0.647576,0.45067)
                (0.638324,0.465935)
                (0.629334,0.480723)
                (0.620605,0.495041)
                (0.466026,0.777243)
                (1/2,1/2*3^.5)
            };
\end{axis}
\end{tikzpicture}};

\node at (1.8,-0.58) {
\begin{tikzpicture}[scale=4.3]
\draw[line width=.5pt,gray] (0,0) -- (1,0) -- (1/2,1/2*3^.5) -- (0,0);
\node[scale=0.8pt] at (-0.03,-0.03) {$R$};
\node[scale=0.8pt] at (1.03,-0.03) {$B$};
\node[scale=0.8pt] at (1/2,1/2*3^.5+0.03) {$Y$};

\filldraw[fill=mondrianGrey,draw=mondrianGrey,opacity=1] (1/2,19/118*3^.5) -- (1/2,0) -- (20/21,0) -- (1/2,19/118*3^.5);
\filldraw[fill=mondrianGrey,draw=mondrianGrey,opacity=1] (1/2,1/6*3^.5) -- (41/44,1/44*3^.5) -- (225/317,75/317*3^.5) -- (1/2,1/6*3^.5);
\filldraw[fill=mondrianGrey,draw=mondrianGrey,opacity=1] (225/317,75/317*3^.5) -- (129/178,43/178*3^.5) -- (1/2,41/86*3^.5) -- (1/2,143/326*3^.5) -- (225/317,75/317*3^.5);
\filldraw[fill=mondrianGrey,draw=mondrianGrey,opacity=1] (1/2,41/86*3^.5) -- (1/2,1/2*3^.5) -- (43/88,43/88*3^.5) -- (1/2,41/86*3^.5);

\begin{axis}[scale=0.175, axis line style={draw=none}, tick style={draw=none}, ticks=none, xmin=0, xmax=1.2, ymin=0, ymax=1]
\addplot[thin, black, smooth] plot coordinates
            {
                    (0.5,0.288675)
                    (0.728073,0.158438)
                    (0.795483,0.120658)
                    (0.827643,0.103093)
                    (0.846573,0.0931256)
                    (0.85901,0.0869133)
                    (0.867731,0.0828807)
                    (0.874097,0.0802636)
                    (0.878854,0.0786519)
                    (0.882441,0.0778139)
                    (0.885128,0.0776197)
                    (0.887085,0.078006)
                    (0.888409,0.0789618)
                    (0.889145,0.0805286)
                    (0.889283,0.0828164)
                    (0.888734,0.0860525)
                    (0.887265,0.0907177)
                    (0.884216,0.0980576)
                    (0.876106,0.114472)
                    (0.873481,0.120683)
                    (0.536308,0.747285)
                    (0.533923,0.752662)
                    (0.531348,0.757827)
                    (0.528947,0.762474)
                    (0.527688,0.765346)
                    (0.526554,0.767856)
                    (0.525444,0.770187)
                    (0.524772,0.771784)
                    (0.524261,0.773049)
                    (0.523744,0.774236)
                    (0.529001,0.763531)
                    (0.52597,0.77105)
                    (0.524504,0.774395)
                    (0.52294,0.776902)
                    (0.523101,0.776293)
                    (0.521316,0.778177)
                    (0.522053,0.776091)
                    (0.521496,0.775754)
                    (0.522779,0.772575)
                    (0.522866,0.771101)
                    (0.523234,0.769167)
                    (0.523778,0.76695)
                    (0.524721,0.764126)
                    (0.525458,0.761615)
                    (0.526549,0.758584)
                    (0.518685,0.775935)
                    (0.503995,0.813074)
                (1/2,1/2*3^.5)
            };
\end{axis}

\end{tikzpicture}};

\node at (3.1,-0.58) {
\begin{tikzpicture}[scale=4.3]
\draw[line width=.5pt,gray] (0,0) -- (1,0) -- (1/2,1/2*3^.5) -- (0,0);
\node[scale=0.8pt] at (-0.03,-0.03) {$R$};
\node[scale=0.8pt] at (1.03,-0.03) {$B$};
\node[scale=0.8pt] at (1/2,1/2*3^.5+0.03) {$Y$};
\filldraw[fill=mondrianGrey,draw=mondrianGrey,opacity=1] (0,0) -- (1/4,1/4*3^.5) -- (1/2,1/6*3^.5) -- (0,0);

\begin{axis}[scale=0.175, axis line style={draw=none}, tick style={draw=none}, ticks=none, xmin=0, xmax=1.2, ymin=0, ymax=1]
\addplot[thin, black, smooth] plot coordinates
            {
            (0.5,0.288675)
            (0.370497,0.293606)
            (0.234593,0.286479)
            (0.16804,0.261241)
            (0.139934,0.235546)
            (0.124468,0.214086)
            (0.113651,0.196533)
            (0.105114,0.181997)
            (0.0980177,0.169759)
            (0.0919702,0.159295)
            (0.0867353,0.150229)
            (0.0821502,0.142288)
            (0.0780951,0.135265)
            (0.0744787,0.129001)
            (0.0712303,0.123374)
            (0.0682937,0.118288)
            (0.065624,0.113664)
            (0.0631846,0.109439)
            (0.0609457,0.105561)
            (0.0588822,0.101987)
            (0.0569735,0.0986809)
            (0.0552018,0.0956122)
            (0.0535522,0.0927551)
            (0.0520119,0.0900873)
            (0.05057,0.0875897)
            (0.0492167,0.0852459)
            (0.0479439,0.0830413)
            (0.0467442,0.0809633)
            (0.0456111,0.0790008)
            (0.0445391,0.077144)
            (0.0435231,0.0753842)
            (0.0425586,0.0737136)
            (0.0416415,0.0721253)
            (0.0407685,0.070613)
            (0,0)
            };
\end{axis}
\end{tikzpicture}};

\end{tikzpicture}
\vspace*{-4mm}
\caption{The colored curves in the top panels show $\mu$ equilibria for $\mu=(1,\varepsilon,\varepsilon^2)/(1+\varepsilon+\varepsilon^2)$ and $\varepsilon\in[0,1]$. The three permutahedra shown correspond to $\varepsilon=\deel{2}{3}$, $\varepsilon=\deel{2}{5}$, and $\varepsilon=\deel{1}{5}$. The black curves in the bottom panels show logit-QRE for $\lambda\in[0,\infty)$. The left panels correspond to the DS games, the middle panels to the ``no logit'' game, and the right panels to the KM game. The grey areas indicate $M$-choice sets.}\label{epsEqgraphs}
\end{center}
\vspace*{-8mm}
\end{figure}

However, a crucial difference between $\mu$ equilibrium and logit-QRE is that the latter is unique for any value of $\lambda$. In contrast, there are ranges of $\varepsilon$ for which multiple $\mu$ equilibria exist. Consider, for instance, the permutahedron for $\varepsilon=\deel{2}{5}$ in the top-left panel. There are three Nash equilibria corresponding to the permutahedron's vertices that lie in the blue, cyan, and green regions. These Nash equilibria occur for the \textit{same} value of the rationality parameter, i.e. $\mu$ equilibrium shares the equilibrium multiplicity that $M$ equilibrium possesses. Multiple equilibria also occur for the ``no logit'' game shown in the middle panels, and there \textit{is} a $\mu$ equilibrium curve in the blue region that logit-QRE cannot enter.

Uniqueness of logit-QRE in these games hinders its ability to fit the choice data. For the DS2 game, for instance, the heterogeneous data are scattered around the centroid, corresponding to different levels of $\lambda$. The best logit-QRE can do is to predict the centroid itself, which corresponds to a low value of the rationality parameter $\lambda$, see Table \ref{estQRE}. To a lesser extent, a similar phenomenon occurs in the ``no logit'' game. As a result, the (pooled) loglikelihood from this fitting exercise is low.

In contrast, there are multiple $\mu$ equilibria for intermediate values of $\varepsilon$ and an individual's choices can be matched to the closest $\mu$ equilibrium (where the same equilibrium is used for all choices the individual made in a particular game). The results in Table \ref{estMu} show that this results in an improved fit of the choice data and consistent estimates of the rationality parameter, $\varepsilon$. The reason for this improved performance is that $\mu$-equilibrium inherits equilibrium multiplicity from $M$ equilibrium, which naturally results in heterogeneous choices.

Better fitting models sometimes perform ``too well.'' A litmus test for over-fitting is to use the estimated parameters from one game to predict behavior in the other games.\footnote{For examples of such out-of-sample testing see \cite{wright2017} and \cite{goeree2017}.} The last four columns of Tables \ref{estQRE} and \ref{estMu} show the out-of-sample loglikelihoods obtained by using the estimated values for $\lambda$ and $\varepsilon$ from fitting the data of each game to predict choices in the other three games.  Only in four out of twelve cases does logit-QRE predict better than purely random choice, while it performs worse in five cases and the same as random choice in three cases. Again the worst performance is for the DS2 and the ``no logit'' games, as the homogeneity logit-QRE imposes is not borne out by the choice data in these games. In contrast, $\mu$ equilibrium does better than random choice in eleven out of twelve cases. The $\varepsilon$ estimated from fitting the ``no logit'' choices is such that there is a unique $\mu$ equilibrium in the DS games. This results in a reasonable out-of-sample performance for the relatively homogeneous choices in the DS1 game, but cannot capture the heterogeneity of the DS2 choice data.

The fit and out-of-sample results for $\mu$ equilibrium show its promise as a structural model for the analysis of experimental data. Like $M$ equilibrium it is easy to compute and it can produce heterogeneous choices even in games with a unique Nash equilibrium. Being parametric, it offers a more direct comparison between the empirical relevance of monotonicity and consequential unbiasedness versus the assumptions underlying other behavioral-game-theory models.

\begin{table}[t]
\begin{center}
\begin{tabular}{c|c|c|c|c|c|c|c}
\hline\hline
\multirow{2}{*}{Game} & \multirow{2}{*}{\# Obs} & \multirow{2}{*}{$\lambda$} & logL & \multicolumn{4}{c}{Out-of-sample logL}\\
 & & & (fit) & DS1 & DS2 & NL & KM \\ \hline
DS1 & 480 & .0376 & -470.0 & & -616.0 & -543.5 & -203.2 \\
DS2 & 480 & .0000 & -527.3 & -527.3 & & -527.3 & -263.7 \\
NL & 480 & .0078 & -496.2 & -501.4 & -540.5 & & -250.8 \\
KM & 240 & .0696 & -183.3 & -480.3 & -678.2 & -582.9 & \\ \hline
Pooled & 1680 & .0125 & -1784.3 & -490.4 & -550.8 & -499.9 & -243.1\\ \hline\hline
\end{tabular}
\caption{Fit and out-of-sample results for logit-QRE.}\label{estQRE}
\end{center}
\vspace*{-6mm}
\end{table}

\begin{table}[t]
\begin{center}
\begin{tabular}{c|c|c|c|c|c|c|c}
\hline\hline
\multirow{2}{*}{Game} & \multirow{2}{*}{\# Obs} & \multirow{2}{*}{$\varepsilon$} & logL & \multicolumn{4}{c}{Out-of-sample logL}\\
 & & & (fit) & DS1 & DS2 & NL & KM \\ \hline
DS1 & 480 & .382 & -429.8 & & -450.0 & -479.9 & -178.7 \\
DS2 & 480 & .378 & -449.8 & -429.9 & & -480.6 & -178.4 \\
NL & 480 & .522 & -468.3 & -468.5 & -640.0 & & -192.4 \\
KM & 240 & .290 & -175.3 & -445.5 & -465.3 & -506.0 & \\ \hline
Pooled & 1680 & .384 & -1529.9 & -429.8 & -450.1 & -479.5 & -178.8\\ \hline\hline
\end{tabular}
\caption{Fit and out-of-sample results for $\mu$ equilibrium.}\label{estMu}
\end{center}
\vspace*{-8mm}
\end{table}

\section{Conclusions}
\label{sec:conc}

\subsection{Methodological Contribution}

Two assumptions underlie Nash equilibrium: perfect maximization, $\sigma\in\br(\pi(\omega))$, and perfect foresight, $\omega=\sigma$. These stark assumptions are not unique to game theory. In macroeconomics, they have given impetus to the \q{rational-expectations revolution} that started sixty years ago. The dynamic stochastic general equilibrium models that have since come to dominate macroeconomics and finance hinge on the assumptions of optimal choices and correct expectations.

In his pioneering paper that introduced the concept, John Muth, the \q{father} of the rational-expectations revolution, remarked:
\vspace*{3mm}
\begin{center}
\parbox{14.5cm}{\setlength{\baselineskip}{6mm}
\q{\textit{To make dynamic economic models complete various expectational formulae have been used. There is, however, little evidence to suggest that the presumed relations bear a resemblance to the way the economy works},} \citeauthor{Muth1961} (\citeyear[p. 315]{Muth1961}).}
\end{center}
\vspace*{3mm}

\noindent The belief data reported in this paper (and in many papers prior) confirm Muth's suspicion \textit{even} in the context of playing simple two-player normal-form games. More generally, as evidenced in Section \ref{empMotivation}, data from over half a century of game-theory experiments strongly reject the Nash assumptions of perfect maximization and perfect foresight.

The quest for an empirically relevant game theory has produced noisy decisionmaking models that apply \q{soft} maximization rules and keep the correct-beliefs assumption (e.g. QRE), models that explicitly model non-equilibrium beliefs but keep the best-reply assumption (e.g. level-$k$), and models that relax both assumptions (e.g. Noisy Introspection). See Section \ref{sec:prior} for further details. \citeauthor{Sargent1999}'s (\citeyear{Sargent1999}) fear that abandoning rational expectations might lead to a \q{wilderness of behavioral models} has come true to some extent.

This paper offers a distinct approach that is parameter-free and detail-free. First, $M$-equilibrium predictions are based solely on the variables that define a game -- number of players, strategy sets, and payoffs -- just like the Nash equilibrium. In other words, $M$ equilibrium has no free parameters that require estimation from the data. More importantly, $M$ equilibrium does not make any detailed assumptions about how players form beliefs or how they translate beliefs into choices. $M$ equilibrium circumvents Sargent's hypothesized wilderness by imposing only minimal ordinal conditions on beliefs and choices.

The modeling assumptions underlying $M$ equilibrium stem directly from John Muth's admission for why he proposed rational expectations, a theoretical construct he felt had little to do with reality. It was to \q{complete} or \q{close} the model. Confronted with a high-dimensional system in which the optimal levels of current variables depend on expectations about their future values, which, in turn, are shaped by their current realizations, Muth cut the Gordian knot by assuming expectations were correct. Against his better empirical judgement, Muth imposed rational expectations to obtain a \q{complete} or \q{closed} set of fixed-point equations from which the variables of interest could be solved.

$M$ equilibrium neither embraces the specificity that many behavioral-game-theory models impose to achieve realism nor does it embrace the unrealistic mathematical convenience of standard game theory. Rather, $M$ equilibrium replaces the stark assumptions underlying Nash equilibrium with two behavioral postulates that are \textit{plausible} even if they seem \q{weak} because they avoid ad hoc modeling choices and even if they appear mathematically \q{inconvenient.} Specifically, $M$ equilibrium replaces perfect maximization with \textit{monotonicity}
\begin{displaymath}
  \pi_{ij}(\omega_i)\,<\,\pi_{ik}(\omega_i)\,\Rightarrow\sigma_{ij}\,<\,\sigma_{ik}
\end{displaymath}
and perfect foresight with \textit{consequential unbiasedness}
\begin{displaymath}
  \pi_{ij}(\sigma_{-i})\,<\,\pi_{ik}(\sigma_{-i})\,\Leftrightarrow\,\pi_{ij}(\omega_i)\,<\,\pi_{ik}(\omega_i)
\end{displaymath}
where $j,k$ label different strategies for player $i\in N$. In words, monotonicity accommodates non-best-response behavior such that more costly mistakes are less likely and consequential unbiasedness allows beliefs to be biased as long as the payoff predictions they generate are not.

\subsection{Theoretical Contribution}

The price we pay for plausibility is that the above set of inequalities is rather different from the closed set of fixed-point conditions underlying Nash equilibrium. However, as both the number of players and their strategy spaces are finite, there are \textit{finitely} many of them and each is \textit{polynomial} in the choice or belief variables. $M$ equilibrium sets fit the canonical definition of \textit{semi-algebraic sets} that can be computed (see Section \ref{sec:M} and Appendix A), decomposed into disjoint sets characterized by the number of payoff indifferences (Section \ref{sec:rankChar}), and their sizes can be bounded to establish falsifiability of $M$ equilibrium (Section \ref{sec:robust}).\footnote{The few existing applications of semi-algebraic geometry to game theory, e.g. \cite{KohlbergMertens1986}, \cite{SchanuelSimonZame1991}, and \cite{BlumeZame1994}), focus on non-generic games to study refinements of Nash equilibria. \cite{BlumeZame1994}, for instance, use semi-algebraic geometry to establish equivalence between \citeauthor{Selten1975}'s (\citeyear{Selten1975}) \textit{perfect equilibrium} and \citeauthor{KrepsWilson1982}'s (\citeyear{KrepsWilson1982}) \textit{sequential equilibrium} refinements. These prior papers use semi-algebraicity mainly to establish limit arguments. In contrast, in this paper we are interested in determining entire (full-dimensional) semi-algebraic sets. For applications of semi-algebraic geometry to general equilibrium, see, e.g., \cite{BlumeZame1992} and \cite{KublerSchmedders2010}.}

Surprisingly, while $M$ equilibrium is not a fixed-point theory as it does not impose rational expectations, it serves as a meta theory for seemingly unrelated models in the game-theory literature that do, see Section \ref{sec:Meta}. In particular, the $M$-equilibrium choice sets capture all Nash equilibria of games with restricted strategy sets (as in \citeauthor{Selten1975}, \citeyear{Selten1975}), all $\varepsilon$-proper equilibria (\citeauthor{Myerson1978}, \citeyear{Myerson1978}), and all QRE (\citeauthor{McKelveyPalfrey1995}, \citeyear{McKelveyPalfrey1995}), establishing unexpected equivalences between these very different approaches. Together with falsifiability of $M$-equilibrium, this allows us to refute the critique by \cite{HaileHortacsuKosenok2008} about the lack of empirical content of QRE (or other models that satisfy monotonicity). Finally, and again surprisingly, while the computation of any of the aforementioned fixed-point models can be cumbersome, computing the $M$-choice set that contains all of them can be quite simple.

$M$ equilibrium naturally gives rise to a refinement of Nash equilibria, which we term \textit{profect} equilibria. Like \citeauthor{Myerson1978}'s (\citeyear{Myerson1978}) properness, profectness is based on the idea that more costly mistakes occur less likely, a condition that may be violated by \citeauthor{Selten1975}'s (\citeyear{Selten1975}) perfectness. However, unlike properness, profectness does not require that more costly mistakes are infinitely less likely.  Hence, the set of profect equilibria contains all proper equilibria and is a subset of the set of perfect equilibria. In Section \ref{sec:Nash} we provide an example where profectness is arguably a more reasonable refinement than either perfectness or properness.

\subsection{Empirical Contribution}

$M$ equilibrium provides a novel perspective on actual play in games. For example, in games with a unique Nash equilibrium but multiple $M$ equilibria it suggests that there may be strategic mis-coordination when players' beliefs belong to different supporting belief sets. In other cases, a robust $M$ equilibrium may be a good predictor of behavior even in games where classic game theoretical reasoning expects instability. Section \ref{sec:exp-test} provides a number of examples where observed behavior in experiments cannot be reconciled with Nash predictions or those of existing behavioral game-theory models such as level-$k$ and variants of QRE. In contrast, $M$-equilibrium captures subjects' choices and beliefs.

More generally, being set-valued, $M$ equilibrium is immune to a number of issues typically encountered in experiments, including heterogeneity in subjects' behavior (both within and across subjects), order effects \citep[e.g.][]{charness2012}, framing effects \citep[e.g.][]{tversky1981}, and session effects \citep[e.g.][]{frechette2012}. Moreover, it offers a unified framework to study choices and beliefs, both of which can be reliably elicited.\footnote{\cite{danzVesterlundWilson2020} show that eliciting beliefs under the BSR without explicit information on quantitative incentives, similar to what we do in our experiment, results in high levels of accuracy.  See also \cite{holt2016} for an alternative method with similar levels of success.}

Because of its non-parametric nature, $M$ equilibrium is not amenable to calibration exercises. We introduce a closely related parametric fixed-point model, $\mu$ equilibrium, that can be applied to the data. We show that $\mu$ equilibrium shares key features with $M$ equilibrium: $\mu$ equilibrium profiles are supported by a set of beliefs, they can be calculated easily and analytically (and geometrically), and there may be multiple $\mu$ equilibria for a given value of the rationality parameter even in games with a unique Nash equilibrium. In Section \ref{subsec:empeval} we apply $\mu$ equilibrium to the data from symmetric $3\times 3$ games and show that it outperforms the commonly used logit-QRE in terms of fit and out-of-sample predictive power.

\subsection{Outlook}

Despite imposing weak restrictions on the data, we expect $M$ equilibrium to be falsified in certain games.
There are well-documented cases where psychological factors \citep[e.g.][]{rabin2013} and other behavioral factors such as other-regarding preferences \citep[e.g.][]{FehrSchmidt1999,bolton2000,CharnessRabin2002}, risk aversion \citep[e.g.][]{goeree2003}, etc. play an important role. These factors can easily be incorporated in an extension of the theory by replacing expected payoffs with expected utilities. For example, Figure \ref{CR} shows the $M$-choice and $M$-belief sets for the DS1 (top) and DS2 (bottom) games in Table \ref{3by3exp} (cf. Figure \ref{expdataDS}) when selfish preferences are replaced with other-regarding preferences based on the model proposed by \cite{CharnessRabin2002}.\footnote{The Charness-Rabin model specifies that a player's utility is given by
\begin{displaymath}
u(\pi_{own},\pi_{other})\,=\,\left\{\begin{array}{ccc} (1-\alpha)\pi_{own}+\alpha\,\pi_{other} &\mbox{if}& \pi_{own}>\pi_{other}\\
(1-\beta)\pi_{own}+\beta\,\pi_{other} &\mbox{if}& \pi_{own}<\pi_{other}\end{array}\right.
\end{displaymath}
where $\pi_{own}$ denotes own payoff and $\pi_{other}$ the other's payoff. Figure \ref{CR} is based on $\alpha=\deel{1}{3}$ and $\beta=-\deel{1}{7}$.} This adaptation results in near-perfect $M$ equilibrium predictions. The color of the $M$ set that the belief cluster belongs to matches the color of the $M$ set the corresponding choice cluster belongs to for both the DS1 and DS2 games.

\begin{figure}[t]
\vspace*{-10mm}
\begin{center}

\begin{tikzpicture}[scale=5]
\draw[line width=.5pt,gray] (0,0) -- (1,0) -- (1/2,1/2*3^.5) -- (0,0);
\node[scale=0.8pt] at (-0.03,-0.03) {$R$};
\node[scale=0.8pt] at (1.03,-0.03) {$B$};
\node[scale=0.8pt] at (1/2,1/2*3^.5+0.03) {$Y$};

\filldraw[fill=mondrianBlue,draw=mondrianBlue,opacity=1] (1/2,1/6*3^.5) -- (1/2,0) -- (50/63,0) -- (283/400,39/400*3^.5) -- (1/2,1/6*3^.5);
\filldraw[fill=mondrianCyan,draw=mondrianCyan,opacity=1] (283/400,39/400*3^.5) -- (1,0) -- (681/1010,227/1010*3^.5) -- (225/367,75/367*3^.5) -- (283/400,39/400*3^.5);
\filldraw[fill=mondrianGreen,draw=mondrianGreen,opacity=1] (681/1010,227/1010*3^.5) -- (3/4,1/4*3^.5) -- (85/124,39/124*3^.5) -- (1/2,29/81*3^.5) -- (1/2,227/658*3^.5) -- (681/1010,227/1010*3^.5);
\filldraw[fill=mondrianYellow,draw=mondrianYellow,opacity=1] (1/2,29/81*3^.5) -- (1/2,1/2*3^.5) -- (75/178,75/178*3^.5) -- (0.4714,0.6318)-- (1/2,29/81*3^.5);

\filldraw [fill=cyan,draw=cyan,opacity=1] (.6250, .1580) circle (0.31^.5*.075);
\filldraw [fill=green,draw=green,opacity=1] (.7059, .2547) circle (0.07^.5*.075);
\filldraw [fill=red,draw=red,opacity=1] (.7254, .1098) circle (0.15^.5*.075);
\filldraw [fill=gray,draw=gray,opacity=1] (.6860, .0604) circle (0.09^.5*.075);
\filldraw [fill=magenta,draw=magenta,opacity=1] (.6316, .5470) circle (0.04^.5*.075);
\filldraw [fill=orange,draw=orange,opacity=1] (.5914, .0931) circle (0.19^.5*.075);
\filldraw [fill=pink,draw=pink,opacity=1] (.5000, .0241) circle (0.15^.5*.075);

\node at (2,1/4*3^.5) {
\begin{tikzpicture}[scale=5]
\draw[line width=.5pt,gray] (0,0) -- (1,0) -- (1/2,1/2*3^.5) -- (0,0);
\node[scale=0.8pt] at (-0.03,-0.03) {$R$};
\node[scale=0.8pt] at (1.03,-0.03) {$B$};
\node[scale=0.8pt] at (1/2,1/2*3^.5+0.03) {$Y$};

\filldraw[fill=mondrianBlue,draw=mondrianBlue,opacity=1] (0,0) -- (50/63,0) --(0.4714,0.6318) -- (77/200,77/200*3^.5) -- (0,0);
\filldraw[fill=mondrianYellow,draw=mondrianYellow,opacity=1] (75/178,75/178*3^.5) --(0.4714,0.6318) -- (85/124,39/124*3^.5) -- (1/2,1/2*3^.5) -- (75/178,75/178*3^.5);
\filldraw[fill=mondrianGreen,draw=mondrianGreen,opacity=1] (0.4714,0.6318) -- (85/124,39/124*3^.5) -- (1,0) -- (0.4714,0.6318);
\filldraw[fill=mondrianCyan,draw=mondrianCyan,opacity=1] (0.4714,0.6318) -- (1,0) -- (50/63,0) -- (0.4714,0.6318);

\filldraw [fill=cyan,draw=cyan,opacity=1] (.5875, .1485) circle (0.31^.5*.075);
\filldraw [fill=green,draw=green,opacity=1] (.7338, .2741) circle (0.07^.5*.075);
\filldraw [fill=red,draw=red,opacity=1] (.7740, .0453) circle (0.15^.5*.075);
\filldraw [fill=gray,draw=gray,opacity=1] (.1931, .0900) circle (0.09^.5*.075);
\filldraw [fill=magenta,draw=magenta,opacity=1] (.5268, .5853) circle (0.04^.5*.075);
\filldraw [fill=orange,draw=orange,opacity=1] (.4449, .2669) circle (0.19^.5*.075);
\filldraw [fill=pink,draw=pink,opacity=1] (.4031, .0618) circle (0.15^.5*.075);

\end{tikzpicture}};

\node at (1/2,-0.7) {
\begin{tikzpicture}[scale=5]
\draw[line width=.5pt,gray] (0,0) -- (1,0) -- (1/2,1/2*3^.5) -- (0,0);
\node[scale=0.8pt] at (-0.03,-0.03) {$R$};
\node[scale=0.8pt] at (1.03,-0.03) {$B$};
\node[scale=0.8pt] at (1/2,1/2*3^.5+0.03) {$Y$};

\filldraw[fill=mondrianBlue,draw=mondrianBlue,opacity=1] (1/2,1/6*3^.5) -- (1/2,0) -- (135/146,0) -- (241/340,33/340*3^.5) -- (1/2,1/6*3^.5);
\filldraw[fill=mondrianCyan,draw=mondrianCyan,opacity=1] (241/340,33/340*3^.5) -- (1,0) -- (229/248,19/248*3^.5) -- (531/880,177/880*3^.5) -- (1215/2288,405/2288*3^.5) -- (241/340,33/340*3^.5);
\filldraw[fill=mondrianGreen,draw=mondrianGreen,opacity=1] (531/880,177/880*3^.5) -- (3/4,1/4*3^.5) -- (59/104,45/103*3^.5) -- (1/2,27/61*3^.5) -- (1/2,7/29*3^.5) -- (531/880,177/880*3^.5);
\filldraw[fill=mondrianYellow,draw=mondrianYellow,opacity=1] (1/2,27/61*3^.5) -- (1/2,1/2*3^.5) -- (9/20,9/20*3^.5) -- (1/2,27/61*3^.5);

\filldraw [fill=orange,draw=orange,opacity=1] (.4886, .8463) circle (0.09^.5*.075);
\filldraw [fill=red,draw=red,opacity=1] (.6067, .2194) circle (0.16^.5*.075);
\filldraw [fill=cyan,draw=cyan,opacity=1] (.5992, .3636) circle (0.27^.5*.075);
\filldraw [fill=green,draw=green,opacity=1] (.6636, .2992) circle (0.11^.5*.075);
\filldraw [fill=pink,draw=pink,opacity=1] (.5071, .6804) circle (0.15^.5*.075);
\filldraw [fill=gray,draw=gray,opacity=1] (.4896, .1624) circle (0.10^.5*.075);
\filldraw [fill=magenta,draw=magenta,opacity=1] (.5439, .4558) circle (0.12^.5*.075);

\end{tikzpicture}};

\node at (2,-0.7) {
\begin{tikzpicture}[scale=5]
\draw[line width=.5pt,gray] (0,0) -- (1,0) -- (1/2,1/2*3^.5) -- (0,0);
\node[scale=0.8pt] at (-0.03,-0.03) {$R$};
\node[scale=0.8pt] at (1.03,-0.03) {$B$};
\node[scale=0.8pt] at (1/2,1/2*3^.5+0.03) {$Y$};

\filldraw[fill=mondrianBlue,draw=mondrianBlue,opacity=1] (0,0) -- (135/146,0) -- (405/1412,405/1412*3^.5) -- (0,0);
\filldraw[fill=mondrianCyan,draw=mondrianCyan,opacity=1] (135/146,0) --(1,0) -- (229/248,19/248*3^.5) -- (59/188,59/188*3^.5) -- (405/1412,405/1412*3^.5) -- (135/146,0);
\filldraw[fill=mondrianGreen,draw=mondrianGreen,opacity=1] (59/188,59/188*3^.5) -- (229/248,19/248*3^.5) -- (59/104,45/103*3^.5) -- (9/20,9/20*3^.5) -- (59/188,59/188*3^.5);
\filldraw[fill=mondrianYellow,draw=mondrianYellow,opacity=1] (9/20,9/20*3^.5) -- (59/104,45/103*3^.5) -- (1/2,1/2*3^.5) -- (9/20,9/20*3^.5);

\filldraw [fill=orange,draw=orange,opacity=1] (.5011, .7991) circle (0.09^.5*.075);
\filldraw [fill=red,draw=red,opacity=1] (.5596, .1303) circle (0.16^.5*.075);
\filldraw [fill=cyan,draw=cyan,opacity=1] (.5048, .3446) circle (0.27^.5*.075);
\filldraw [fill=green,draw=green,opacity=1] (.7200, .2198) circle (0.11^.5*.075);
\filldraw [fill=pink,draw=pink,opacity=1] (.5745, .5455) circle (0.15^.5*.075);
\filldraw [fill=gray,draw=gray,opacity=1] (.3009, .1802) circle (0.10^.5*.075);
\filldraw [fill=magenta,draw=magenta,opacity=1] (.3825, .5163) circle (0.12^.5*.075);

\end{tikzpicture}};

\end{tikzpicture}
\vspace*{-1mm}
\caption{The four $M$ equilibria for the DS1 (top) and DS2 (bottom) games assuming the other-regarding preference model proposed by \cite{CharnessRabin2002}. The clusters represent choice and belief data as in Figure \ref{expdataDS}.}\label{CR}
\end{center}
\vspace*{-7mm}
\end{figure}

But even when accounting for behavioral and psychological elements, $M$ equilibrium is unlikely to always be correct. Given the minimal assumptions that $M$ equilibrium imposes, the reason it fails then offers important insights about behavior: is monotonicity violated or are beliefs biased in terms of payoff consequences?  $M$ equilibrium provides an easy-to-compute and parameter-free tool to answer this question.

\newpage
\addtolength{\baselineskip}{-0.9mm}
\vspace*{-15mm}
\bibliography{references}

\providecommand{\noopsort}[1]{}
\begin{thebibliography}{}

\bibitem[\protect\citeauthoryear{Agranov and Ortoleva}{Agranov and
  Ortoleva}{2017}]{agranov2017}
Agranov, M. and P.~Ortoleva (2017).
\newblock Stochastic choice and preferences for randomization.
\newblock {\em Journal of Political Economy\/}~{\em 125\/}(1), 40--68.

\bibitem[\protect\citeauthoryear{Alempaki, Colman, Koelle, Loomes, Pulford,
  et~al.}{Alempaki et~al.}{2019}]{alempaki2019}
Alempaki, D., A.~M. Colman, F.~Koelle, G.~Loomes, B.~D. Pulford, et~al. (2019).
\newblock Investigating the failure to best respond in experimental games.
\newblock Technical report.

\bibitem[\protect\citeauthoryear{Anderson, Goeree, and Holt}{Anderson
  et~al.}{2002}]{AndersonGoereeHolt2002}
Anderson, S.~P., J.~K. Goeree, and C.~A. Holt (2002).
\newblock The logit equilibrium: A perspective on intuitive behavioral
  anomalies.
\newblock {\em Southern Economic Journal\/}~{\em 69\/}(1), 21--47.

\bibitem[\protect\citeauthoryear{Arad and Rubinstein}{Arad and
  Rubinstein}{2012}]{arad2012}
Arad, A. and A.~Rubinstein (2012).
\newblock The 11-20 money request game: A level-k reasoning study.
\newblock {\em American Economic Review\/}~{\em 102\/}(7), 3561--73.

\bibitem[\protect\citeauthoryear{Ballinger and Wilcox}{Ballinger and
  Wilcox}{1997}]{ballinger1997}
Ballinger, T.~P. and N.~T. Wilcox (1997).
\newblock Decisions, error and heterogeneity.
\newblock {\em The Economic Journal\/}~{\em 107\/}(443), 1090--1105.

\bibitem[\protect\citeauthoryear{Bhatt and Camerer}{Bhatt and
  Camerer}{2005}]{bhatt2005}
Bhatt, M. and C.~F. Camerer (2005).
\newblock Self-referential thinking and equilibrium as states of mind in games:
  fmri evidence.
\newblock {\em Games and economic Behavior\/}~{\em 52\/}(2), 424--459.

\bibitem[\protect\citeauthoryear{Blume, Brandenburger, and Dekel}{Blume
  et~al.}{1991}]{BBD1991}
Blume, L., A.~Brandenburger, and E.~Dekel (1991).
\newblock Lexicographic probabilities and equilibrium refinements.
\newblock {\em Econometrica\/}~{\em 59\/}(1), 81--98.

\bibitem[\protect\citeauthoryear{Blume and Zame}{Blume and
  Zame}{1992}]{BlumeZame1992}
Blume, L. and W.~Zame (1992).
\newblock The algebraic geometry of competitive equilibrium.
\newblock In W.~Neuefeind and R.~Reizman (Eds.), {\em Economic Theory and
  International Trade; Essays in Memoriam J. Trout Rader}. Springer, Berlin.

\bibitem[\protect\citeauthoryear{Blume and Zame}{Blume and
  Zame}{1994}]{BlumeZame1994}
Blume, L. and W.~Zame (1994).
\newblock The algebraic geometry of perfect and sequential equilibrium.
\newblock {\em Econometrica\/}~{\em 62\/}(4), 783--794.

\bibitem[\protect\citeauthoryear{Bolton and Ockenfels}{Bolton and
  Ockenfels}{2000}]{bolton2000}
Bolton, G.~E. and A.~Ockenfels (2000).
\newblock Erc: A theory of equity, reciprocity, and competition.
\newblock {\em American economic review\/}~{\em 90\/}(1), 166--193.

\bibitem[\protect\citeauthoryear{Brayer}{Brayer}{1964}]{brayer1964}
Brayer, A.~R. (1964).
\newblock An experimental analysis of some variables of minimax theory.
\newblock {\em Behavioral science\/}~{\em 9\/}(1), 33--44.

\bibitem[\protect\citeauthoryear{Brown and Rosenthal}{Brown and
  Rosenthal}{1990}]{brown1990}
Brown, J.~N. and R.~W. Rosenthal (1990).
\newblock Testing the minimax hypothesis: a re-examination of o'neill's game
  experiment.
\newblock {\em Econometrica (1986-1998)\/}~{\em 58\/}(5), 1065.

\bibitem[\protect\citeauthoryear{Brunner, Camerer, and Goeree}{Brunner
  et~al.}{2011}]{brunner2011}
Brunner, C., C.~F. Camerer, and J.~K. Goeree (2011).
\newblock Stationary concepts for experimental 2 x 2 games: Comment.
\newblock {\em American Economic Review\/}~{\em 101\/}(2), 1029--40.

\bibitem[\protect\citeauthoryear{Camerer, Ho, and Chong}{Camerer
  et~al.}{2004}]{camerer2004}
Camerer, C.~F., T.-H. Ho, and J.-K. Chong (2004).
\newblock A cognitive hierarchy model of games.
\newblock {\em The Quarterly Journal of Economics\/}~{\em 119\/}(3), 861--898.

\bibitem[\protect\citeauthoryear{Capra, Goeree, Gomez, and Holt}{Capra
  et~al.}{1999}]{CapraGoereeGomezHolt1999}
Capra, C.~M., J.~K. Goeree, R.~Gomez, and C.~A. Holt (1999).
\newblock Anomalous behavior in a traveler’s dilemma?
\newblock {\em American Economic Review\/}~{\em 89\/}(3), 678--690.

\bibitem[\protect\citeauthoryear{Charness, Gneezy, and Kuhn}{Charness
  et~al.}{2012}]{charness2012}
Charness, G., U.~Gneezy, and M.~A. Kuhn (2012).
\newblock Experimental methods: Between-subject and within-subject design.
\newblock {\em Journal of Economic Behavior \& Organization\/}~{\em 81\/}(1),
  1--8.

\bibitem[\protect\citeauthoryear{Charness and Rabin}{Charness and
  Rabin}{2002}]{CharnessRabin2002}
Charness, G. and M.~Rabin (2002).
\newblock Understanding social preferences with simple tests.
\newblock {\em Quarterly Journal of Economics\/}~{\em 117\/}(3), 817--869.

\bibitem[\protect\citeauthoryear{Costa-Gomes, Crawford, and
  Broseta}{Costa-Gomes et~al.}{2001}]{costa2001cognition}
Costa-Gomes, M., V.~P. Crawford, and B.~Broseta (2001).
\newblock Cognition and behavior in normal-form games: An experimental study.
\newblock {\em Econometrica\/}~{\em 69\/}(5), 1193--1235.

\bibitem[\protect\citeauthoryear{Costa-Gomes and Weizs{\"a}cker}{Costa-Gomes
  and Weizs{\"a}cker}{2008}]{costa2008}
Costa-Gomes, M.~A. and G.~Weizs{\"a}cker (2008).
\newblock Stated beliefs and play in normal-form games.
\newblock {\em The Review of Economic Studies\/}~{\em 75\/}(3), 729--762.

\bibitem[\protect\citeauthoryear{Crawford}{Crawford}{2018}]{crawford2018}
Crawford, V.~P. (2018).
\newblock Experiments on cognition, communication, coordination, and
  cooperation in relationships.
\newblock {\em Annual Review of Economics\/}.

\bibitem[\protect\citeauthoryear{Crawford, Costa-Gomes, and Iriberri}{Crawford
  et~al.}{2013}]{crawford2013}
Crawford, V.~P., M.~A. Costa-Gomes, and N.~Iriberri (2013).
\newblock Structural models of nonequilibrium strategic thinking: Theory,
  evidence, and applications.
\newblock {\em Journal of Economic Literature\/}~{\em 51\/}(1), 5--62.

\bibitem[\protect\citeauthoryear{Croson}{Croson}{2000}]{croson2000}
Croson, R.~T. (2000).
\newblock Thinking like a game theorist: factors affecting the frequency of
  equilibrium play.
\newblock {\em Journal of economic behavior \& organization\/}~{\em 41\/}(3),
  299--314.

\bibitem[\protect\citeauthoryear{Danz, Fehr, and K{\"u}bler}{Danz
  et~al.}{2012}]{danz2012}
Danz, D.~N., D.~Fehr, and D.~K{\"u}bler (2012).
\newblock Information and beliefs in a repeated normal-form game.
\newblock {\em Experimental Economics\/}~{\em 15\/}(4), 622--640.

\bibitem[\protect\citeauthoryear{Danz, Vesterlund, and Wilson}{Danz
  et~al.}{2020}]{danzVesterlundWilson2020}
Danz, D.~N., L.~Vesterlund, and A.~Wilson (2020).
\newblock Belief elicitation: Limiting truth telling with information on
  incentives.
\newblock {\em NBER working paper\/}.

\bibitem[\protect\citeauthoryear{DellaVigna}{DellaVigna}{2018}]{dellavigna2018}
DellaVigna, S. (2018).
\newblock Structural behavioral economics.
\newblock In {\em Handbook of Behavioral Economics: Applications and
  Foundations 1}, Volume~1, pp.\  613--723. Elsevier.

\bibitem[\protect\citeauthoryear{Demichelis and Ritzberger}{Demichelis and
  Ritzberger}{2003}]{demichelis2003}
Demichelis, S. and K.~Ritzberger (2003).
\newblock From evolutionary to strategic stability.
\newblock {\em Journal of Economic Theory\/}~{\em 113}, 51--75.

\bibitem[\protect\citeauthoryear{Dhami}{Dhami}{2016}]{dhami2016}
Dhami, S. (2016).
\newblock {\em The foundations of behavioral economic analysis}.
\newblock Oxford University Press.

\bibitem[\protect\citeauthoryear{Ehrblatt, Hyndman, Ozbay, and
  Schotter}{Ehrblatt et~al.}{2005}]{ehrblatt2005}
Ehrblatt, W.~Z., K.~Hyndman, E.~Ozbay, and A.~Schotter (2005).
\newblock Convergence: An experimental study.
\newblock Technical report, Mimeo NYU.

\bibitem[\protect\citeauthoryear{Erev and Roth}{Erev and Roth}{1998}]{erev1998}
Erev, I. and A.~E. Roth (1998).
\newblock Predicting how people play games: Reinforcement learning in
  experimental games with unique, mixed strategy equilibria.
\newblock {\em American economic review\/}, 848--881.

\bibitem[\protect\citeauthoryear{Esteban-Casanelles and
  Gon{\c{c}}alves}{Esteban-Casanelles and Gon{\c{c}}alves}{2020}]{esteban2020}
Esteban-Casanelles, T. and D.~Gon{\c{c}}alves (2020).
\newblock The effect of incentives on choices and beliefs in games: An
  experiment.
\newblock Technical report, Mimeo.

\bibitem[\protect\citeauthoryear{Eyster}{Eyster}{2019}]{eyster2019}
Eyster, E. (2019).
\newblock Errors in strategic reasoning.
\newblock {\em Handbook of Behavioral Economics: Applications and Foundations
  1\/}~{\em 2}, 187--259.

\bibitem[\protect\citeauthoryear{Fehr and Schmidt}{Fehr and
  Schmidt}{1999}]{FehrSchmidt1999}
Fehr, E. and K.~Schmidt (1999).
\newblock A theory of fairness, competition, and cooperation.
\newblock {\em Quarterly Journal of Economics\/}~{\em 114\/}(3), 817--868.

\bibitem[\protect\citeauthoryear{Foroughifar}{Foroughifar}{2020}]{foroughifar2020}
Foroughifar, M. (2020).
\newblock Why are rational expectations violated in social interactions?
\newblock {\em Available at SSRN 3733115\/}.

\bibitem[\protect\citeauthoryear{Fr{\'e}chette}{Fr{\'e}chette}{2012}]{frechette2012}
Fr{\'e}chette, G.~R. (2012).
\newblock Session-effects in the laboratory.
\newblock {\em Experimental Economics\/}~{\em 15\/}(3), 485--498.

\bibitem[\protect\citeauthoryear{Friedman and Ward}{Friedman and
  Ward}{2019}]{friedman2019}
Friedman, E. and J.~Ward (2019).
\newblock Stochastic choice and noisy beliefs in games: an experiment.
\newblock Technical report, Mimeo.

\bibitem[\protect\citeauthoryear{Friedman and Mezzetti}{Friedman and
  Mezzetti}{2005}]{friedman2005}
Friedman, J.~W. and C.~Mezzetti (2005).
\newblock Random belief equilibrium in normal form games.
\newblock {\em Games and Economic Behavior\/}~{\em 51\/}(2), 296--323.

\bibitem[\protect\citeauthoryear{Fudenberg and Levine}{Fudenberg and
  Levine}{1993}]{FudenbergLevine1993}
Fudenberg, D. and D.~K. Levine (1993).
\newblock Self-confirming equilibrium.
\newblock {\em Games and Economic Behavior\/}~{\em 61\/}(3), 523--545.

\bibitem[\protect\citeauthoryear{Goeree and Holt}{Goeree and
  Holt}{2001}]{goeree2001}
Goeree, J.~K. and C.~A. Holt (2001).
\newblock Ten little treasures of game theory and ten intuitive contradictions.
\newblock {\em American Economic Review\/}~{\em 91\/}(5), 1402--1422.

\bibitem[\protect\citeauthoryear{Goeree and Holt}{Goeree and
  Holt}{2004}]{goeree2004}
Goeree, J.~K. and C.~A. Holt (2004).
\newblock A model of noisy introspection.
\newblock {\em Games and Economic Behavior\/}~{\em 46\/}(2), 365--382.

\bibitem[\protect\citeauthoryear{Goeree, Holt, Louis, Palfrey, and
  Rogers}{Goeree et~al.}{2019}]{Goeree2019}
Goeree, J.~K., C.~A. Holt, P.~Louis, T.~R. Palfrey, and B.~W. Rogers (2019).
\newblock Rank-dependent choice equilibrium: A generalization of qre.
\newblock In A.~Schram and A.~Ule (Eds.), {\em Handbook of Research Methods and
  Applications in Experimental Economics}. Edwar Elgar Publishing.

\bibitem[\protect\citeauthoryear{Goeree, Holt, and Palfrey}{Goeree
  et~al.}{2003}]{goeree2003}
Goeree, J.~K., C.~A. Holt, and T.~R. Palfrey (2003).
\newblock Risk averse behavior in generalized matching pennies games.
\newblock {\em Games and Economic Behavior\/}~{\em 45\/}(1), 97--113.

\bibitem[\protect\citeauthoryear{Goeree, Holt, and Palfrey}{Goeree
  et~al.}{2005}]{goeree2005}
Goeree, J.~K., C.~A. Holt, and T.~R. Palfrey (2005).
\newblock Regular quantal response equilibrium.
\newblock {\em Experimental Economics\/}~{\em 8}, 347--367.

\bibitem[\protect\citeauthoryear{Goeree, Holt, and Palfrey}{Goeree
  et~al.}{2016}]{GoereeHoltPalfrey2016}
Goeree, J.~K., C.~A. Holt, and T.~R. Palfrey (2016).
\newblock {\em Quantal Response Equilibrium}.
\newblock Princeton, USA: Princeton University Press.

\bibitem[\protect\citeauthoryear{Goeree, Louis, and Zhang}{Goeree
  et~al.}{2017}]{goeree2017}
Goeree, J.~K., P.~Louis, and J.~Zhang (2017).
\newblock Noisy introspection in the 11--20 game.
\newblock {\em The Economic Journal\/}~{\em 128\/}(611), 1509--1530.

\bibitem[\protect\citeauthoryear{Haile, Hortacsu, and Kosenok}{Haile
  et~al.}{2008}]{HaileHortacsuKosenok2008}
Haile, P., A.~Hortacsu, and G.~Kosenok (2008).
\newblock On the empirical content of quantal response equilibrium.
\newblock {\em American Economic Review\/}~{\em 98\/}(1), 180--200.

\bibitem[\protect\citeauthoryear{Halmos}{Halmos}{1998}]{Halmos1998}
Halmos, P.~R. (1998).
\newblock {\em Naive Set Theory}.
\newblock New York, USA: Springer-Verlag New York.

\bibitem[\protect\citeauthoryear{Harsanyi}{Harsanyi}{1973}]{Harsanyi1973}
Harsanyi, J.~C. (1973).
\newblock Games with randomly disturbed payoffs: a new rationale for mixed
  strategy equilibrium points.
\newblock {\em International Journal of Game Theory\/}~{\em 2}, 1--23.

\bibitem[\protect\citeauthoryear{Hey}{Hey}{2001}]{hey2001}
Hey, J.~D. (2001).
\newblock Does repetition improve consistency?
\newblock {\em Experimental economics\/}~{\em 4\/}(1), 5--54.

\bibitem[\protect\citeauthoryear{Hey and Orme}{Hey and Orme}{1994}]{hey1994}
Hey, J.~D. and C.~Orme (1994).
\newblock Investigating generalizations of expected utility theory using
  experimental data.
\newblock {\em Econometrica: Journal of the Econometric Society\/}, 1291--1326.

\bibitem[\protect\citeauthoryear{Hoffmann}{Hoffmann}{2014}]{hoffmann2014}
Hoffmann, T. (2014).
\newblock The effect of belief elicitation game play.

\bibitem[\protect\citeauthoryear{Holt and Smith}{Holt and
  Smith}{2016}]{holt2016}
Holt, C.~A. and A.~M. Smith (2016).
\newblock Belief elicitation with a synchronized lottery choice menu that is
  invariant to risk attitudes.
\newblock {\em American Economic Journal: Microeconomics\/}~{\em 8\/}(1),
  110--39.

\bibitem[\protect\citeauthoryear{Hossain and Okui}{Hossain and
  Okui}{2013}]{hossain2013}
Hossain, T. and R.~Okui (2013).
\newblock The binarized scoring rule.
\newblock {\em Review of Economic Studies\/}~{\em 80\/}(3), 984--1001.

\bibitem[\protect\citeauthoryear{Huck and Weizs{\"a}cker}{Huck and
  Weizs{\"a}cker}{2002}]{huck2002}
Huck, S. and G.~Weizs{\"a}cker (2002).
\newblock Do players correctly estimate what others do?: Evidence of
  conservatism in beliefs.
\newblock {\em Journal of Economic Behavior \& Organization\/}~{\em 47\/}(1),
  71--85.

\bibitem[\protect\citeauthoryear{Hyndman, Ozbay, Schotter, and
  Ehrblatt}{Hyndman et~al.}{2012}]{hyndman2012}
Hyndman, K., E.~Y. Ozbay, A.~Schotter, and W.~Z. Ehrblatt (2012).
\newblock Convergence: an experimental study of teaching and learning in
  repeated games.
\newblock {\em Journal of the European Economic Association\/}~{\em 10\/}(3),
  573--604.

\bibitem[\protect\citeauthoryear{Hyndman, Terracol, and Vaksmann}{Hyndman
  et~al.}{2009}]{hyndman2009}
Hyndman, K., A.~Terracol, and J.~Vaksmann (2009).
\newblock Learning and sophistication in coordination games.
\newblock {\em Experimental Economics\/}~{\em 12\/}(4), 450--472.

\bibitem[\protect\citeauthoryear{Hyndman, Terracol, and Vaksmann}{Hyndman
  et~al.}{2010}]{hyndman2010}
Hyndman, K., A.~Terracol, and J.~Vaksmann (2010).
\newblock Strategic interactions and belief formation: An experiment.
\newblock {\em Applied Economics Letters\/}~{\em 17\/}(17), 1681--1685.

\bibitem[\protect\citeauthoryear{Hyndman, Terracol, and Vaksmann}{Hyndman
  et~al.}{2013}]{hyndman2013}
Hyndman, K.~B., A.~Terracol, and J.~Vaksmann (2013).
\newblock Beliefs and (in) stability in normal-form games.
\newblock {\em Available at SSRN 2270497\/}.

\bibitem[\protect\citeauthoryear{Kakutani}{Kakutani}{1941}]{Kakutani1941}
Kakutani, S. (1941).
\newblock A generalization of brouwer's fixed-point theorem.
\newblock {\em Duke Mathematical Journal\/}~{\em 8\/}(3), 457--459.

\bibitem[\protect\citeauthoryear{Kirman}{Kirman}{2014}]{Kirman2014}
Kirman, A. (2014).
\newblock Is it rational to have rational expectations?
\newblock {\em Mind and Society\/}~{\em 13\/}(1), 29--48.

\bibitem[\protect\citeauthoryear{Kohlberg and Mertens}{Kohlberg and
  Mertens}{1986}]{KohlbergMertens1986}
Kohlberg, E. and J.-F. Mertens (1986).
\newblock On the strategic stability of equilibria.
\newblock {\em Econometrica\/}~{\em 54\/}(5), 1003--1037.

\bibitem[\protect\citeauthoryear{Kreps and Wilson}{Kreps and
  Wilson}{1982}]{KrepsWilson1982}
Kreps, D. and R.~Wilson (1982).
\newblock Sequential equilibria.
\newblock {\em Econometrica\/}~{\em 50}, 863--894.

\bibitem[\protect\citeauthoryear{Kubler and Scheidegger}{Kubler and
  Scheidegger}{2018}]{KublerScheidegger2018}
Kubler, F. and S.~Scheidegger (2018).
\newblock Self-justified equilibria: Existence and computation.
\newblock {\em Working paper, University of Zurich\/}.

\bibitem[\protect\citeauthoryear{Kubler and Schmedders}{Kubler and
  Schmedders}{2010}]{KublerSchmedders2010}
Kubler, F. and K.~Schmedders (2010).
\newblock Competitive equilibria in semi-algebraic economies.
\newblock {\em Journal of Economic Theory\/}~{\em 145\/}(1), 301--330.

\bibitem[\protect\citeauthoryear{Lieberman}{Lieberman}{1960}]{lieberman1960}
Lieberman, B. (1960).
\newblock Human behavior in a strictly determined 3$\times$ 3 matrix game.
\newblock {\em Behavioral Science\/}~{\em 5\/}(4), 317--322.

\bibitem[\protect\citeauthoryear{Lieberman}{Lieberman}{1962}]{lieberman1962}
Lieberman, B. (1962).
\newblock Experimental studies of conflict in some two-person games.
\newblock In J.~Criswell, H.~Solomon, and P.~Suppes (Eds.), {\em Mathematical
  Methods in Small Group Processes}. Stanford, Cal: Stanford University Press.

\bibitem[\protect\citeauthoryear{MacQueen}{MacQueen}{1967}]{macqueen1967}
MacQueen, J. (1967).
\newblock Some methods for classification and analysis of multivariate
  observations.
\newblock In {\em Proceedings of the fifth Berkeley symposium on mathematical
  statistics and probability}, Volume~1, pp.\  281--297. Oakland, CA, USA.

\bibitem[\protect\citeauthoryear{Manski}{Manski}{2004}]{manski2004}
Manski, C.~F. (2004).
\newblock Measuring expectations.
\newblock {\em Econometrica\/}~{\em 72\/}(5), 1329--1376.

\bibitem[\protect\citeauthoryear{McKelvey and Palfrey}{McKelvey and
  Palfrey}{1992}]{mckelvey1992}
McKelvey, R.~D. and T.~R. Palfrey (1992).
\newblock An experimental study of the centipede game.
\newblock {\em Econometrica\/}~{\em 60\/}(4), 803--836.

\bibitem[\protect\citeauthoryear{McKelvey and Palfrey}{McKelvey and
  Palfrey}{1995}]{McKelveyPalfrey1995}
McKelvey, R.~D. and T.~R. Palfrey (1995).
\newblock {Quantal Response Equilibria for Normal Form Games}.
\newblock {\em Games and Economic Behavior\/}~{\em 10\/}(1), 6--38.

\bibitem[\protect\citeauthoryear{McKelvey, Palfrey, and Weber}{McKelvey
  et~al.}{2000}]{mckelvey2000}
McKelvey, R.~D., T.~R. Palfrey, and R.~A. Weber (2000).
\newblock The effects of payoff magnitude and heterogeneity on behavior in
  2$\times$ 2 games with unique mixed strategy equilibria.
\newblock {\em Journal of Economic Behavior \& Organization\/}~{\em 42\/}(4),
  523--548.

\bibitem[\protect\citeauthoryear{McLennan}{McLennan}{2016}]{mclennan2016}
McLennan, A. (2016).
\newblock The index+ 1 principle.
\newblock Technical report, Mimeo.

\bibitem[\protect\citeauthoryear{Messick}{Messick}{1967}]{messick1967}
Messick, D.~M. (1967).
\newblock Interdependent decision strategies in zero-sum games: A
  computer-controlled study.
\newblock {\em Behavioral Science\/}~{\em 12\/}(1), 33--48.

\bibitem[\protect\citeauthoryear{Muth}{Muth}{1961}]{Muth1961}
Muth, J. (1961).
\newblock Rational expectations and the theory of price movements.
\newblock {\em Econometrica\/}~{\em 29\/}(3), 315--335.

\bibitem[\protect\citeauthoryear{Myerson}{Myerson}{1978}]{Myerson1978}
Myerson, R.~B. (1978).
\newblock Refinements of the nash equilibrium concept.
\newblock {\em International Journal of Game Theory\/}~{\em 7}, 73--80.

\bibitem[\protect\citeauthoryear{Nagel}{Nagel}{1995}]{nagel1995}
Nagel, R. (1995).
\newblock Unraveling in guessing games: An experimental study.
\newblock {\em The American Economic Review\/}~{\em 85\/}(5), 1313--1326.

\bibitem[\protect\citeauthoryear{Nash}{Nash}{1950}]{Nash1950}
Nash, J.~F. (1950).
\newblock Equilibrium points in $n$-person games.
\newblock {\em Proceedings of the National Academy of Sciences\/}~{\em
  36\/}(1), 48--49.

\bibitem[\protect\citeauthoryear{Nyarko and Schotter}{Nyarko and
  Schotter}{2002}]{nyarko2002}
Nyarko, Y. and A.~Schotter (2002).
\newblock An experimental study of belief learning using elicited beliefs.
\newblock {\em Econometrica\/}~{\em 70\/}(3), 971--1005.

\bibitem[\protect\citeauthoryear{Ochs}{Ochs}{1995}]{ochs1995}
Ochs, J. (1995).
\newblock Games with unique, mixed strategy equilibria: An experimental study.
\newblock {\em Games and Economic Behavior\/}~{\em 10\/}(1), 202--217.

\bibitem[\protect\citeauthoryear{O'Neill}{O'Neill}{1987}]{oneill1987}
O'Neill, B. (1987).
\newblock Nonmetric test of the minimax theory of two-person zerosum games.
\newblock {\em Proceedings of the national academy of sciences\/}~{\em
  84\/}(7), 2106--2109.

\bibitem[\protect\citeauthoryear{Palfrey and Wang}{Palfrey and
  Wang}{2009}]{palfrey2009}
Palfrey, T.~R. and S.~W. Wang (2009).
\newblock On eliciting beliefs in strategic games.
\newblock {\em Journal of Economic Behavior \& Organization\/}~{\em 71\/}(2),
  98--109.

\bibitem[\protect\citeauthoryear{Polonio and Coricelli}{Polonio and
  Coricelli}{2019}]{polonio2019}
Polonio, L. and G.~Coricelli (2019).
\newblock Testing the level of consistency between choices and beliefs in games
  using eye-tracking.
\newblock {\em Games and Economic Behavior\/}~{\em 113}, 566--586.

\bibitem[\protect\citeauthoryear{Rabin}{Rabin}{2013}]{rabin2013}
Rabin, M. (2013).
\newblock Incorporating limited rationality into economics.
\newblock {\em Journal of Economic Literature\/}~{\em 51\/}(2), 528--43.

\bibitem[\protect\citeauthoryear{Rapoport and Boebel}{Rapoport and
  Boebel}{1992}]{rapoport1992}
Rapoport, A. and R.~B. Boebel (1992).
\newblock Mixed strategies in strictly competitive games: A further test of the
  minimax hypothesis.
\newblock {\em Games and Economic Behavior\/}~{\em 4\/}(2), 261--283.

\bibitem[\protect\citeauthoryear{Rey-Biel}{Rey-Biel}{2009}]{rey2009}
Rey-Biel, P. (2009).
\newblock Equilibrium play and best response to (stated) beliefs in normal form
  games.
\newblock {\em Games and Economic Behavior\/}~{\em 65\/}(2), 572--585.

\bibitem[\protect\citeauthoryear{Rogers, Palfrey, and Camerer}{Rogers
  et~al.}{2009}]{rogers2009}
Rogers, B.~W., T.~R. Palfrey, and C.~F. Camerer (2009).
\newblock Heterogeneous quantal response equilibrium and cognitive hierarchies.
\newblock {\em Journal of Economic Theory\/}~{\em 144\/}(4), 1440--1467.

\bibitem[\protect\citeauthoryear{Sargent}{Sargent}{1991}]{Sargent1991}
Sargent, T.~J. (1991).
\newblock {\em Bounded Rationality in Macroeconomics}.
\newblock Oxford, UK: Oxford University Press.

\bibitem[\protect\citeauthoryear{Sargent}{Sargent}{1999}]{Sargent1999}
Sargent, T.~J. (1999).
\newblock {\em The Conquest of American Inflation}.
\newblock Princeton, USA: Princeton University Press.

\bibitem[\protect\citeauthoryear{Schanuel, Simon, and Zame}{Schanuel
  et~al.}{1991}]{SchanuelSimonZame1991}
Schanuel, S., L.~Simon, and W.~Zame (1991).
\newblock The algebraic geometry of games and the tracing procedure.
\newblock In R.~Selten (Ed.), {\em Game Equilibrium Models II}. Springer,
  Berlin and Heidelberg.

\bibitem[\protect\citeauthoryear{Schlag, Tremewan, and Van~der Weele}{Schlag
  et~al.}{2015}]{schlag2015}
Schlag, K.~H., J.~Tremewan, and J.~J. Van~der Weele (2015).
\newblock A penny for your thoughts: a survey of methods for eliciting beliefs.
\newblock {\em Experimental Economics\/}~{\em 18\/}(3), 457--490.

\bibitem[\protect\citeauthoryear{Schotter and Trevino}{Schotter and
  Trevino}{2014}]{schotter2014}
Schotter, A. and I.~Trevino (2014).
\newblock Belief elicitation in the laboratory.
\newblock {\em Annu. Rev. Econ.\/}~{\em 6\/}(1), 103--128.

\bibitem[\protect\citeauthoryear{Selten}{Selten}{1975}]{Selten1975}
Selten, R. (1975).
\newblock Reexamination of the perfectness concept for equilibrium points in
  extensive games.
\newblock {\em International Journal of Game Theory\/}~{\em 4}, 25--55.

\bibitem[\protect\citeauthoryear{Selten and Chmura}{Selten and
  Chmura}{2008}]{selten2008}
Selten, R. and T.~Chmura (2008).
\newblock Stationary concepts for experimental 2x2-games.
\newblock {\em American Economic Review\/}~{\em 98\/}(3), 938--66.

\bibitem[\protect\citeauthoryear{Selten, Chmura, and Goerg}{Selten
  et~al.}{2011}]{selten2011}
Selten, R., T.~Chmura, and S.~J. Goerg (2011).
\newblock Stationary concepts for experimental 2 x 2 games: Reply.
\newblock {\em American Economic Review\/}~{\em 101\/}(2), 1041--44.

\bibitem[\protect\citeauthoryear{Simon}{Simon}{1984}]{Simon1984}
Simon, H.~A. (1984).
\newblock On the behavioral and rational foundations of economic dynamics.
\newblock {\em Journal of Economic Behavior and Organization\/}~{\em 5\/}(1),
  35--55.

\bibitem[\protect\citeauthoryear{Stahl and Wilson}{Stahl and
  Wilson}{1994}]{stahl1994}
Stahl, D.~O. and P.~W. Wilson (1994).
\newblock Experimental evidence on players' models of other players.
\newblock {\em Journal of economic behavior \& organization\/}~{\em 25\/}(3),
  309--327.

\bibitem[\protect\citeauthoryear{Stahl and Wilson}{Stahl and
  Wilson}{1995}]{stahl1995}
Stahl, D.~O. and P.~W. Wilson (1995).
\newblock On players' models of other players: Theory and experimental
  evidence.
\newblock {\em Games and Economic Behavior\/}~{\em 10\/}(1), 218--254.

\bibitem[\protect\citeauthoryear{Sutter, Czermak, and Feri}{Sutter
  et~al.}{2013}]{sutter2013}
Sutter, M., S.~Czermak, and F.~Feri (2013).
\newblock Strategic sophistication of individuals and teams. experimental
  evidence.
\newblock {\em European economic review\/}~{\em 64}, 395--410.

\bibitem[\protect\citeauthoryear{Terracol and Vaksmann}{Terracol and
  Vaksmann}{2009}]{terracol2009}
Terracol, A. and J.~Vaksmann (2009).
\newblock Dumbing down rational players: Learning and teaching in an
  experimental game.
\newblock {\em Journal of Economic Behavior \& Organization\/}~{\em 70\/}(1-2),
  54--71.

\bibitem[\protect\citeauthoryear{Trautmann and van~de Kuilen}{Trautmann and
  van~de Kuilen}{2014}]{trautmann2015}
Trautmann, S.~T. and G.~van~de Kuilen (2014, 09).
\newblock {Belief Elicitation: A Horse Race among Truth Serums}.
\newblock {\em The Economic Journal\/}~{\em 125\/}(589), 2116--2135.

\bibitem[\protect\citeauthoryear{Tversky}{Tversky}{1969}]{tversky1969}
Tversky, A. (1969).
\newblock Intransitivity of preferences.
\newblock {\em Psychological review\/}~{\em 76\/}(1), 31.

\bibitem[\protect\citeauthoryear{Tversky and Kahneman}{Tversky and
  Kahneman}{1981}]{tversky1981}
Tversky, A. and D.~Kahneman (1981).
\newblock The framing of decisions and the psychology of choice.
\newblock {\em science\/}~{\em 211\/}(4481), 453--458.

\bibitem[\protect\citeauthoryear{van Damme}{van Damme}{1996}]{vanDamme1996}
van Damme, E. (1996).
\newblock {\em Stability and perfection of Nash equilibria}.
\newblock Heidelberg, Germany: Springer-Verlag.

\bibitem[\protect\citeauthoryear{von Neumann and Morgenstern}{von Neumann and
  Morgenstern}{1944}]{vonNeumannMoregenstern1944}
von Neumann, J. and O.~Morgenstern (1944).
\newblock {\em Theory of Games and Economic Behavior}.
\newblock Princeton, USA: Princeton University Press.

\bibitem[\protect\citeauthoryear{Wang}{Wang}{2011}]{wang2011}
Wang, S.~W. (2011).
\newblock Incentive effects: The case of belief elicitation from individuals in
  groups.
\newblock {\em Economics Letters\/}~{\em 111\/}(1), 30--33.

\bibitem[\protect\citeauthoryear{Weizs{\"a}cker}{Weizs{\"a}cker}{2003}]{weizsacker2003}
Weizs{\"a}cker, G. (2003).
\newblock Ignoring the rationality of others: evidence from experimental
  normal-form games.
\newblock {\em Games and Economic Behavior\/}~{\em 44\/}(1), 145--171.

\bibitem[\protect\citeauthoryear{Wilson and Vespa}{Wilson and
  Vespa}{2018}]{wilson2018}
Wilson, A. and E.~Vespa (2018).
\newblock Paired-uniform scoring: Implementing a binarized scoring rule with
  non-mathematical language.
\newblock Technical report, Working paper.

\bibitem[\protect\citeauthoryear{Wolff and Bauer}{Wolff and
  Bauer}{2021}]{wolff2021}
Wolff, I. and D.~Bauer (2021).
\newblock Why is belief-action consistency so low? e role of belief
  uncertainty.

\bibitem[\protect\citeauthoryear{Woodford}{Woodford}{1990}]{Woodford1990}
Woodford, M. (1990).
\newblock Learning to believe in sunspots.
\newblock {\em Econometrica\/}~{\em 58\/}(2), 277--307.

\bibitem[\protect\citeauthoryear{Wright and Leyton-Brown}{Wright and
  Leyton-Brown}{2017}]{wright2017}
Wright, J.~R. and K.~Leyton-Brown (2017).
\newblock Predicting human behavior in unrepeated, simultaneous-move games.
\newblock {\em Games and Economic Behavior\/}~{\em 106}, 16--37.

\bibitem[\protect\citeauthoryear{Wu and Jiang}{Wu and
  Jiang}{1962}]{WuJiang1962}
Wu, W.~T. and J.-H. Jiang (1962).
\newblock Essential equilibrium points of n-person noncooperative games.
\newblock {\em Scientia Sinica\/}~{\em 11}, 1307--1322.

\end{thebibliography}
\bibliographystyle{chicago}

\newpage

\startappendix
\addtolength{\baselineskip}{-0.55mm}

\newappendix{Variational Inequality Characterization}
\label{app:compAlt}

Here we present a characterization based on variational inequalities akin to those characterizing Nash equilibrium that is well-suited for computation. Recall that $\sigma\in\Sigma$ is a Nash equilibrium if $\pi_i(\sigma_i,\sigma_{-i})\geq\pi_i(s_i,\sigma_{-i})$ for all $s_i\in S_i$, $i\in N$. We will rewrite these inequalities as variational inequalities. For $i\in N$, let $\rho_i$ denote a permutation of the set $\{1,2,\ldots,K_i\}$ and for $v\in\field{R}^{K_i}$ let $\rho_i(v)=(v_{\rho_i(1)},\ldots,v_{\rho_i(K_i)})$. For $v,w\in\field{R}^{K_i}$ let $\langle v|w\rangle=\sum_{k=1}^{K_i}v_{k}w_{k}$ denote the usual inner product. The inequalities defining Nash equilibrium can be written as
\begin{displaymath}
  \langle\pi_i(\sigma_{-i})\,|\,\sigma_i-\rho_i(\hat{\mu}_i)\rangle\,\geq\,0
\end{displaymath}
for all permutations $\rho_i$ and $i\in N$. (Recall that $\hat{\mu}_i=s_{i1}=(1,0,\ldots,0)$ so the set of all permutations of $\hat{\mu}_i$ is $S_i$.) These variational inequalities have an intuitive
geometric interpretation. For instance, if $\sigma_i$ is a pure strategy corresponding to a vertex of $\Sigma_i$ then the differences $\sigma_i-\rho_i(\hat{\mu}_i)$ define the normal cone at $\sigma_i$, see the left panel of Figure \ref{normalfan}. The above inequality requires $\pi_i(\sigma)$ to lie in this normal cone. And if $\sigma_i$ is totally mixed, the normal cone is one-dimensional and generated by $e=(1,1,\ldots,1)$. The requirement that $\pi_i(\sigma)$ lies in this normal cone means that all expected payoffs are equal.

\begin{figure}[h]
\begin{center}
\begin{tikzpicture}[scale=0.72]
\draw[line width=1pt] (0,0) -- (4,0) -- (2,2*3^.5) -- (0,0);
\filldraw[fill=gray!50,opacity=0.1] (0,0) -- (4,0) -- (2,2*3^.5) -- (0,0);

\draw[line width=0.5pt,dashed,->] (0,0) -- (-1/2*3^.5,1/2);
\draw[line width=0.5pt,dashed,->] (0,0) -- (0,-1);
\draw[line width=0.5pt,dashed,->] (0,0) -- (-2/3,-2/9*3^.5);
\filldraw[fill=red!50,opacity=.3] (-3^.5,1) -- (0,0) -- (0,-2) decorate [decoration=snake] { -- (-3^.5,1)};

\draw[line width=0.5pt,dashed,->] (2,2*3^.5) -- (2+1/2*3^.5,2*3^.5+1/2);
\draw[line width=0.5pt,dashed,->] (2,2*3^.5) -- (2-1/2*3^.5,2*3^.5+1/2);
\draw[line width=0.5pt,dashed,->] (2,2*3^.5) -- (2,2*3^.5+4/9*3^.5);
\filldraw[fill=yellow!50,opacity=.3] (2+3^.5,2*3^.5+1) -- (2,2*3^.5) -- (2-3^.5,2*3^.5+1) decorate [decoration=snake] { -- (2+3^.5,2*3^.5+1)};

\draw[line width=0.5pt,dashed,->] (4,0) -- (4+1/2*3^.5,1/2);
\draw[line width=0.5pt,dashed,->] (4,0) -- (4,-1);
\draw[line width=0.5pt,dashed,->] (4,0) -- (14/3,-2/9*3^.5);
\filldraw[fill=blue!50,opacity=.3] (4+3^.5,1) -- (4,0) -- (4,-2) decorate [decoration=snake] { -- (4+3^.5,1)};

\draw[line width=0.5pt,dashed,->] (2,0) -- (2,-1);
\draw[line width=0.5pt,dashed,->] (3,3^.5) -- (3+1/2*3^.5,3^.5+1/2);
\draw[line width=0.5pt,dashed,->] (1,3^.5) -- (1-1/2*3^.5,3^.5+1/2);

\fill[black] (0,0) circle (3pt); \node at (.7,.4) {$\sigma_i$};

\node at (11,1.3) {
\begin{tikzpicture}[scale=0.72]
\draw[line width=1pt,gray] (0,0) -- (8,0) -- (4,4*3^.5) -- (0,0);
\draw[line width=1pt] (10/3,2/3*3^.5) -- (8/3,4/3*3^.5) -- (10/3,2*3^.5) -- (14/3,2*3^.5) -- (16/3,4/3*3^.5) -- (14/3,2/3*3^.5) -- (10/3,2/3*3^.5);
\filldraw[fill=gray!50,opacity=0.1] (10/3,2/3*3^.5) -- (8/3,4/3*3^.5) -- (10/3,2*3^.5) -- (14/3,2*3^.5) -- (16/3,4/3*3^.5) -- (14/3,2/3*3^.5) -- (10/3,2/3*3^.5);

\draw[line width=0.5pt,dashed,->] (10/3,2/3*3^.5) -- (10/3,2/3*3^.5-1);
\draw[line width=0.5pt,dashed,->] (10/3,2/3*3^.5) -- (10/3-1/2*3^.5,2/3*3^.5-1/2);
\filldraw[fill=red!50,opacity=.2] (10/3,2/3*3^.5-1) -- (10/3,2/3*3^.5) -- (10/3-1/2*3^.5,2/3*3^.5-1/2) decorate [decoration=snake] { -- (10/3,2/3*3^.5-1)};

\draw[line width=0.5pt,dashed,->] (8/3,4/3*3^.5) -- (8/3-1/2*3^.5,4/3*3^.5-1/2);
\draw[line width=0.5pt,dashed,->] (8/3,4/3*3^.5) -- (8/3-1/2*3^.5,4/3*3^.5+1/2);
\filldraw[fill=orange!50,opacity=.2] (8/3-1/2*3^.5,4/3*3^.5-1/2) --(8/3,4/3*3^.5) -- (8/3-1/2*3^.5,4/3*3^.5+1/2) decorate [decoration=snake] { -- (8/3-1/2*3^.5,4/3*3^.5-1/2)};

\draw[line width=0.5pt,dashed,->] (10/3,2*3^.5) -- (10/3-1/2*3^.5,2*3^.5+1/2);
\draw[line width=0.5pt,dashed,->] (10/3,2*3^.5) -- (10/3,2*3^.5+1);
\filldraw[fill=yellow!50,opacity=.2] (10/3-1/2*3^.5,2*3^.5+1/2) -- (10/3,2*3^.5) -- (10/3,2*3^.5+1) decorate [decoration=snake] { -- (10/3-1/2*3^.5,2*3^.5+1/2)};

\draw[line width=0.5pt,dashed,->] (14/3,2*3^.5) -- (14/3,2*3^.5+1);
\draw[line width=0.5pt,dashed,->] (14/3,2*3^.5) -- (14/3+1/2*3^.5,2*3^.5+1/2);
\filldraw[fill=green!50,opacity=.2] (14/3,2*3^.5+1) -- (14/3,2*3^.5) -- (14/3+1/2*3^.5,2*3^.5+1/2) decorate [decoration=snake] { -- (14/3,2*3^.5+1)};

\draw[line width=0.5pt,dashed,->] (16/3,4/3*3^.5) -- (16/3+1/2*3^.5,4/3*3^.5+1/2);
\draw[line width=0.5pt,dashed,->] (16/3,4/3*3^.5) -- (16/3+1/2*3^.5,4/3*3^.5-1/2);
\filldraw[fill=cyan!50,opacity=.2] (16/3+1/2*3^.5,4/3*3^.5+1/2) -- (16/3,4/3*3^.5) -- (16/3+1/2*3^.5,4/3*3^.5-1/2) decorate [decoration=snake] { -- (16/3+1/2*3^.5,4/3*3^.5+1/2)};

\draw[line width=0.5pt,dashed,->] (14/3,2/3*3^.5) -- (14/3+1/2*3^.5,2/3*3^.5-1/2);
\draw[line width=0.5pt,dashed,->] (14/3,2/3*3^.5) -- (14/3,2/3*3^.5-1);
\filldraw[fill=blue!50,opacity=.2] (14/3+1/2*3^.5,2/3*3^.5-1/2) -- (14/3,2/3*3^.5) -- (14/3,2/3*3^.5-1) decorate [decoration=snake] { -- (14/3+1/2*3^.5,2/3*3^.5-1/2)};

\fill[black] (10/3,2/3*3^.5) circle (3pt); \node at (10/3+0.3,2/3*3^.5+0.3) {$\sigma_i$};

\end{tikzpicture}};
\end{tikzpicture}
\caption{The left panel shows a probability simplex and its normal fan, i.e. the collection of normal cones. The right panel shows a permutahedron inside the simplex and its normal fan.}\label{normalfan}
\end{center}
\vspace*{-6mm}
\end{figure}

\begin{absolutelynopagebreak}
Now consider the variational inequalities
\begin{equation}\label{Mvars}
\begin{aligned}
  \langle\pi_i(\sigma_{-i})\,|\,\sigma_i-\rho_i(\sigma_i)\rangle &\geq\,\,0\\
  \langle\pi_i(\omega_i)\,|\,\sigma_i-\rho_i(\sigma_i)\rangle &\geq\,\, 0
\end{aligned}
\end{equation}
The interpretation of the top inequality is that $\sigma$ is a Nash equilibrium when, for $i\in N$, strategies are restricted to the permutahedron generated by $\sigma_i$, see the right panel of Figure \ref{normalfan}. This permutahedron has fewer than $K_i!$ vertices when $\rho_i(\sigma_i)=\sigma_i$, i.e. when $\sigma_i$ belongs to one or more simplex diagonals. In this case, the contrapositive of \eqref{M} implies
\begin{equation}\label{diagonals}
  \rho_i(\sigma_{-i})=\sigma_{-i}\,\Rightarrow\,
  \rho_i(\pi_i(\sigma_{-i}))=\pi_i(\sigma_{-i})\,\wedge\,\rho_i(\pi_i(\omega_{i}))=\pi_i(\omega_{i})
\end{equation}
\vspace*{-9mm}
\begin{proposition}\label{prop:finite}
$\overline{\mathcal{M}}(G)$ is the closure of $\{(\sigma,\omega)\in\Sigma_{int}\times\,\Omega\,|\,\eqref{Mvars}\,\,\text{\em and}\,\,\eqref{diagonals}\,\,\text{\em hold}\,\,\forall\rho_i,\,i\in N\}$.
\end{proposition}
\end{absolutelynopagebreak}

\newappendix{Proofs}
\label{app:proofs}

\noindent\textbf{Proof of Prop 1.}
The rank conditions are equivalent to stating that strict orderings of $\pi(\sigma)$ imply strict orderings of $\sigma$ and equalities in $\sigma$ imply equalities in $\pi(\sigma)$, which is equivalent to $\pi_{ij}(\sigma)<\pi_{ik}(\sigma)\Longrightarrow\sigma_{ij}<\sigma_{ik}$. The proof for the beliefs, $\omega$, is similar. $\hspace*{5mm}\blacksquare$

\smallskip

\noindent\textbf{Proof of Prop 2.}
Existence follows from \cite{Kakutani1941} as the image of the $rank^{\mu}$ correspondence is a closed and convex set (see e.g. Figure~\ref{rank3}). $\hspace*{5mm}\blacksquare$

\vspace*{-3mm}
\begin{lemma}\label{lem:idempotent}
$rank^{\mu}\circ rank^{\mu}=rank^{\mu}$ and $rank^{\mu}\circ rank^{\mu'}=rank^{\mu}$ for $\mu\in\Sigma$, regular $\mu'\in\Sigma$.
\end{lemma}
\vspace*{-3mm}

\noindent\textbf{Proof.} Ranking twice is the same as ranking once, which implies that $rank^{\mu}\circ rank^{\mu}=rank^{\mu}$. Moreover, ranking using one set of labels (say $A$, $B$, $C$, $\ldots$) and then ranking again using other labels (say $1$, $2$, $3$, $\ldots$) is no different from ranking once using the latter labels (if the first ranking lost no information). Hence, $rank^{\mu}\circ rank^{\mu'}=rank^{\mu}$ for $\mu\in\Sigma$, regular $\mu'\in\Sigma$.$\hspace*{5mm}\blacksquare$

\smallskip

\noindent\textbf{Proof of Cor. 1.} For regular $\mu$, $rank$ applied to (\ref{OREdef}) yields $rank(\sigma)\subseteq rank(\pi(\omega))=rank(\pi(\sigma))$, i.e. any regular $\mu$-equilibrium belongs to an $M$-equilibrium. Existence of a $\mu$-equilibrium for regular $\mu$ thus implies there exists at least one $M$ equilibrium for any $G$. $\hspace*{5mm}\blacksquare$

\smallskip

\noindent\textbf{Proof of Prop 3.} Let $\sigma\in\Sigma_{int}$.
\begin{itemize}\addtolength{\itemsep}{-2mm}
\vspace*{-2mm}
\item[1$\Leftrightarrow$2] Suppose $\sigma\in E_{\mu}(G)$ for some regular $\mu$ then $\sigma\in rank^{\mu}(\pi(\sigma))$. Applying $rank$ and using Lemma~\ref{lem:idempotent} yields $rank(\sigma)\subseteq rank(\pi(\sigma))$, i.e. $\sigma$ belongs to an $M$-choice set. Conversely, if $\sigma$ is an interior element of an $M$-choice set then $rank(\sigma)\subseteq rank(\pi(\sigma))$. Applying $rank^{\sigma}$ and using Lemma~\ref{lem:idempotent} and the fact that $rank^{\sigma}(\sigma)=\sigma$ yields $\sigma\in rank^{\sigma}(\pi(\sigma))$, i.e. $\sigma$ is a $\mu$-equilibrium profile for $\mu=\sigma$.
\item[1$\Leftrightarrow$3] If $\sigma(\varepsilon)$ is a $\varepsilon$-proper equilibrium then $\pi_{ij}(\sigma(\varepsilon))<\pi_{ik}(\sigma(\varepsilon))\Longrightarrow\sigma_{ij}(\varepsilon)\,\leq\,\varepsilon\sigma_{ik}(\varepsilon)$. Since $\varepsilon\in(0,1)$ this implies $\pi_{ij}(\sigma(\varepsilon))<\pi_{ik}(\sigma(\varepsilon))\Longrightarrow\sigma_{ij}(\varepsilon)\,<\,\sigma_{ik}(\varepsilon)$, i.e. $\sigma(\varepsilon)$ is an element of an $M$-choice set. Conversely, if $\sigma$ belongs to an $M$-choice set then $\pi_{ij}(\sigma)<\pi_{ik}(\sigma)\Longrightarrow\sigma_{ij}\,<\,\sigma_{ik}$. There exists $\varepsilon<1$ such that this can be written as $\pi_{ij}(\sigma)<\pi_{ik}(\sigma)\Longrightarrow\sigma_{ij}\,\leq\,\varepsilon\,\sigma_{ik}$, i.e. $\sigma$ is an $\varepsilon$-proper equilibrium.
\item[1$\Leftrightarrow$4] If $R\in\mathscr{R}$ denotes the quantal response function then $\pi_{ij}<\pi_{ik}\Longrightarrow R_{ij}(\pi_i)<R_{ik}(\pi_i)$, see Definition 2.1 in \cite{GoereeHoltPalfrey2016}, i.e. any QRE profile belongs to an $M$-choice set. To show the converse, we restrict attention to games with the property that for each player $i$, the set of strategy profiles $\sigma_{-i}$ that make $i$ indifferent between two strategies has measure zero in $\Sigma_{-i}$. This is a generic class of games (open and dense in the space of all finite normal-form games). The proof follows by constructing for each element, $\sigma$, of an $M$-choice set a quantal response function $R\in\mathscr{R}$ such that $\sigma$ is a QRE for $R$, see \cite{Goeree2019}. We include a shortened version here for completeness. Without loss of generality, relabel $i$'s actions so that $j<k$ implies $\pi_{ij}(\sigma)\leq\pi _{ik}(\sigma)$ and $\sigma _{ij}\leq\sigma _{ik}$, where payoff equality holds iff choice probabilities are equal (except possibly for a measure zero set). Construct a strictly positive, continuous, strictly increasing function, $h_{i}:\field{R}\rightarrow\field{R}_+$ as follows. For $1\leq j\leq K_{i}$ let $h_{i}(\pi_{ij}(\sigma))=\sigma _{ij}$. This pins down the value of $h_i$ at $K_{i}$ ordered points (or fewer if there are payoff indifferences at $\sigma$) and these values are increasing. Next let
    \begin{displaymath}
    h_{i}(x)\,=\,\left\{
    \begin{array}{ccc}
    \frac{\sigma _{i1}}{1+\pi _{i1}(\sigma)-x} & \mbox{for} & x\,\leq\,\pi _{i1}(\sigma)\\[2mm]
    \frac{\sigma _{ij+1}(x-\pi _{ij}(\sigma))+\sigma_{ij}(\pi _{ij+1}(\sigma)-x)}{\pi _{ij+1}(\sigma)-\pi _{ij}(\sigma)} & \mbox{for} & \pi _{ij}(\sigma)\,\leq\,x\,\leq\,\pi _{ij+1}(\sigma),\,\,\,\,j\,=\,1,\ldots,K_i-1\\[1mm]
    \sigma _{iK_{i}}+x-\pi _{iK_{i}}(\sigma) & \mbox{for} & x\geq\pi _{iK_{i}}(\sigma)
    \end{array}\right.
    \end{displaymath}
    which extends $h_i$ to the real line such that it is strictly increasing and strictly positive everywhere. Now, for each $i\in N$ and $1\leq j\leq K_i$, define  $R_i:\field{R}^{K_i}\rightarrow\Sigma_i$ as follows
    \begin{displaymath}
    R_{ij}(\pi _{i})\,=\,\frac{h_{i}(\pi _{ij})}{\sum_{k\,=\,1}^{K_{i}}h_{i}(\pi _{ik})}
    \end{displaymath}
    $R_{i}$ satisfies Definition 2.1 in \cite{GoereeHoltPalfrey2016} and, hence, is a quantal response function. By construction we have $\sigma_{ij}=R_{ij}(\pi _{i}(\sigma))$, i.e. $\sigma$ is a QRE for $R=(R_{1},\ldots,R_{n})$. $\hspace*{5mm}\blacksquare$
\end{itemize}

\smallskip

\noindent\textbf{Proof of Prop 4.}
This follows since, for all $i\in N$ and $1\leq k\leq|S_i|$, $\varepsilon\sigma_{ik}\leq\min(\varepsilon,\sigma_{ik})\leq\varepsilon$ for $\varepsilon\in(0,1)$ and $\sigma\in\Sigma_{int}$. The examples in the main text show the inclusions can be strict. $\hspace*{5mm}\blacksquare$

\smallskip

\noindent\textbf{Proof of Prop \ref{gen1}.}
Let $\sigma^u$ denote the profile where all players randomize uniformly over their available strategies and let $\mathcal{O}\subset\Gamma$ denote the set of normal-form games with non-thick indifference curves\footnote{I.e. the set of profiles $\sigma_{-i}$ that make player $i$ indifferent between two choices has measure zero in $\Sigma_{-i}$.} such that $\pi_{ij}(\sigma^u)\neq\pi_{ik}(\sigma^u)$ for all $i\in N$ and $1\leq j<k\leq K_i$. By continuity of expected payoffs there exists an open ball around $\sigma^u$ such that expected payoffs are strictly ranked. Since all possible rankings of choice profiles occur arbitrarily close to $\sigma^u$, there exists a full-dimensional $M$-equilibrium for one of the possible rankings. Profiles in the interior of this full-dimensional set are strictly ranked as are the associated expected payoffs, i.e. they are colorable. Since expected payoffs are continuous in the payoff numbers, they will be ranked the same for games that are sufficiently close. Hence, the profiles are robust. This establishes (i). Property (ii) follows since each $\Sigma_i$ can be partitioned into $K_i!$ equally-sized subsets indexed by the ranks of the entries of the choice profiles it contains. Since $rank_i(\sigma)$ must be constant on an $M$-equilibrium choice set for all $i\in N$, the $M$-equilibrium choice set must be contained in the Cartesian product of a single such subset for each player. Hence, its size cannot be larger than $\prod_{i=1}^n 1/K_i!$. To show Property (iii), pick any player $i\in N$. For almost all $\sigma_{-i}\in\Sigma_{-i}$, player $i$ has a strict ranking of expected payoffs associated with the $K_i$ possible choices. The measure of the set of $\sigma_i\in\Sigma_i$ consistent with this strict ranking is bounded by $1/K_i!$. Therefore, also the sum of the products of all such sets of $\sigma_i\in\Sigma_i$ and the corresponding set of all $\sigma_{-i}\in\Sigma_{-i}$ is bounded by $1/K_i!$. Property (iv) holds, for instance, for any game in which all players have a dominant strategy. $\hspace*{5mm}\blacksquare$

\smallskip

\noindent\textbf{Proof of Prop \ref{prop:finite}.}
For choices the conditions are equivalent to $(\sigma_{ik}-\sigma_{ij})(\pi_{ik}(\sigma)-\pi_{ij}(\sigma))\geq 0$
and $\sigma_{ik}=\sigma_{ij}\Rightarrow\pi_{ik}(\sigma)=\pi_{ij}(\sigma)$ for $i\in N$ and $1\leq j,k\leq K_i$. These imply $\pi_{ik}(\sigma)>\pi_{ij}(\sigma)\Rightarrow\sigma_{ik}>\sigma_{ij}$ and $\sigma_{ik}=\sigma_{ij}\Rightarrow\pi_{ik}(\sigma)=\pi_{ij}(\sigma)$. This is Definition \ref{Mdef} restricted to choices. Conversely, Definition \ref{Mdef} implies $(\sigma_{ik}-\sigma_{ij})(\pi_{ik}(\sigma)-\pi_{ij}(\sigma))\geq 0$ and $\pi_{ij}(\sigma)=\pi_{ik}(\sigma)$ if $\sigma_{ij}=\sigma_{ik}$. This is Proposition \ref{prop:finite} restricted to choices. The proof for the beliefs, $\omega$, is similar. $\hspace*{5mm}\blacksquare$

\newappendix{Belief Elicitation}

In the method we use to elicit beliefs, subjects submit the ``percentage chance'' with which they believe their opponent chooses each action by moving a single slider between endpoints labeled $A$ and $B$. Any point on the slider corresponds to a unique chance of $A$ being played (and $B$ with complementary chance) with the $A$ endpoint ($B$ endpoint) corresponding to the belief that the opponent chooses $A$ ($B$) for sure. The point chosen by the subject is then compared with two computer-generated random points on the slider. If the chosen point is closer to the opponent's actual choice (one of the endpoints) than at least one of the two randomly-drawn points then the subject receives a fixed prize. The next proposition generalizes this method so that it can be used for general normal-form games.

\setlength{\abovedisplayskip}{3pt}
\setlength{\belowdisplayskip}{3pt}
\begin{proposition} \label{incentivecompatible}
Consider the elicitation mechanism where player $i\in N$ uses $S_i=\sum_{j\neq i}K_j$ sliders, labeled $S_{jk}$ for $j\neq i$ and $k=1,\ldots,K_j$, to report her beliefs, $q_{jk}$, about player $j$'s choice. For each slider, two (uniform) random numbers are drawn and player $i$'s belief for that slider is ``correct'' if the reported belief is closer to the actual outcome (0 or 1) than at least one of the random draws. If players are risk-neutral then the elicitation mechanism is incentive compatible when a prize, $P\geq 0$, is paid for all correct beliefs for any randomly selected subset of sliders. If players are not risk neutral then the elicitation mechanism is incentive compatible if a prize is paid when the stated belief is correct for a single randomly selected slider.
\end{proposition}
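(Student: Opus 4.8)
The plan is to reduce the entire mechanism to a single well-understood building block --- the two-draw comparison applied to one slider --- and then show that, under each payment rule, the player's objective is a positive-weighted sum of these blocks whose maximizer is the truthful report. First I would analyze one slider $S_{jk}$ in isolation. Writing $X_{jk}\in\{0,1\}$ for the realized indicator of whether opponent $j$ actually plays action $k$, and letting $p_{jk}$ denote player $i$'s subjective probability that $X_{jk}=1$, I would condition on $X_{jk}$ and use that the two comparison draws are independent uniforms. For $X_{jk}=1$ the report $q_{jk}$ beats a given draw with probability $q_{jk}$, so it beats at least one of two draws with probability $1-(1-q_{jk})^2$; for $X_{jk}=0$ this probability is $1-q_{jk}^2$. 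Taking the expectation over $X_{jk}$ gives the chance that slider $S_{jk}$ is scored correct,
\[
\Pr(\text{correct}_{jk})\,=\,1-p_{jk}(1-q_{jk})^2-(1-p_{jk})q_{jk}^2\,=\,1-\mathbb{E}\bigl[(q_{jk}-X_{jk})^2\bigr],
\]
i.e. exactly the (expected) quadratic/Brier score. This is the crux: the ``closer than at least one of two draws'' rule implements the quadratic scoring rule as a \emph{win probability}, which is strictly proper and hence maximized at $q_{jk}=p_{jk}$.

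Second I would handle the coupling induced by the simplex constraints $\sum_{k}q_{jk}=1$. For a fixed opponent $j$ the relevant objective is $\sum_{k}w_{jk}\,\Pr(\text{correct}_{jk})$ for nonnegative weights $w_{jk}$, maximized over the simplex. Because each summand is strictly concave in $q_{jk}$ with gradient $2w_{jk}(p_{jk}-q_{jk})$, the Lagrange/KKT stationarity conditions read $w_{jk}(p_{jk}-q_{jk})=\lambda/2$; summing over $k$ and using $\sum_k q_{jk}=\sum_k p_{jk}=1$ forces $\lambda=0$ whenever all $w_{jk}>0$, so the constrained optimum coincides with the unconstrained truthful report $q_{jk}=p_{jk}$ (and non-negativity never binds since $p_{jk}\ge 0$). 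The key structural point, which I would emphasize, is that the truthful belief vector \emph{already lies on the simplex}, so the shared constraint does not bite and cannot be exploited to gain on one slider by misreporting on another --- provided every slider carries strictly positive weight.

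Third, for the risk-neutral statement I would pay $P$ per correct belief and invoke linearity of expectation: the expected total payment equals $P\sum_{j,k}\Pr(S_{jk}\in\text{subset})\,\Pr(\text{correct}_{jk})$, and under risk neutrality the player maximizes an increasing affine function of this quantity. Here the role of selecting the paid subset \emph{at random} (before the report, with every slider included with positive probability) is exactly to guarantee $w_{jk}=P\,\Pr(S_{jk}\in\text{subset})>0$ for all sliders; the objective then decomposes across opponents and Step~2 delivers truthful reporting as the unique maximizer. For the non-risk-neutral statement the difficulty is that a per-correct prize makes total payment a non-binary lottery, so expected \emph{utility} need no longer be monotone in the collection of correctness probabilities. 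This is resolved by the binarization idea: paying for a \emph{single} uniformly selected slider makes the payoff all-or-nothing, so expected utility equals $u(P)\Pr(\text{win})+u(0)\bigl(1-\Pr(\text{win})\bigr)$, which is affine and increasing in $\Pr(\text{win})$ whenever $u(P)>u(0)$, \emph{regardless of the curvature of $u$}. Since $\Pr(\text{win})=\tfrac{1}{S_i}\sum_{j,k}\Pr(\text{correct}_{jk})$ is again a positive-weighted sum of Brier scores, Step~2 yields truthfulness for arbitrary monotone risk attitudes.

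The main obstacle, and the reason two distinct payment rules appear, is precisely the tension between the shared simplex constraint and robustness to risk preferences: to neutralize risk attitudes one must keep the prize binary, but a binary ``all correct'' prize would replace the \emph{sum} of Brier scores by a \emph{product} of (generally dependent) correctness events, for which truthful reporting is no longer optimal. Selecting a single slider at random is the device that keeps the prize binary while keeping the induced objective a sum, and it is this combination --- together with the $\lambda=0$ argument that makes truth-telling respect the simplex automatically --- that I expect to require the most care to state cleanly.
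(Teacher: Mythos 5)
Your proposal is correct and follows essentially the same route as the paper's proof: compute the per-slider win probability $P_{jk}=p_{jk}(1-(1-q_{jk})^2)+(1-p_{jk})(1-q_{jk}^2)$, recognize it as an affine transform of the quadratic (Brier) score, and then use linearity of expectation in the risk-neutral case and the all-or-nothing binarization (a single randomly selected slider) to neutralize risk attitudes. The one place you go beyond the paper is the explicit KKT/Lagrange verification that the simplex constraint $\sum_k q_{jk}=1$ does not bind at the truthful report; the paper simply computes the unconstrained gradient $\propto(p-q)$ and asserts optimality, so your added step is a welcome tightening rather than a different argument.
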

\addtolength{\baselineskip}{-0.0mm}
\noindent\textbf{Proof of Prop \ref{incentivecompatible}.}
Let $u_i(x)$ denote player $i$'s utility of being paid a prize amount $x$ (with $u_i(0)=0$) and let $p$ and $q$ denote the concatenations of player $i$'s true and reported beliefs respectively. Player $i$ wins a prize $P$ for slider $S_{jk}$ with chance
\begin{displaymath}
  P_{jk}\,=\,p_{jk}(1-(1-q_{jk})^2)+(1-p_{jk})(1-q_{jk}^2)
\end{displaymath}
and gets 0 with complementary probability. If all correct beliefs for a random subset $S\subseteq S_i$ of sliders pay a prize $P$ then player $i$'s expected utility of reporting $q$ when her true beliefs are $p$ is
\begin{equation}\label{expUtil}
U_i(p,q)\,=\,\sum_{W\,\subseteq\,S}u_i(P|W|)\Bigl(\!\!\begin{array}{c}|S|\\|W|\end{array}\!\!\Bigr)\,\prod_{S_{jk}\,\in\,W}\!P_{jk}
\,\prod_{S_{jk}\,\not\in\,W}\!1-P_{jk}
\end{equation}
where $W$ is the subset of selected sliders for which player $i$ wins the prize $P$. If player $i$ is risk neutral, i.e. $u_i(x)=x$, then this reduces to the expected number of wins
\begin{displaymath}
U_i(p,q)\,=\,\frac{P\,|S|}{|S_i|}\,\sum_{S_{jk}\,\in\,S_i}P_{jk}\nonumber
\end{displaymath}
and optimizing with respect to $q$ yields
\begin{displaymath}
\nabla_qU_i(p,q)\,=\,2\,\frac{P\,|S|}{|S_i|}\,(p-q)
\end{displaymath}
so truthful reporting is optimal. If player $i$ is not risk neutral then there can only be two possible payoff outcomes, 0 and $P$, for the elicitation mechanism to be incentive compatible. In other words, $|S|=1$ and (\ref{expUtil}) reduces to
\begin{displaymath}
U_i(p,q)\,=\,\frac{u_i(P)}{|S_i|}\,\sum_{S_{jk}\,\in\,S_i}P_{jk}
\end{displaymath}
and truthful reporting is again optimal. $\hspace*{5mm}\blacksquare$

\newappendix{Empirical Support}

\noindent\textbf{Support for Result 1}
\begin{itemize}\addtolength{\itemsep}{-2mm}
\item[(i)] As can be seen in the top left graph of figure \ref{expdataAMP}, there is very little overlap between the confidence intervals for the average choice frequncies in the five games. Formal statistical testing confirms this. The left panel of table \ref{pvals_only_red} reports the p-values from a Hotteling's t-test in pairwise comparisons of the average choice frequency in different games. In most cases, pairwise comparisons show differences that are significant at the 5\% level. The null hypothesis that average choice frequencies in all games are jointly equal can be rejected at the 1\% level.\footnote{For the joint test we use the Bonferroni correction. The null of simultaneous equality is rejected at a level of significance $\alpha$ by comparing the lowest p-value in all $m$ pairwise comparisons to $\frac{\alpha}{m}$ and rejecting whenever it is lower. This ensures that the family-wise error rate (FWER), which is the probability that at least one of the averages is not equal to the others, remains at the desired level $\alpha$. This is the most conservative test based on the FWER, in the sense that it rejects the null less often.} This remains true even when excluding game 4 from the test, as it is the single most different game.
\item[(ii)] Again, one can see in the top right graph of figure \ref{expdataAMP} that there is little overlap of the confidence intervals for average beliefs for the five games. The results of Hotteling's t-tests for the comparison of means confirms what is observed in the graph. The right panel of table \ref{pvals_only_red} reports the p-values from the test. Average beliefs are statistically different across games for any reasonable significance level.
\item[(iii)] A careful inpsection of the two graphs in figure \ref{expdataAMP} reveals that for all five games, average choice frequencies lie far from the corresponding average reported beliefs. Again, using Hotelling's t-test we confirm this finding. The p-values from comparing actions to beliefs in each of the games are very close to zero in all five cases.
\item[(iv)] Figure \ref{only_red_cdfs} shows the empirical cumulative density functions (cdf's) for the average choice by Row and Column in each game. If actions were homogeneous, then these should look like the cdf  for the observed success rate in a binomial distribution with 8 trials and a true success rate in each trial equal to the average choice across all individuals. We use a Monte Carlo simulation with 1000 repetitions to produce this cdf and compare the two. Using a Kolmogorov-Smirnov test we can reject that the two are the same for all cases at the 1\% level for all games except game 2. For game 2 we can reject at the 5\% level for Row (\emph{p-val = .031}) but not for Column (\emph{p-val = .117}). This result is further supported by a Cochran's Q test comparing individual choices in each game. The test rejects those being the same for both Row and Column in all games for any level of significance.
\item[(v)] We use Friedman's test to compare reported beliefs across individuals in each game, separately for Row and Column. In all cases we find there is heterogeneity that is significant at any reasonable significance level.
\end{itemize}

\begin{table}[t]
\begin{center}
\begin{tabular}{ccccc}
\multicolumn{1}{r}{\textbf{actions}}& \multicolumn{1}{|c}{$2$} & \multicolumn{1}{c}{$3$}  & \multicolumn{1}{c}{$4$}  & \multicolumn{1}{c}{$5$}  \\ \cline{1-5}
\multicolumn{1}{r}{\rule{0pt}{4mm}$1$} & \multicolumn{1}{|c}{.2334} & \multicolumn{1}{c}{.0011}  & \multicolumn{1}{c}{.0045} & \multicolumn{1}{c}{.0863}\\
\multicolumn{1}{r}{\rule{0pt}{4mm}$2$} & \multicolumn{1}{|c}{} & \multicolumn{1}{c}{.1374}  & \multicolumn{1}{c}{.0004} & \multicolumn{1}{c}{.0052}\\
\multicolumn{1}{r}{\rule{0pt}{4mm}$3$} & \multicolumn{1}{|c}{} & \multicolumn{1}{c}{}  & \multicolumn{1}{c}{.0000} & \multicolumn{1}{c}{.0000}\\
\multicolumn{1}{r}{\rule{0pt}{4mm}$4$} & \multicolumn{1}{|c}{} & \multicolumn{1}{c}{}  & \multicolumn{1}{c}{} & \multicolumn{1}{c}{.0491}\\
\end{tabular}
\hspace*{1cm}
\begin{tabular}{ccccc}
\multicolumn{1}{r}{\textbf{beliefs}}& \multicolumn{1}{|c}{$2$} & \multicolumn{1}{c}{$3$}  & \multicolumn{1}{c}{$4$}  & \multicolumn{1}{c}{$5$}  \\ \cline{1-5}
\multicolumn{1}{r}{\rule{0pt}{4mm}$1$} & \multicolumn{1}{|c}{.0000} & \multicolumn{1}{c}{.0000}  & \multicolumn{1}{c}{.0045} & \multicolumn{1}{c}{.6374}\\
\multicolumn{1}{r}{\rule{0pt}{4mm}$2$} & \multicolumn{1}{|c}{} & \multicolumn{1}{c}{.0489}  & \multicolumn{1}{c}{.0974} & \multicolumn{1}{c}{.0000}\\
\multicolumn{1}{r}{\rule{0pt}{4mm}$3$} & \multicolumn{1}{|c}{} & \multicolumn{1}{c}{}  & \multicolumn{1}{c}{.0001} & \multicolumn{1}{c}{.0000}\\
\multicolumn{1}{r}{\rule{0pt}{4mm}$4$} & \multicolumn{1}{|c}{} & \multicolumn{1}{c}{}  & \multicolumn{1}{c}{} & \multicolumn{1}{c}{.0002}\\
\end{tabular}
\vspace*{0mm}\\
\caption{Experiment 1: p-values from pairwise comparisons between games.}\label{pvals_only_red}
\end{center}
\vspace*{-6mm}
\end{table}

\begin{figure}[h]
\begin{center}
\includegraphics[scale=0.4]{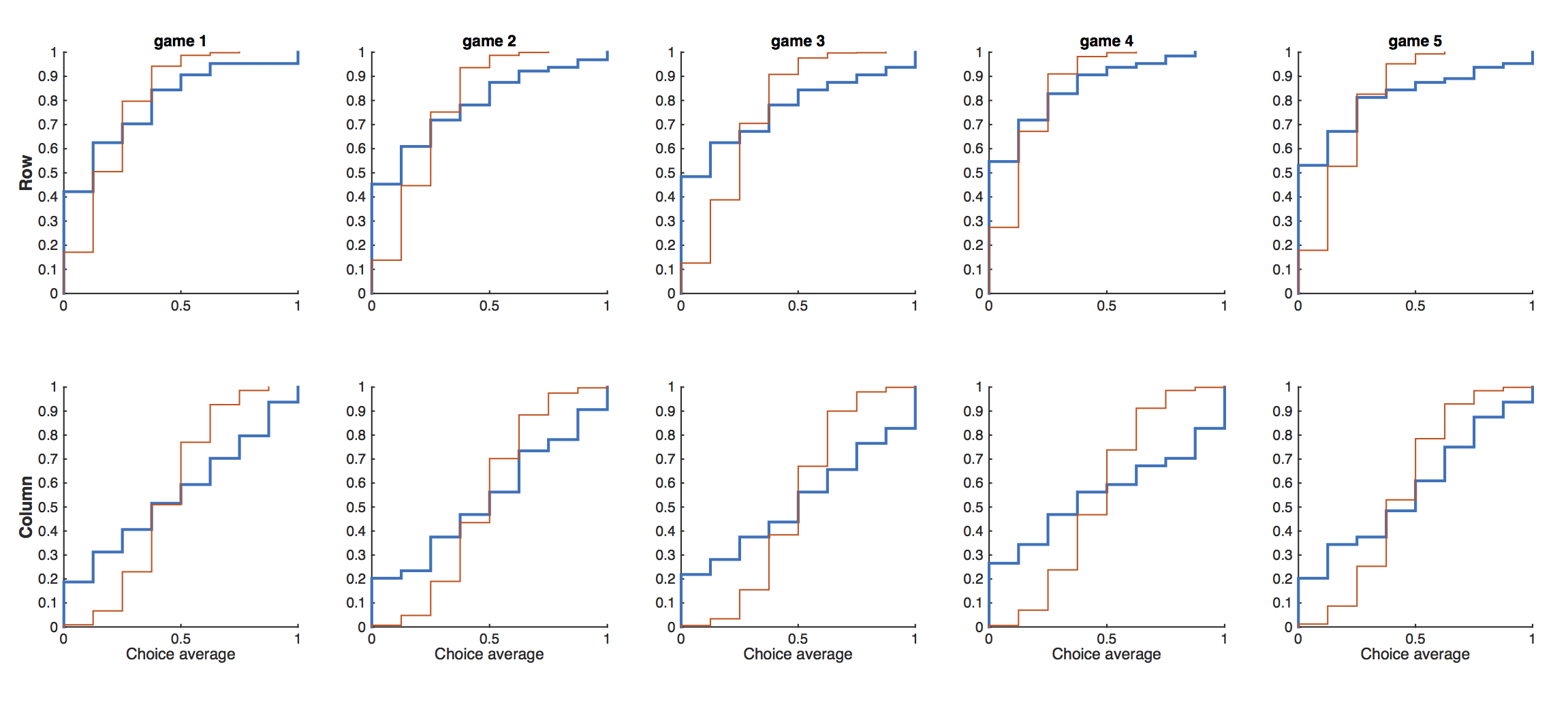}
\vspace*{-3mm}
\caption{Heterogeneity of actions in Experiment 1. Empirical cdf's for subjects' actions (blue) and cdf's of the observed success rate in MC simulations of binomial distributions with 8 trials and true success rate equal to subjects' overall average choice (orange).}\label{only_red_cdfs}
\end{center}
\vspace*{-8mm}
\end{figure}

\medskip

\newappendix{Cluster Analysis with the $k$-Means Algorithm.}

As mentioned in the main text, we classify our data of elicited beliefs into different clusters. This is done in a theory-free way using the $k$-means algorithm (MacQueen, 1967).
Given $k$, the number of clusters one wants to use, $k$ random `centroids' are chosen, which are points in the same space as that of the data. In the next step, each observation is matched to its closest centroid (in terms of Euclidean distance). Next, new centroids are calculated by taking the mean of the observations in each cluster. The algorithm proceeds iteratively until it converges to a stable set of clusters.

While convergence is guaranteed, the resulting clusters will depend on the random initialisation of the algorithm. To address the issue we take the standard approach which is to repeat the analysis for a large number of times (5000 in our case) and choose the outcome with the smallest sum of errors (distance of each observation from its cluster centroid).

The final issue to address is the number of clusters, $k$. We approach the problem using what is dubbed the `elbow method'. We run the analysis for $k\in\{2,...,15\}$ and calculate the sum of errors in each case. This is then plotted against $k$, giving a convex, decreasing curve. For each game we choose the $k$ that is at the `elbow' of the curve. Figure~\ref{kmeans} shows the curves for the case of the two DS games. In the analysis we use $k=7$ for these games.

\begin{figure}[h]
\begin{center}
\vspace*{-2mm}
\includegraphics[scale=0.56]{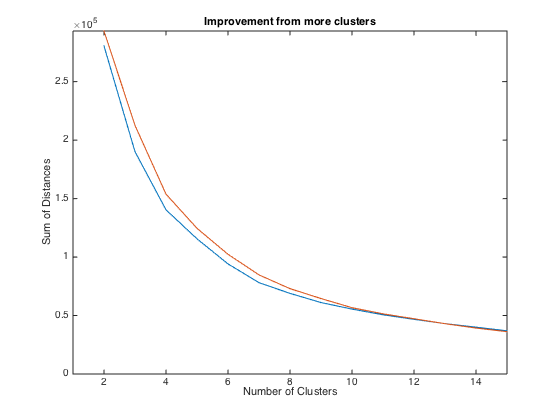}
\vspace*{-3mm}
\caption{K means in the DS games. The curves show the sum of distances for different values of $k$ for $DS1$ (blue line) and $DS2$ (red line).}\label{kmeans}
\end{center}
\vspace*{-5mm}
\end{figure}

\newpage
\newappendix{Derivation of $\mu$ Equilibria}

In this appendix, we provide explicit formulae for the $\mu$ equilibria of the $3\times 3$ games in Table \ref{3by3exp} using $\mu_i(\varepsilon)=(1,\varepsilon,\varepsilon^2,\ldots,\varepsilon^{K_i})$ (appropriately normalized) for $\varepsilon\in[0,1]$ and $i\in N$.

For the DS1 and DS2 games, the $\mu$ equilibrium is given by
\begin{displaymath}
  \sigma(\varepsilon)\,=\,\frac{1}{(1+\varepsilon+\varepsilon^2)}\left\{\begin{array}{ccc}
  (\varepsilon,1,\varepsilon^2) & \text{if} & 0.146\,\leq\,\varepsilon\,\leq\,1 \\[1mm]
  (\deel{1}{8}(1+\varepsilon+\varepsilon^2),1,\deel{1}{8}(-1+7\varepsilon+7\varepsilon^2) & \text{if} & 0.146\,\leq\,\varepsilon\,\leq\,0.456\\[1mm]
  (\varepsilon^2,1,\varepsilon) & \text{if} & 0.333\,\leq\,\varepsilon\,\leq\,0.456 \\[1mm]
  (\varepsilon^2,2+2\varepsilon-15\varepsilon^2,-1-\varepsilon+15\varepsilon^2) & \text{if} & 0.333\,\leq\,\varepsilon\,\leq\,0.4 \\[1mm]
  (\varepsilon^2,\varepsilon,1) & \text{if} & 0.354\,\leq\,\varepsilon\,\leq\,0.4\\[1mm]
  (\deel{1}{8},-\deel{1}{8}+\varepsilon+\varepsilon^2,1) & \text{if} & 0.125\,\leq\,\varepsilon\,\leq\,0.354 \\[1mm]
  (\varepsilon,\varepsilon^2,1) & \text{if} & 0\,\leq\,\varepsilon\,\leq\,0.125
  \end{array}\right.
\end{displaymath}
see the colored curve in the upper-left panel of Figure \ref{epsEqgraphs}, which also shows the greyed out $M$ sets. The bottom-left panel shows the logit-QRE curve in black. The latter is unique for every value of $\lambda$. Note, however, that multiple $\mu$ equilibria may exist for certain ranges of $\varepsilon$.

For the ``no logit'' game, the $\mu$ equilibrium is given by
\begin{displaymath}
  \sigma(\varepsilon)\,=\,\frac{1}{(1+\varepsilon+\varepsilon^2)}\left\{\begin{array}{ccc}
  (\varepsilon,1,\varepsilon^2) & \text{if} & 0.053\,\leq\,\varepsilon\,\leq\,0.947 \\[1mm]
  (\deel{1}{40}+\hf\varepsilon+\hf\varepsilon^2,1,-\deel{1}{40}+\hf\varepsilon+\hf\varepsilon^2) & \text{if} & 0.053\,\leq\,\varepsilon\,\leq\,0.947\\[1mm]
  (\varepsilon^2,1,\varepsilon) & \text{if} & 0.252\,\leq\,\varepsilon\,\leq\,1 \\[1mm]
  (\varepsilon^2,\deel{1}{3}(10+10\varepsilon-150\varepsilon^2),\deel{1}{3}(-7-7\varepsilon+150\varepsilon^2)) & \text{if} & 0.252\,\leq\,\varepsilon\,\leq\,0.283 \\[1mm]
  (\varepsilon^2,\varepsilon,1) & \text{if} & 0.178\,\leq\,\varepsilon\,\leq\,0.283\\[1mm]
  (\deel{1}{45}(1+2\varepsilon+2\varepsilon^2),\deel{1}{45}(-1+43\varepsilon+43\varepsilon^2),1) & \text{if} & 0.024\,\leq\,\varepsilon\,\leq\,0.178 \\[1mm]
  (\varepsilon,\varepsilon^2,1) & \text{if} & 0\,\leq\,\varepsilon\,\leq\,0.024
  \end{array}\right.
\end{displaymath}
see the colored curve in the upper-middle panel of Figure \ref{epsEqgraphs}. The colored curve does enter the lower $M$-choice set unlike the black logit-QRE curve shown in the bottom-middle panel.

Finally for the KM game, the $\mu$ equilibrium is given by
\begin{displaymath}
  \sigma(\varepsilon)\,=\,\frac{(1,\varepsilon^2,\varepsilon)}{(1+\varepsilon+\varepsilon^2)}
\end{displaymath}
for $\varepsilon\in[0,1]$, see the orange curve in the upper-right panel of Figure \ref{epsEqgraphs}, which is similar to the black logit-QRE curve shown in the bottom-right panel.

Except for the KM game, the above formulae may appear somewhat involved. However, they are relatively easy to determine. First, we check for ``pure'' strategies on each of the six parts of the barycentric division of the simplex, see the middle-left panel of Figure \ref{mondrianGame}. For instance, for the part with $\sigma_R>\sigma_B>\sigma_Y$, we check if the expected payoffs based on the belief $\omega=(1,\varepsilon,\varepsilon^2)/(1+\varepsilon+\varepsilon^2)$ are ranked $\pi_R>\pi_B>\pi_Y$. For the DS and ``no logit'' games, this yields the first, third, fifth, and seventh cases in the formulae above. The remaining (second, fourth, and sixth) cases simply follow by connecting the pure strategy cases across the different parts of the barycentric division using the payoff indifference curves, see the top panels of Figure \ref{epsEqgraphs}.

\newpage
\newappendix{Instructions for the Experiment}

\emph{The following is the translation of the instructions used in the experiment. The original text in Greek is available from the authors upon request. }

Thank you for participating in today’s session. Please remain quiet. The entire experiment will be conducted through your computer terminal and you will only interact with other participants via the computer. Please do not talk or make other noises during the experiment. The use of mobile phones and similar devices is not allowed.
\paragraph{General instructions:} During the experiment you can earn points. At the end of the experiment you will receive 1 euro for every 250 points you earned. The amount earned may be different for each participant and depends on his/her decisions, the decisions of other participants and luck. In addition, you will receive 5 euros as a show-up fee. Payment in cash will take place privately, immediately after the experiment.
\paragraph{The games:} The experiment consists of a series of different games. You will play each game for 15 or 8 periods (this will be announced beforehand). In each period you will be randomly matched with another participant. The participant you are matched with will be different in each period. On your screen you will see a table like the one below:
\begin{center}
\includegraphics[scale=.8]{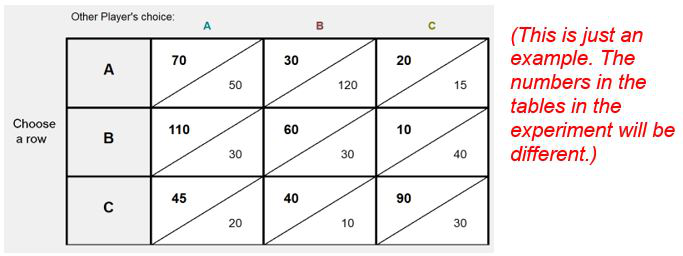}
\end{center}
Each player selects A, B or C by clicking the corresponding row. Each player wins the points indicated in the cell that corresponds to these choices. In the example above, if you choose C and the other player chooses B, you win 40 points and the other player wins 10 points.

Notice that the other player, exactly like yourself, sees the table with the points he/she may win depending on both of your choices, as well as the points you may win.
\paragraph{Belief task:} In each period you are also asked to state your beliefs about the percentage chance that the other player chooses A, B or C.

On the screen, above the game you will see three bars like in the picture below. You can indicate what you believe to be the percentage chance that the other player chooses A, B and C by clicking and dragging the corresponding bar. The sum must always be equal to 100\%.
\begin{center}
\includegraphics[scale=.73]{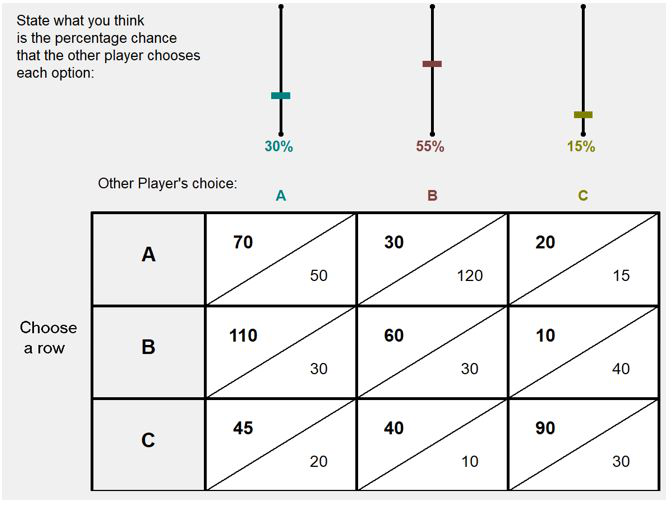}
\end{center}
Each time, the computer chooses randomly one of the three bars and on that bar it chooses randomly two points. If your stated belief in that bar is closer to the other player's actual choice than at least one of these two points, then you may win 60 points.
\begin{center}
\includegraphics[scale=.73]{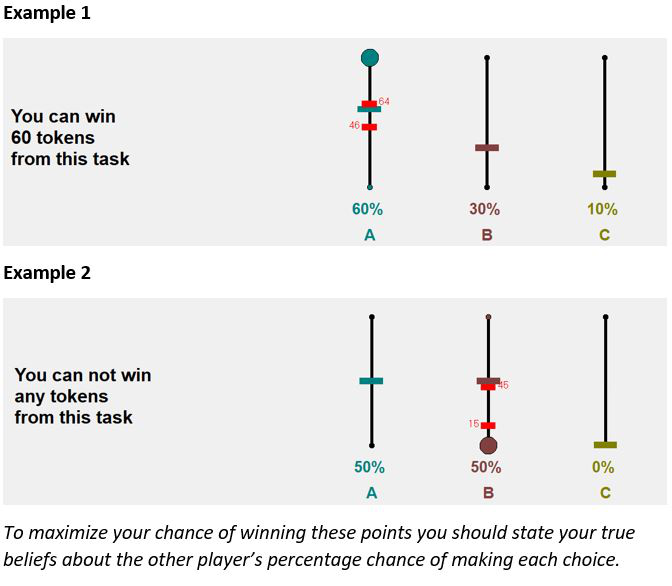}
\end{center}
\textbf{Notice:}
\begin{itemize}
\item You may change your choice as many times as you like (both for the game and the beliefs) When you are ready, press ``Submit.''
\item For the first period of the game you have to wait 30 seconds before you are able to submit your choices. Use that time to familiarize yourself with the new game table.
\item Please submit your choices within a reasonable time period. All participants will need to submit their choices before the experiment can move on to the next period.
\item After submitting your choices, you will be informed about the other player's choice and your possible point earnings from the game. You will also be informed about the bar and points selected by the computer and whether you may earn points from the belief task.
\item To determine your earnings in each period, the computer will randomly choose either the game or the belief task (both with equal probability) and you will earn the points you were awarded in the corresponding task.
\end{itemize}
\begin{center}
\includegraphics[scale=.73]{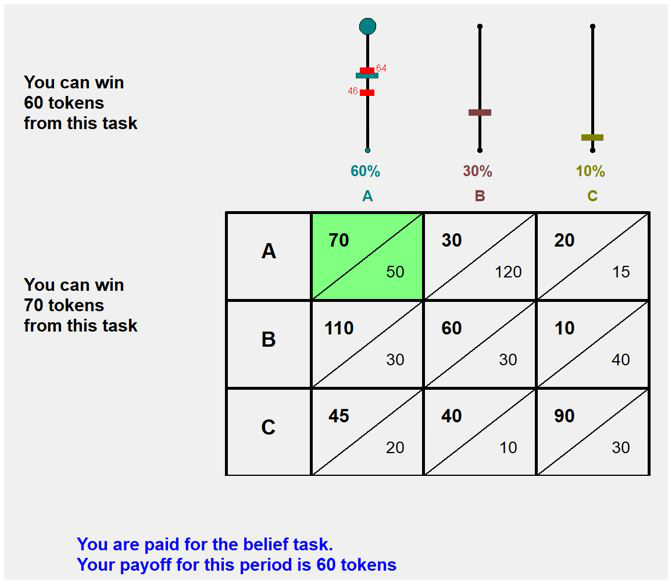}
\end{center}
\end{document}